\documentclass[11pt]{article}
\usepackage[left=1in, right=1in, top=1in, bottom=1in]{geometry}
\usepackage{CJK}
\usepackage{latexsym,bm}
\usepackage{xypic}
\usepackage{amsmath}
\usepackage{wasysym}
\usepackage{indentfirst}
\usepackage{amssymb}
\usepackage{dsfont}
\usepackage{amsthm}
\begin{document}
\newcommand{\fr}[2]{\frac{\;#1\;}{\;#2\;}}
\newtheorem{theorem}{Theorem}[section]
\newtheorem{lemma}{Lemma}[section]
\newtheorem{proposition}{Proposition}[section]
\newtheorem{corollary}{Corollary}[section]
\newtheorem{conjecture}{Conjecture}[section]
\newtheorem{remark}{Remark}[section]
\newtheorem{definition}{Definition}[section]
\newtheorem{example}{Example}[section]
\newtheorem{notation}{Notation}[section]
\numberwithin{equation}{section}
\newcommand{\Aut}{\mathrm{Aut}\,}
\newcommand{\CSupp}{\mathrm{CSupp}\,}
\newcommand{\Supp}{\mathrm{Supp}\,}
\newcommand{\rank}{\mathrm{rank}\,}
\newcommand{\col}{\mathrm{col}\,}
\newcommand{\len}{\mathrm{len}\,}
\newcommand{\leftlen}{\mathrm{leftlen}\,}
\newcommand{\rightlen}{\mathrm{rightlen}\,}
\newcommand{\length}{\mathrm{length}\,}
\newcommand{\bin}{\mathrm{bin}\,}
\newcommand{\wt}{\mathrm{wt}\,}
\newcommand{\Wt}{\mathrm{Wt}\,}
\newcommand{\diff}{\mathrm{diff}\,}
\newcommand{\lcm}{\mathrm{lcm}\,}
\newcommand{\GL}{\mathrm{GL}\,}
\newcommand{\SJ}{\mathrm{SJ}\,}
\newcommand{\LG}{\mathrm{LG}\,}
\newcommand{\bij}{\mathrm{bij}\,}
\newcommand{\dom}{\mathrm{dom}\,}
\newcommand{\fun}{\mathrm{fun}\,}
\newcommand{\SUPP}{\mathrm{SUPP}\,}
\newcommand{\supp}{\mathrm{supp}\,}
\newcommand{\End}{\mathrm{End}\,}
\newcommand{\Hom}{\mathrm{Hom}\,}
\newcommand{\ran}{\mathrm{ran}\,}
\newcommand{\row}{\mathrm{row}\,}
\newcommand{\Mat}{\mathrm{Mat}\,}
\newcommand{\rk}{\mathrm{rk}\,}
\newcommand{\rs}{\mathrm{rs}\,}
\newcommand{\piv}{\mathrm{piv}\,}
\newcommand{\perm}{\mathrm{perm}\,}
\newcommand{\rsupp}{\mathrm{rsupp}\,}
\newcommand{\inv}{\mathrm{inv}\,}
\newcommand{\orb}{\mathrm{orb}\,}
\newcommand{\id}{\mathrm{id}\,}
\newcommand{\soc}{\mathrm{soc}\,}
\newcommand{\unit}{\mathrm{unit}\,}
\newcommand{\word}{\mathrm{word}\,}
\newcommand{\Sym}{\mathrm{Sym}\,}
\newcommand{\Jac}{\mathrm{Jac}\,}

\renewcommand{\thefootnote}{\fnsymbol{footnote}}

\title{Generalized Rank Weight and Extended Generalized Poset Weight Defined For Codes Over Rings: A Galois Connection Approach}
\author{Jianhua Zheng$^1$ \,\,\,\,\,\,Yang Xu$^2$ \,\,\,\,\,\, Haibin Kan$^3$\,\,\,\,\,\,Guangyue Han$^4$}

\maketitle

\renewcommand{\thefootnote}{\fnsymbol{footnote}}

\footnotetext{\hspace*{-6mm} \begin{tabular}{@{}r@{}p{16cm}@{}}
&This work has been supported by the National and Local Joint Laboratory of Cyberspace Security Technology (No. NLCS-202512-003).\\
$^1$ & National and Local Joint Laboratory of Cyberspace Security Technology.\\
$^2$ & Shanghai Key Laboratory of Intelligent Information Processing, School of Computer Science, Fudan University,
Shanghai 200433, China.\\
&Shanghai Engineering Research Center of Blockchain, Shanghai 200433, China. {E-mail:xuyyang@fudan.edu.cn}\\
$^3$ & Shanghai Key Laboratory of Intelligent Information Processing, School of Computer Science, Fudan University,
Shanghai 200433, China.\\
&Shanghai Engineering Research Center of Blockchain, Shanghai 200433, China.\\
&Yiwu Research Institute of Fudan University, Yiwu City, Zhejiang 322000, China. {E-mail:hbkan@fudan.edu.cn} \\
$^4$ & Department of Mathematics, Faculty of Science, The University of Hong Kong, Pokfulam Road, Hong Kong, China. {E-mail:ghan@hku.hk} \\
\end{tabular}}

\vskip 3mm

{\hspace*{-6mm}{\bf Abstract}---In this paper, we study generalized rank weights (GRWs) and extended generalized poset weight (EGPWs) of codes over rings via a Galois connection approach. First, we show that various coding-theoretic properties related to generalized weights, including security drops of a code employed in wire-tap channel of type II, connections between generalized weights of a Gabidulin code and its associated Delsarte code, (generalized) Singleton bound, MDS discrepancy of a code, characterizations of MDS, near MDS, $i$-MDS, MRD, near MRD, $i$-MRD, (dually) quasi-MRD codes as well as evasive property of subspaces, can be reformulated in terms of Galois connections. Next, we study GRWs and rank profiles defined for modules over principal ideal rings, especially those over chain rings. Generalizing GRWs defined for vector spaces over fields, we establish a singleton bound and a Wei-type duality theorem, characterize MRD, near MRD and dually quasi-MRD codes and determine their GRWs; moreover, we characterize $i$-MRD codes and establish a scattered bound for $(h,h)$-evasive codes over chain rings, generalizing counterpart result established for vector space over finite fields. Finally, we propose and study EGPWs and extended poset profiles defined for modules with a composition series, which in fact form a Galois connection. Generalizing EGPWs defined for modules over finite Galois rings, we establish a Wei-type duality theorem for modules over arbitrary quasi-Frobenius rings, which unifies the two Wei-type duality theorems derived in both \cite{32} and \cite{33}.

\section{Introduction}
In 1991, motivated by applications from information-theoretic security, Wei proposed and studied generalized Hamming
weights (GHWs) of linear codes. Roughly speaking, for a linear code $C$, the $r$-th GHW of $C$ is defined as the minimal Hamming weight of all the $r$-dimensional subcodes of $C$. It has been shown
in \cite{34} that GHWs characterize both the performance of a code on the wire-tap channel of type II and the performance of the code as a $t$-resilient function. GHWs have been widely used to gauge the security performances of linear codes for secret sharing, secure network coding or distributed data storage, see, e.g., \cite{5,15,24} and references therein.

Theoretically, GHWs have been used to characterize various families of codes such as $r$-th rank MDS codes ($r$-MDS codes for short), $A^{s}$MDS codes, and dually AMDS codes (or equivalently, near MDS codes) (see \cite{14,34}). Some important algebraic and combinatorial properties of GHWs, such as monotonicity and Wei's duality theorem, provide powerful tools for studying these codes (see \cite{14,34} for more details). Recently in \cite{32}, Tang proposed and studied an extension of generalized Hamming weights (EGHWs) for codes over $\mathbb{Z}_{p^{m}}$. In \cite{32}, among others, the author establishes two Wei-type duality theorems and several bounds for $\mathbb{Z}_{p^{m}}$-linear codes.

Extending the notion of GHWs, generalized weights with respect to rank metric and poset metric have become topics of great interest.

Rank metric codes were introduced by Delsarte in \cite{12} and by Gabidulin in \cite{16}, and generalized rank weights (GRWs) have been proposed and studied for Gabidulin codes by Kurihara, Matsumoto and Uyematsu in \cite{23}, by Ducoat in \cite{13}, by Jurrius and Pellikaan in \cite{21}, and have been extended to Delsarte codes by Ravagnani in \cite{28}, and by Mart\'{\i}nez-Pe\~{n}as and Matsumoto in \cite{26}. GRWs of Gabidulin codes and those of Delsarte codes are closely related, as it has been shown in \cite{28,26} that GRWs of a Gabidulin code are determined by those of its associated Delsarte code and vice versa (see Section 3.2 for more details). Rank metric codes and GRWs have important applications in secure network coding and crisscross error correction (see \cite{23,26,29,31}), and they have nice mathematical properties such as monotonicity, Wei-type duality and (generalized) Singleton bounds (see \cite{7,8,17,18} and references therein). Similar to GHWs, various families of rank metric codes such as $i$-MRD codes, $A^{s}$MRD codes, dually AMRD codes, near MRD codes and (dually) quasi-MRD codes have been studied by using GRWs (see \cite{7,8,25} and references therein). Recently, Marino, Neri and Trombetti have shown in \cite{25} that the evasive property of subspaces, a notion widely studied in combinatorics and finite geometry (see \cite{3,9} and references therein), can be reformulated via GRWs of Gabidulin codes.

Rank metric has also been extended to codes over rings. In \cite{22}, Kamche and Mouaha have proposed and studied rank metric codes over commutative principle ideal rings and explore their constructions and various applications. More recently in \cite{4}, Blanco-Chac\'{o}n, Boix, Greferath and Hieta-Aho have derived MacWilliams identities for rank metric codes over finite commutative chain rings, which has been further used to show that the dual of an MRD code is again MRD, generalizing well-known results for codes over finite fields.

Poset metric was first introduced by Brualdi, Graves and Lawrence in \cite{6}, and generalized poset weights (GPWs) have been proposed and studied by Moura and Firer in \cite{27}. In \cite{27}, the authors proved that similar to GHWS, GPWs have the properties of monotonicity and Wei-type duality, and they also generalized the notion of MDS discrepancy (see \cite{34}) from Hamming metric codes to poset codes. Various families of codes have been explored in the context of poset metric. For example, MDS poset codes and near-MDS poset codes have been proposed and studied by Hyun and Kim in \cite{20}, and by Barg and Purkayastha in \cite{2}, respectively. More recently in \cite{33}, Wang, Cao and Luo proposed extended generalized poset weights (EGPWs) for linear codes over Galois rings, which further generalizes EGHWs for $\mathbb{Z}_{p^{m}}$-linear codes. Among others, the authors established monotonicity, Singleton bound and two Wei-type duality theorems, and moreover, they explored the application of EGPWs in wire-tap channel of type II.

In this paper, we study GRWS and EGPWs defined for codes over rings. To make our results more general, the rings we consider can be both infinite and non-commutative. Our approach is based on Galois connections, a classical notion in combinatorics and algebra (see \cite{11}). This is based on the fact that GHWs and the dimension-length profiles (DLPs) of a code form a Galois connection, which in fact holds true for generalized weights and profiles with respect to various metrics (see \cite{5,15,23,26,35} and references therein). Hence Galois connections may provide a unified way to study some coding-theoretic properties that related to generalized weights. For example, a general Wei-type duality theorem for Galois connections has been established in \cite{35}.

The main contribution of the paper can be summarized as follows.

In Section 3, we reformulate various coding-theoretic properties in terms of Galois connections. More specifically, we establish some general properties of Galois connections, which can be used to characterize the security performance of a code in wire-tap channel of type II and network coding scheme (Section 3.1), the relations between GRWs of a Gabidulin code and those of its associated Delsarte code (Section 3.2), Singleton bound (Section 3.3), generalized Singleton bound (Sections 3.4 and 3.7), MDS/MRD/near MDS/near MRD/dually quasi-MRD codes (Sections 3.5, 3.9 and 3.10), MDS discrepancy of a code (Section 3.8) and $(h,r)$-evasiveness of subspaces (Section 3.11). Many results in this section are derived from a duality theorem (Theorem 3.5), which may be regarded as an analogue of the Wei-type duality theorem for Galois connections (Section 3.12).

In Section 4, we study GRWS and profiles for codes over principal left/right ideal rings. More specifically, in Section 4.1, we first characterize GRWs and profiles of a code and show that they form a Galois connection (Propositions 4.1, 4.2 and Theorem 4.1). Then, we define MRD, quasi-MRD (QMRD), near MRD (NMRD), $i$-MRD codes and $(h,r)$-evasiveness of a code, and derive some basic properties. In Sections 4.2-4.7, we focus on codes over chain rings. In Section 4.2, we derive a Wei-type duality theorem (Theorem 4.2); in Sections 4.3-4.6, we characterize MRD, dually QMRD, NMRD and $i$-MRD codes and determine the GRWs of the first three classes of codes (Theorems 4.3-4.6); in Section 4.7, we characterize $(h,r)$-evasiveness of a code in terms of GRWs of its dual code, and derive a scattered bound for $(h,h)$-evasive codes (Theorems 4.7 and 4.8), generalizing counterpart results established for evasive subspaces and Gabidulin codes in \cite{9,25}.

In Section 5, we propose and study EGPWs and profiles for codes over modules with a composition series. EGPWs and profiles of a code are determined by a poset together with a tuple of right/left ideals. This generalizes the notions of EGHWS for $\mathbb{Z}_{p^{m}}$-linear codes and EGPWs for Galois ring linear codes proposed in \cite{32,33}. In Sections 5.1 and 5.2, we characterize EGPWs and profiles of a code and show that they form a Galois connection (Propositions 5.1, 5.2 and Theorem 5.1). In Section 5.3, we establish a general Wei-type duality theorem for codes over quasi-Frobenius rings (Theorem 5.2). Our result unifies and generalizes the two Wei-type duality theorems established in \cite{33} as well as counterpart results in \cite{32}.

\section{Preliminaries}
We begin with a few notations and terminologies that will be used frequently in the rest of the paper. First of all, for any $a,b\in\mathbb{Z}$, let $[a,b]$ denote the set of all the integers between $a$ and $b$, i.e.,
$$[a,b]=\{i\in\mathbb{Z}\mid a\leqslant i\leqslant b\}.$$

Next, consider a non-empty finite set $\Omega$ and a poset $\mathbf{P}=(\Omega,\preccurlyeq_{\mathbf{P}})$. For any $I\subseteq\Omega$, $I$ is referred to as an \textit{ideal} of $\mathbf{P}$ if for any $v\in I$ and $u\in\Omega$, $u\preccurlyeq_{\mathbf{P}}v$ implies $u\in I$. We let $\mathcal{I}(\mathbf{P})$ denote the set of all the ideals of $\mathbf{P}$. For any $B\subseteq\Omega$, let $\langle B\rangle_{\mathbf{P}}=\{u\in\Omega\mid\exists~v\in B~s.t.~u\preccurlyeq_{\mathbf{P}}v\}$ denote the ideal of $\mathbf{P}$ generated by $B$. The dual poset of $\mathbf{P}$, denoted by $\mathbf{\overline{P}}=(\Omega,\preccurlyeq_{\mathbf{\overline{P}}})$, is defined as
$$u\preccurlyeq_{\mathbf{\overline{P}}}v\Longleftrightarrow v\preccurlyeq_{\mathbf{P}}u.$$
For an abelian group $M$ and any $D\subseteq M^{\Omega}$, the \textit{Hamming support of $D$} is defined as
\begin{equation}\chi(D)\triangleq\{i\in\Omega\mid\exists~\alpha\in D~s.t.~\alpha_i\neq0\},\end{equation}
and moreover, $\langle\chi(D)\rangle_{\mathbf{P}}$ is referred to as the \textit{$\mathbf{P}$-support of $D$}.

\subsection{Basics on Galois connection}
\setlength{\parindent}{2em}
In this subsection, we collect some basic properties of Galois connections between subsets of $\mathbb{Z}$ as well as some examples of Galois connections arising naturally in coding theory. We begin with the following definition, which is a special case of [11, Definition 7.23].

\setlength{\parindent}{0em}
\begin{definition}
Let $P,Q\subseteq\mathbb{Z}$, and let $\varphi:P\longrightarrow Q$ and $\psi:Q\longrightarrow P$. We say that $(\varphi,\psi)$ is a Galois connection between $P$ and $Q$ if for any $(a,b)\in P\times Q$, it holds that $\varphi(a)\leqslant b\Longleftrightarrow a\leqslant\psi(b)$.
\end{definition}

\setlength{\parindent}{2em}
Definition 2.1 implies that both $\varphi$ and $\psi$ preserve the order $\leqslant$, as detailed in the following remark.

\setlength{\parindent}{0em}
\begin{remark}
For $P,Q\subseteq\mathbb{Z}$, $\varphi:P\longrightarrow Q$ and $\psi:Q\longrightarrow P$, by [11, Lemma 7.26], $(\varphi,\psi)$ is a Galois connection between $P$ and $Q$ if and only if $\varphi(a)\leqslant\varphi(c)$ for all $a,c\in P$ with $a\leqslant c$, $\psi(b)\leqslant\psi(d)$ for all $b,d\in Q$ with $b\leqslant d$, $a\leqslant\psi(\varphi(a))$ for all $a\in P$, and $\varphi(\psi(b))\leqslant b$ for all $b\in Q$. Moreover, if $(\varphi,\psi)$ is a Galois connection between $P$ and $Q$, then by [11, 7.33], $\varphi$ and $\psi$ is determined by each other via the formula $\varphi(a)=\min\{b\in Q\mid a\leqslant\psi(b)\}$ and $\psi(d)=\max\{c\in P\mid \varphi(c)\leqslant d\}$.
\end{remark}

\setlength{\parindent}{2em}
The following lemma follows from [35, Lemma 2.2], and will be used frequently in the rest of the paper.

\begin{lemma}
Let $P$ and $Q$ be non-empty finite subsets of $\mathbb{Z}$, $X$ be a set, $f:X\longrightarrow P$ and $g:X\longrightarrow Q$ such that $\max(P)\in f[X]$, $\min(Q)\in g[X]$. Define $\varphi:P\longrightarrow Q$ and $\psi:Q\longrightarrow P$ as
$$\varphi(a)=\min\{g(u)\mid u\in X,a\leqslant f(u)\},$$
$$\psi(b)=\max\{f(u)\mid u\in X,g(u)\leqslant b\}.$$
Then, $(\varphi,\psi)$ is a Galois connection between $P$ and $Q$.
\end{lemma}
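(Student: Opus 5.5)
The plan is to check the defining equivalence of Definition 2.1 directly, after first making sure that $\varphi$ and $\psi$ are well defined as maps $P\to Q$ and $Q\to P$.

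\textbf{Well-definedness.} For $a\in P$, the set $\{g(u)\mid u\in X,\ a\leqslant f(u)\}$ is non-empty. Indeed, by hypothesis there is $u_0\in X$ with $f(u_0)=\max(P)\geqslant a$. This set is a subset of the finite set $Q$, so its minimum exists and lies in $Q$; hence $\varphi(a)\in Q$. Symmetrically, for $b\in Q$ choose $u_1\in X$ with $g(u_1)=\min(Q)\leqslant b$. Then $\{f(u)\mid u\in X,\ g(u)\leqslant b\}$ is a non-empty subset of the finite set $P$, so $\psi(b)\in P$. I expect this to be the only step needing care: the hypotheses $\max(P)\in f[X]$ and $\min(Q)\in g[X]$ are exactly what guarantee non-emptiness, and finiteness is exactly what guarantees that the min and max exist.

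\textbf{The equivalence.} Fix $(a,b)\in P\times Q$.

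Suppose first that $\varphi(a)\leqslant b$. Since the minimum is attained, there is $u\in X$ with $a\leqslant f(u)$ and $g(u)=\varphi(a)\leqslant b$. This $u$ belongs to the set defining $\psi(b)$, so $\psi(b)\geqslant f(u)\geqslant a$.

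Conversely, suppose $a\leqslant\psi(b)$. Since the maximum is attained, there is $u\in X$ with $g(u)\leqslant b$ and $f(u)=\psi(b)\geqslant a$. This $u$ belongs to the set defining $\varphi(a)$, so $\varphi(a)\leqslant g(u)\leqslant b$.

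Both directions hold, so $\varphi(a)\leqslant b\Longleftrightarrow a\leqslant\psi(b)$, and $(\varphi,\psi)$ is a Galois connection between $P$ and $Q$.
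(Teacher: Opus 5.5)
Your proof is correct and complete. The hypotheses $\max(P)\in f[X]$ and $\min(Q)\in g[X]$ make the defining sets non-empty, finiteness of $P$ and $Q$ makes the minimum and maximum attained, and each direction of $\varphi(a)\leqslant b\Longleftrightarrow a\leqslant\psi(b)$ follows by inserting the element attaining one extremum into the set defining the other. The paper gives no argument of its own here and simply cites [35, Lemma 2.2], so your direct verification serves as the natural self-contained proof of that cited fact.
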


\setlength{\parindent}{2em}
In the following four examples, we use Lemma 2.2 to obtain Galois connections arising from generalized weights and profiles with respect to Hamming metric, poset metric and rank metric.

\begin{example}(GHWs and DLPs of linear codes)
Let $\mathbb{F}$ be a field, $m\in\mathbb{Z}^{+}$, and let $C\subseteq\mathbb{F}^{m}$ be a $k$-dimensional linear code. Following \cite{34}, for any $r\in[0,k]$, the $r$-th GHW of $C$ is defined as
\begin{eqnarray*}
\begin{split}
\mathbf{d}_r(C)&\triangleq\min\{|\chi(D)|\mid \text{$D$ is an $r$-dimensional subcode of $C$}\}\\
&=\min\{|J|\mid J\subseteq[0,m],\dim_{\mathbb{F}}(C\cap\delta(J))\geqslant r\},
\end{split}
\end{eqnarray*}
and following \cite{15}, for any $l\in[0,m]$, the $l$-th dimension/length profile (DLP) of $C$ is defined as
\begin{eqnarray*}
\begin{split}
\mathbf{K}_l(C)&\triangleq\max\{\dim_{\mathbb{F}}(C\cap\delta(J))\mid J\subseteq[0,m],|J|=l\}\\
&=\max\{\dim_{\mathbb{F}}(C\cap\delta(J))\mid J\subseteq[0,m],|J|\leqslant l\},
\end{split}
\end{eqnarray*}
where for any $J\subseteq[0,m]$,
$$\delta(J)=\{\alpha\in\mathbb{F}^{m}\mid\forall~i\in[0,m]-J:\alpha_i=0\}.$$
By Lemma 2.1, $\varphi:[0,k]\longrightarrow[0,m]$ defined as $\varphi(r)=\mathbf{d}_r(C)$ and $\psi:[0,m]\longrightarrow[0,k]$ defined as $\psi(l)=\mathbf{K}_l(C)$ form a Galois connection between $[0,k]$ and $[0,m]$.
\end{example}

\begin{example}(Generalized poset weights and profiles of linear codes)
Let $\mathbb{F}$ be a field, $\Omega$ be a non-empty set finite set with $|\Omega|=m$, $\mathbf{P}=(\Omega,\preccurlyeq_{\mathbf{P}})$ be a poset, and $C\subseteq\mathbb{F}^{\Omega}$ be a $k$-dimensional linear code. Following \cite{27}, for any $r\in[0,k]$, the $r$-th generalized $\mathbf{P}$-weight of $C$ is defined as
\begin{eqnarray*}
\begin{split}
\mathbf{d}_r^{\mathbf{P}}(C)&\triangleq\min\{|\langle\chi(D)\rangle_{\mathbf{P}}|\mid \text{$D$ is an $r$-dimensional subcode of $C$}\}\\
&=\min\{|J|\mid J\in\mathcal{I}(\mathbf{P}),\dim_{\mathbb{F}}(C\cap\delta(J))\geqslant r\},
\end{split}
\end{eqnarray*}
and for any $l\in[0,m]$, define the $l$-th $\mathbf{P}$-profile of $C$ as
\begin{eqnarray*}
\begin{split}
\mathbf{K}_l^{\mathbf{P}}(C)&\triangleq\max\{\dim_{\mathbb{F}}(C\cap\delta(J))\mid J\in\mathcal{I}(\mathbf{P}),|J|=l\}\\
&=\max\{\dim_{\mathbb{F}}(C\cap\delta(J))\mid J\in\mathcal{I}(\mathbf{P}),|J|\leqslant l\},
\end{split}
\end{eqnarray*}
where $\delta(J)$ is defined the same as in Example 2.1. By Lemma 2.1, $\varphi:[0,k]\longrightarrow[0,m]$ and $\psi:[0,m]\longrightarrow[0,k]$ defined as $\varphi(r)=\mathbf{d}_r^{\mathbf{P}}(C)$ and $\psi(l)=\mathbf{K}_l^{\mathbf{P}}(C)$ form a Galois connection between $[0,k]$ and $[0,m]$. We remark that when $\mathbf{P}$ is an anti-chain, then $\mathbf{d}_r^{\mathbf{P}}(C)$ and $\mathbf{K}_l^{\mathbf{P}}(C)$ boil down to $\mathbf{d}_r(C)$ and $\mathbf{K}_l(C)$ in Example 2.1, respectively.
\end{example}

\setlength{\parindent}{2em}
\begin{example}(GRWs and DIPs of Gabidulin codes)
Let $\mathbb{E}/\mathbb{F}$ be a field extension, $m\in\mathbb{Z}^{+}$, and let $\mathbb{E}^{[m]}$ denote the set of all the column vectors over $\mathbb{E}$ of length $m$. Any $\mathbb{E}$-subspace of $\mathbb{E}^{[m]}$ is referred to as a Gabidulin rank metric code. An $\mathbb{E}$-subspace $V\subseteq\mathbb{E}^{[m]}$ is said to be $\mathbb{E}/\mathbb{F}$-closed if $V$ has an $\mathbb{E}$-basis in $\mathbb{F}^{[m]}$. We let $\Gamma$ denote the set of all the $\mathbb{E}/\mathbb{F}$-closed $\mathbb{E}$-subspaces of $\mathbb{E}^{[m]}$. Noticing that for any $D\subseteq\mathbb{E}^{[m]}$, $\{V\in\Gamma\mid D\subseteq V\}$ contains a least member with respect to $\subseteq$, which we denote by $D^{\ast}$.

Let $C\subseteq\mathbb{E}^{[m]}$ be a Gabidulin code with $\dim_{\mathbb{E}}(C)=k$. Following \cite{21,23}, for any $r\in[0,k]$, the $r$-th GRW of $C$ is defined as
\begin{eqnarray*}
\begin{split}
\mathbf{d}_r^{G}(C)&\triangleq\min\{\dim_{\mathbb{E}}(D^{\ast})\mid \text{$D$ is an $r$-dimensional $\mathbb{E}$-subcode of $C$}\}\\
&=\min\{\dim_{\mathbb{E}}(V)\mid V\in\Gamma,\dim_{\mathbb{E}}(C\cap V)\geqslant r\},
\end{split}
\end{eqnarray*}
and following \cite{23}, for any $l\in[0,m]$, define the $l$-th dimension/intersection profile (DIP) of $C$ is defined as
\begin{eqnarray*}
\begin{split}
\mathbf{K}_l^{G}(C)&\triangleq\max\{\dim_{\mathbb{E}}(C\cap V)\mid V\in\Gamma,\dim_{\mathbb{E}}(V)=l\}\\
&=\max\{\dim_{\mathbb{E}}(C\cap V)\mid V\in\Gamma,\dim_{\mathbb{E}}(V)\leqslant l\}.
\end{split}
\end{eqnarray*}
By Lemma 2.1, $\varphi:[0,k]\longrightarrow[0,m]$ and $\psi:[0,m]\longrightarrow[0,k]$ defined as $\varphi(r)=\mathbf{d}_r^{G}(C)$ and $\psi(l)=\mathbf{K}_l^{G}(C)$ form a Galois connection between $[0,k]$ and $[0,m]$ (cf. [23, Lemma 9]).
\end{example}

\setlength{\parindent}{2em}
\begin{example}(GMWs and DRPs of Delsarte codes)
Let $\mathbb{F}$ be a field, $m,n\in\mathbb{Z}^{+}$, $\Mat_{m,n}(\mathbb{F})$ denote the set of all the matrices over $\mathbb{F}$ with $m$ rows and $n$ columns, and $\mathbb{F}^{[m]}$ denote the set of all the column vectors over $\mathbb{F}$ of length $m$. Any $\mathbb{F}$-subspace of $\Mat_{m,n}(\mathbb{F})$ is referred to as a Delsarte rank metric code. For any $\alpha\in\Mat_{m,n}(\mathbb{F})$, let $\col(\alpha)\subseteq\mathbb{F}^{[m]}$ denote the $\mathbb{F}$-subspace of $\mathbb{F}^{[m]}$ generated by all the columns of $\alpha$. Moreover, for any $D\subseteq\Mat_{m,n}(\mathbb{F})$, let $\CSupp(D)=\sum_{\alpha\in D}\col(\alpha)$.

Let $C\subseteq\Mat_{m,n}(\mathbb{F})$ be a Delsarte code with $\dim_{\mathbb{F}}(C)=k$. Following \cite{26}, for any $r\in[0,k]$, the $r$-th generalized matrix weight (GMW) of $C$ is defined as
\begin{eqnarray*}
\begin{split}
\mathbf{d}_r^{M}(C)&\triangleq\min\{\dim_{\mathbb{F}}(\CSupp(D))\mid \text{$D$ is an $r$-dimensional subcode of $C$}\}\\
&=\min\{\dim_{\mathbb{F}}(U)\mid \text{$U$ is an $\mathbb{F}$-subspace of $\mathbb{F}^{[m]}$, $\dim_{\mathbb{F}}(C\cap\delta(U))\geqslant r$}\},
\end{split}
\end{eqnarray*}
and for any $l\in[0,m]$, the $l$-th dimension/rank support profile (DRP) of $C$ is defined as
\begin{eqnarray*}
\begin{split}
\mathbf{K}_l^{M}(C)&\triangleq\max\{\dim_{\mathbb{F}}(C\cap\delta(U))\mid \text{$U$ is an $\mathbb{F}$-subspace of $\mathbb{F}^{[m]}$, $\dim_{\mathbb{F}}(U)=l$}\}\\
&=\max\{\dim_{\mathbb{F}}(C\cap\delta(U))\mid \text{$U$ is an $\mathbb{F}$-subspace of $\mathbb{F}^{[m]}$, $\dim_{\mathbb{F}}(U)\leqslant l$}\},
\end{split}
\end{eqnarray*}
where for any $\mathbb{F}$-subspace $U\subseteq\mathbb{F}^{[m]}$, $\delta(U)=\{\alpha\in\Mat_{m,n}(\mathbb{F})\mid\col(\alpha)\subseteq U\}$. By Lemma 2.1, $\varphi:[0,k]\longrightarrow[0,m]$ and $\psi:[0,m]\longrightarrow[0,k]$ defined as $\varphi(r)=\mathbf{d}_r^{M}(C)$ and $\psi(l)=\mathbf{K}_l^{M}(C)$ form a Galois connection between $[0,k]$ and $[0,m]$ (cf. [26, Proposition 14]). We also refer the reader to Ravagnani \cite{28} for definitions and results for Delsarte generalized weights, a different yet related notion proposed from an anti-code approach.
\end{example}

\setlength{\parindent}{2em}
The following lemma is a corollary of [35, Lemma 2.1 and Proposition 3.1], and will be used to derive monotonicity and generalized Singleton bounds in Sections 3, 4 and 5.

\setlength{\parindent}{0em}
\begin{lemma}
Let $k,m,w\in\mathbb{N}$, and let $(\varphi,\psi)$ be a Galois connection between $[0,k]$ and $[0,m]$. Then, we have $\varphi(0)=0$, $\psi(m)=k$. Moreover, the following two statements are equivalent to each other:

{\bf{(1)}}\,\,$\psi(l)-\psi(l-1)\leqslant w$ for all $l\in [1,m]$;

{\bf{(2)}}\,\,$\varphi(r)+1\leqslant\max\{\varphi(r+w),1\}$ for all $r\in[0,k-w]$.
\end{lemma}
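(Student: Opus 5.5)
We work directly from Definition 2.1 together with the characterization in Remark 2.1: both maps are monotone, $a\leqslant\psi(\varphi(a))$ and $\varphi(\psi(b))\leqslant b$. The plan is to prove the two boundary identities first and then reduce the equivalence to the adjunction $\varphi(a)\leqslant b\Leftrightarrow a\leqslant\psi(b)$.

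\emph{Boundary values.} Since $\psi(m)\in[0,k]$, we have $0\leqslant\psi(m)$. The Galois condition then gives $\varphi(0)\leqslant m$, which is automatic, so we use a different pair. Since $\varphi(0)\geqslant 0$, we instead test $\varphi(0)\leqslant b$ for $b=0$: it holds iff $0\leqslant\psi(0)$, which is true because $\psi(0)\in[0,k]$. Hence $\varphi(0)\leqslant 0$, so $\varphi(0)=0$. Dually, $\varphi(k)\leqslant m$ always holds, so $k\leqslant\psi(m)$, giving $\psi(m)=k$.

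\emph{(1)$\Rightarrow$(2).} Fix $r\in[0,k-w]$. If $\varphi(r+w)=0$, then by monotonicity $\varphi(r)=0$, and $0+1\leqslant\max\{0,1\}$ holds. Otherwise set $l=\varphi(r+w)\geqslant1$; we need $\varphi(r)\leqslant l-1$, i.e. $r\leqslant\psi(l-1)$. From $\varphi(r+w)\leqslant l$ we get $r+w\leqslant\psi(l)$, and (1) gives $\psi(l-1)\geqslant\psi(l)-w\geqslant r$. This finishes the direction.

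\emph{(2)$\Rightarrow$(1).} Fix $l\in[1,m]$ and set $s=\psi(l)$. If $s\leqslant w$, then $\psi(l)-\psi(l-1)\leqslant s\leqslant w$ because $\psi(l-1)\geqslant0$. Otherwise let $r=s-w\in[1,k-w]$. Since $\varphi(r+w)=\varphi(\psi(l))\leqslant l$, (2) yields $\varphi(r)+1\leqslant\max\{\varphi(r+w),1\}\leqslant\max\{l,1\}=l$. Thus $\varphi(r)\leqslant l-1$, hence $r\leqslant\psi(l-1)$, i.e. $\psi(l)-\psi(l-1)\leqslant w$.

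The main obstacle is bookkeeping at the edge cases, namely the $\max\{\cdot,1\}$ term, the case $w=0$, and keeping indices inside $[0,k]$ and $[0,m]$. The case split on whether $\varphi(r+w)=0$, respectively whether $\psi(l)\leqslant w$, should handle these. For $w=0$, (2) forces $\varphi(r)=0$ only when $\varphi(r)\geqslant1$ fails, and the argument above still applies verbatim.
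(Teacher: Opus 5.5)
Your proof is correct, and it differs from the paper mainly in being self-contained. The paper gives no in-text proof: it states the lemma as a corollary of [35, Lemma 2.1 and Proposition 3.1] and moves on.

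You derive everything directly from the adjunction $\varphi(a)\leqslant b\Leftrightarrow a\leqslant\psi(b)$, monotonicity, and $\varphi(\psi(b))\leqslant b$:
\begin{itemize}
\item The boundary values come from testing the pairs $(0,0)$ and $(k,m)$.
\item Each direction of the equivalence is a translation through the adjunction at $b=l-1$.
\item Your case splits ($\varphi(r+w)=0$, respectively $\psi(l)\leqslant w$) handle exactly the $\max\{\cdot,1\}$ term and the range constraint $r\in[0,k-w]$.
\end{itemize}

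What your route buys is an elementary argument in which the degenerate parameters are visibly harmless:
\begin{itemize}
\item For $w>k$, condition (2) is vacuous, and your case $\psi(l)\leqslant w$ always applies since $\psi(l)\leqslant k<w$.
\item For $m=0$, one has $\varphi\equiv 0$, so (2) holds trivially and (1) is vacuous.
\item For $w=0$, (2) says $\varphi\equiv 0$ and (1) says $\psi\equiv k$. Both are equivalent to $\psi(0)=k$, and your general argument already covers this case.
\end{itemize}
The paper's citation buys brevity and places the lemma inside the general framework of [35], which Section 3 relies on anyway (e.g., Theorem 3.12).

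Two cosmetic fixes are worth making. First, delete the false start in your boundary-value paragraph; the remark that $\varphi(0)\leqslant m$ is automatic does no work. Second, replace your garbled last sentence on $w=0$ with the explicit observation in the last item above.
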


\setlength{\parindent}{2em}
The following lemma follows from [35, Lemma 3.1] and some straightforward verification.

\setlength{\parindent}{0em}
\begin{lemma}
Let $(k,m)\in\mathbb{N}^{2}$, $w\in\mathbb{Z}^{+}$, and let $(\varphi,\psi)$ be a Galois connection between $[0,k]$ and $[0,m]$ such that $\psi(0)=0$ and $\psi(l)-\psi(l-1)\leqslant w$ for all $l\in[1,m]$. Then, we have $k\leqslant wm$, and there uniquely exists $\eta:[0,m]\longrightarrow[0,wm-k]$ defined as
$$\eta(l)=\psi(m-l)+wl-k.$$
Moreover, the following four statements hold:

{\bf{(1)}}\,\,$\eta(0)=0$, $\eta(m)=wm-k$, and for any $l\in[1,m]$, it holds that $0\leqslant\eta(l)-\eta(l-1)\leqslant w$;

{\bf{(2)}}\,\,There uniquely exists $\tau:[0,wm-k]\longrightarrow[0,m]$ such that $(\tau,\eta)$ is a Galois connection between $[0,wm-k]$ and $[0,m]$;

{\bf{(3)}}\,\,If $k\geqslant1$, then we have $\varphi(1)\geqslant1$ and $\{b\in[0,m]\mid \psi(b)=0\}=[0,\varphi(1)-1]$;

{\bf{(4)}}\,\,If $k\leqslant wm-1$, then we have $\tau(1)\geqslant1$ and $\{l\in[0,m]\mid\eta(l)=0\}=[0,\tau(1)-1]$.
\end{lemma}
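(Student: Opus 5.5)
We work directly from Remark 2.1: $\varphi,\psi$ are monotone and satisfy the adjunction $\varphi(a)\leqslant b\Leftrightarrow a\leqslant\psi(b)$. The goal is to verify parts (1)--(4) in order, with $k\leqslant wm$ coming first.

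\emph{Preliminary facts.} By Lemma 2.2 we have $\psi(m)=k$, and $\psi(0)=0$ is given. Telescoping gives
$$k=\psi(m)-\psi(0)=\sum_{l=1}^{m}(\psi(l)-\psi(l-1))\leqslant wm.$$
Hence $wm-k\geqslant 0$, and $\eta(l)=\psi(m-l)+wl-k$ is well defined as a function on $[0,m]$. It is clearly unique, being given by a formula.

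\emph{Part (1).} We get $\eta(0)=\psi(m)-k=0$ and $\eta(m)=\psi(0)+wm-k=wm-k$. Moreover,
$$\eta(l)-\eta(l-1)=w-\bigl(\psi(m-l+1)-\psi(m-l)\bigr).$$
Monotonicity of $\psi$ gives $\psi(m-l+1)-\psi(m-l)\geqslant 0$, and the hypothesis bounds it by $w$, so $\eta(l)-\eta(l-1)\in[0,w]$. In particular $\eta$ takes values in $[0,wm-k]$ and is monotone.

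\emph{Part (2).} Define
$$\tau(s)=\min\{l\in[0,m]\mid s\leqslant\eta(l)\}.$$
This set is nonempty because $\eta(m)=wm-k\geqslant s$. Then $\tau(s)\leqslant l\Leftrightarrow s\leqslant\eta(l)$: the forward direction uses monotonicity of $\eta$, and the reverse is the definition of the minimum. So $(\tau,\eta)$ is a Galois connection. Uniqueness follows from Remark 2.1, which says $\tau$ is determined by $\eta$. Alternatively, one can apply Lemma 2.1 with $X=[0,m]$, $f=\eta$, $g=\mathrm{id}$; its hypotheses hold since $\max[0,wm-k]=\eta(m)$ lies in the image of $f$ and $0=g(0)$ lies in the image of $g$.

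\emph{Parts (3) and (4).} For (3), assume $k\geqslant1$. Since $\psi(0)=0<1$, the adjunction gives $\varphi(1)\not\leqslant 0$, i.e.\ $\varphi(1)\geqslant1$. Next, $\psi(b)=0$ iff $\psi(b)<1$ iff $1\not\leqslant\psi(b)$ iff $\varphi(1)\not\leqslant b$ iff $b\leqslant\varphi(1)-1$. Since $\varphi(1)\leqslant m$, this gives exactly $[0,\varphi(1)-1]$. For (4), assume $k\leqslant wm-1$, so that $1\in[0,wm-k]$. We have $\eta(0)=0<1$, so $\tau(1)\geqslant1$, and the same chain of equivalences applied to $(\tau,\eta)$ yields $\{l\mid\eta(l)=0\}=[0,\tau(1)-1]$.

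Nothing here is genuinely hard. The only care needed is in the bookkeeping that $\eta$ lands in $[0,wm-k]$, since $\tau$ must be defined on the right codomain, and in checking that the relevant sets are nonempty.
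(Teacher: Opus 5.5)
Your proof is correct. The paper does not prove this lemma; it only says it follows from [35, Lemma 3.1] plus ``some straightforward verification.'' Your argument is therefore a self-contained substitute for that citation, not a reconstruction of a written proof.

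You work only from the adjunction $\varphi(a)\leqslant b\Leftrightarrow a\leqslant\psi(b)$ and monotonicity:
\begin{itemize}
\item Telescoping the increment bound gives $k\leqslant wm$.
\item The identity $\eta(l)-\eta(l-1)=w-(\psi(m-l+1)-\psi(m-l))$ gives (1). It also shows that $\eta$ really takes values in $[0,wm-k]$, which is needed before $\tau$ can be defined.
\item The explicit formula $\tau(s)=\min\{l\mid s\leqslant\eta(l)\}$ gives (2).
\item For (3) and (4), you observe that $\psi(b)=0\Leftrightarrow 1\not\leqslant\psi(b)\Leftrightarrow\varphi(1)\not\leqslant b$, and apply the same reasoning to $(\tau,\eta)$.
\end{itemize}

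The citation route is shorter and places the lemma within the framework of [35], which is also where the paper takes its Wei-type duality (Theorem 3.12). Your route shows that nothing beyond the definition of a Galois connection is needed. You could even avoid Lemma 2.2 for $\psi(m)=k$: $\varphi(k)\leqslant m$ already gives $k\leqslant\psi(m)$.

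One small point about your alternative via Lemma 2.1. The second map it produces is $b\mapsto\max\{\eta(u)\mid u\leqslant b\}$. This equals $\eta$ only because $\eta$ is nondecreasing, so part (1) must be invoked at that step. Your primary argument for (2) does not depend on this.
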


\setlength{\parindent}{2em}
Lemma 2.3 is the starting point of most results established in Section 3. In coding-theoretic settings, $\varphi$ and $\psi$ are always equal to the generalized weights and profiles of a code, and $\tau$ and $\eta$ are equal to the generalized weights and profiles of the dual code, respectively, as illustrated in the following example.

\setlength{\parindent}{0em}
\begin{example}
{\bf{(1)}}\,\,Following Example 2.2, let $\varphi$ and $\psi$ denote the generalized $\mathbf{P}$-weights and $\mathbf{P}$-profiles of the $k$-dimension code $C$, respectively, and set $w=1$. Then, $\tau$ and $\eta$ coincide with the generalized $\overline{\mathbf{P}}$-weights and $\overline{\mathbf{P}}$-profiles of $C^{\bot}$ (the dual code of $C$ with respect to the standard inner product on $\mathbb{F}^{\Omega}$), respectively. More precisely, we have $\tau(s)=\mathbf{d}_s^{\overline{\mathbf{P}}}(C^{\bot})$ for all $s\in[0,m-k]$, and $\eta(l)=\mathbf{K}_l^{\overline{\mathbf{P}}}(C^{\bot})$ for all $l\in[0,m]$. In particular, if $\mathbf{P}$ is an anti-chain, then $\tau$ and $\eta$ become the GHWs and DLPs of $C^{\bot}$, respectively (see [15, Theorems 2, 3, 4]).

{\bf{(2)}}\,\,Following Example 2.3, let $\varphi$ and $\psi$ denote the GRWs and DIPs of the $k$-dimension Gabidulin code $C$, respectively, and set $w=1$. Then, $\tau$ and $\eta$ coincide with the GRWs and DIPs of $C^{\bot}$ (the dual code of $C$ with respect to the standard inner product on $\mathbb{E}^{[m]}$), respectively. More precisely, we have $\tau(s)=\mathbf{d}_s^{G}(C^{\bot})$ for all $s\in[0,m-k]$, and $\eta(l)=\mathbf{K}_l^{G}(C^{\bot})$ for all $l\in[0,m]$.

{\bf{(3)}}\,\,Following Example 2.4, let $\varphi$ and $\psi$ denote the GMWs and DRPs of the $k$-dimension Delsarte code $C$, respectively, and set $w=n$. Then, $\tau$ and $\eta$ coincide with the GMWs and DRPs of $C^{\bot}$ (the dual code of $C$ with respect to the trace inner product on $\Mat_{m,n}(\mathbb{F})$, see \cite{26}), respectively. More precisely, we have $\tau(s)=\mathbf{d}_s^{M}(C^{\bot})$ for all $s\in[0,mn-k]$, and $\eta(l)=\mathbf{K}_l^{M}(C^{\bot})$ for all $l\in[0,m]$.
\end{example}

\subsection{Basics on rings and modules}

\setlength{\parindent}{2em}
We first follow \cite{1,10} and collect some notations and terminologies for rings and modules. Let $R$ be a ring (which is always assumed to be associative with an multiplicative identity), and let $M$ be a left $R$-module. For any $A\subseteq M$, we write $A\leqslant_{R}M$ if $A$ is a left $R$-submodule of $M$. If $M$ is finitely generated, then we let $\rank_R(M)$ denote the minimal cardinality of a generator set of $M$. $M$ is said to be \textit{simple} if $M\neq\{0\}$ and $M$ has no left $R$-submodules other than $\{0\}$ and $M$. A \textit{composition series} of $M$ is a chain of left $R$-submodules $\{0\}=L_0\subseteq L_1\subseteq\cdots\subseteq L_{t-1}\subseteq L_{t}=M$ where $t\in\mathbb{N}$ and $L_i/L_{i-1}$ is a simple left $R$-module for all $i\in[1,t]$. If $M$ has a composition series $\{0\}=L_0\subseteq L_1\subseteq\cdots\subseteq L_{t-1}\subseteq L_{t}=M$, then we let $\len_{R}(M)\triangleq t$. By the Jordan-H\"{o}lder theorem, $t$ is independent of the choice of the composition series, and hence $\len_{R}(M)$ is well-defined. We note that all the above notions parallelly apply to right $R$-modules.

For any $A\subseteq R$, the \textit{right annihilator of $A$ in $R$} is defined as $A^{\dag}=\{b\in R\mid \forall~a\in A:ab=0\}$, and the \textit{left annihilator of $A$ in $R$} is defined as ${^{\dag}A}=\{b\in R\mid \forall~a\in A:ba=0\}$. We say that $R$ is \textit{quasi-Frobenius} if $R$ has a composition series as a left $R$-module, and for any left ideal $A$ of $R$ and right ideal $B$ of $R$, it holds that $^{\dag}(A^{\dag})=A$ and $({^{\dag}B})^{\dag}=B$.

Now following \cite{22,19}, $R$ is said to be a \textit{principal left ideal ring} if every left ideal of $R$ is a principal left ideal; moreover, $R$ is said to be a \textit{left chain ring} if the set of all the principal left ideals of $R$ is a finite chain under $\subseteq$ (i.e., $R$ has only finitely many principal left ideals, and for any two principal left ideals $I_1,I_2$ of $R$, it holds that either $I_1\subseteq I_2$ or $I_2\subseteq I_1$). Noticing that principal right ideal ring and right chain ring can be defined in a parallel fashion, we say that $R$ is a \textit{chain ring} if $R$ is both a left chain ring and a right chain ring. We note that a left chain ring is a principal left ideal ring, and a chain ring is necessarily quasi-Frobenius.

The following lemma follows from [19, Section 2] and the first assertion of [22, Proposition 3.2], and will be used frequently in Section 4. We note that the results in \cite{19,22} were established for finite rings (and the ring is assumed to be commutative in \cite{22}), and the counterpart results can be extended to infinite or non-commutative rings.

\setlength{\parindent}{0em}
\begin{lemma}
Suppose that $R$ is a principal left ideal ring and $M$ is a finitely generated left $R$-module, and let $B\leqslant_{R}M$. Then, $B$ is finitely generated with $\rank_R(B)\leqslant\rank_R(M)$. Further assume that $R$ is a left chain ring and $M$ is a free left $R$-module. Then, there exists a free left $R$-submodule $H\leqslant_{R}M$ such that $B\subseteq H$ and $\rank_R(H)=\rank_R(B)$. Moreover, for any $V\leqslant_{R}M$, it holds that $\rank_R(B\cap V)=\rank_R(H\cap V)$.
\end{lemma}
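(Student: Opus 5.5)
The statement has three parts, and I will prove them in order.

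\emph{First part: principal left ideal rings.} I would argue by induction on $n=\rank_R(M)$, the standard argument for principal ideal domains adapted to arbitrary principal left ideal rings. Choose generators $x_1,\dots,x_n$ of $M$ and let $M'$ be the submodule generated by $x_1,\dots,x_{n-1}$. Consider
$$I=\{a\in R\mid \exists\, y\in M': y+ax_n\in B\}.$$
This is a left ideal of $R$, so $I=Ra_0$ for some $a_0$. Pick $b_0\in B$ of the form $y_0+a_0x_n$ with $y_0\in M'$. Then every $b\in B$ can be written as $b=rb_0+b'$ with $b'\in B\cap M'$. By induction $B\cap M'$ is generated by at most $n-1$ elements, so $B$ is generated by at most $n$ elements. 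The base case $n=0$ is trivial. The case $n=1$ is contained in the inductive step, because $M'=0$ there and $B$ is generated by $b_0$.

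\emph{Second part: left chain rings and free $M$.} Let $M\cong R^{n}$. For a left chain ring, the non-units form the unique maximal left ideal $\mathfrak{m}=R\pi$, and every element is $u\pi^{i}$ with $u$ a unit. Also $R/\mathfrak{m}$ is a division ring, and Nakayama's lemma applies because $\mathfrak{m}$ is nilpotent: the principal left ideals form a finite chain, so $\pi^{e}=0$ for some $e$. I would then prove a Smith-type normal form: there is a basis $e_1,\dots,e_n$ of $M$ such that $B=\bigoplus_{i\le s}R\pi^{c_i}e_i$, where $s=\rank_R(B)$. The argument is elimination, working with a generating matrix of $B$ and using only row operations (left multiplication) and invertible column changes of basis. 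At each step choose an entry of minimal valuation. Every other entry in its row and column is a left (respectively right) multiple of it: in a chain ring, $x\in R\pi^{i}$ and $\pi^{i}R=R\pi^{i}$ since the ring is two-sided chain. Then clear that row and column. With this form, set $H=\bigoplus_{i\le s}Re_i$. This $H$ is free of rank $s$ and contains $B$. I expect the non-commutative elimination, in particular tracking left versus right multiples, to be the main obstacle. If that proves awkward, I would fall back on lifting a basis of $(B+\mathfrak{m}M)/\mathfrak{m}M$: the minimal number of generators of $B$ equals $\dim_{R/\mathfrak m}(B/\mathfrak{m}B)$ by Nakayama.

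\emph{Third part: $\rank_R(B\cap V)=\rank_R(H\cap V)$ for any $V\leqslant_R M$.} With the normal form above, the map $\mu\colon H\to B$ given by $e_i\mapsto \pi^{c_i}e_i$ is surjective, but it is not compatible with arbitrary $V$. So I would instead use the rank characterization $\rank_R(X)=\len_R(\soc(X))$, where $\soc(X)$ is the annihilator of $\mathfrak m$ in $X$. For finitely generated modules over a chain ring, each cyclic nonzero summand $R/R\pi^{j}$ has simple socle, and the module decomposes as a sum of cyclic modules; this gives the characterization. Since $H\cap V$ and $B\cap V$ are submodules of $H$, their socles lie in $\soc(H)=\bigoplus_{i\le s}R\pi^{e-1}e_i$. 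It therefore suffices to show $\soc(H)\subseteq B$, which follows from $c_i\le e-1$ for $i\le s$, i.e.\ $\pi^{c_i}e_i\neq0$. From this, $\soc(H\cap V)=\soc(H)\cap V=\soc(B)\cap V=\soc(B\cap V)$, and the equality of ranks follows.
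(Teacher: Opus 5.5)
Your Part 1 is correct: the induction through the left ideal $I=Ra_0$ works verbatim over a non-commutative principal left ideal ring. Your preliminary facts about left chain rings ($\mathfrak m=R\pi$, every element is $u\pi^{i}$ with $u$ a unit, $\pi^{e}=0$) are also fine.

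The problem is the step ``$\pi^{i}R=R\pi^{i}$ since the ring is two-sided chain''. The lemma only assumes that $R$ is a \emph{left} chain ring, and your elimination depends on this step: the column operations clearing the pivot's row require $g_{ij'}\in g_{ij}R$. (With rows as generators, entries in the pivot's row must be right multiples and entries in its column left multiples, the reverse of your ``respectively''.)

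This cannot be repaired, because the second assertion is false for left chain rings that are not right chain rings. Let $F$ be a field, $K=F(t)$, $\sigma\colon K\to K$ the $F$-embedding with $\sigma(t)=t^{2}$, and $R=K[x;\sigma]/(x^{2})=\{a+bx\mid a,b\in K\}$ with $xa=\sigma(a)x$.
\begin{itemize}
\item The left ideals of $R$ are $\{0\}$, $Kx$, $R$, so $R$ is a left chain ring. However, the right ideals $xR=\sigma(K)x$ and $txR=t\sigma(K)x$ are incomparable.
\item In $M=R^{2}$, the submodule $B=R(x,tx)=\{(ax,atx)\mid a\in K\}$ has rank $1$.
\item Suppose $H=Rh$ were free of rank $1$ with $(x,tx)=rh$. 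Then $r$ is not a unit, since otherwise $h=r^{-1}(x,tx)$ is killed by $x$. So $r=bx$ with $b\in K$.
\item Writing $h=(p_1+q_1x,p_2+q_2x)$ gives $b\sigma(p_1)=1$ and $b\sigma(p_2)=t$. Hence $t=\sigma(p_2p_1^{-1})\in\sigma(K)=F(t^{2})$, which is absurd.
\end{itemize}
Your fallback does not help either. Lifting a basis of $(B+\mathfrak mM)/\mathfrak mM$ cannot produce $H$ even over a chain ring, since that quotient is $\{0\}$ whenever $B\subseteq\mathfrak mM$.

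So what you have proved is the lemma for (two-sided) chain rings, and under that hypothesis your argument is correct. The normal form works because $Rg=gR=\mathfrak m^{v}$ for every $g$ of valuation $v$, and the socle argument is sound. This is also the only case the paper needs: from Section 4.2 on, $R$ is a chain ring, and Proposition 4.6, which inherits the left-chain hypothesis, is only applied there.

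The paper itself gives no argument. It cites [19] and [22], which treat finite rings, and asserts that the results extend to infinite non-commutative rings. For finite rings a left chain ring is automatically a chain ring, since $\mathfrak m/\mathfrak m^{2}$ has $|R/\mathfrak m|$ elements and so is one-dimensional on both sides. The example above shows that the extension to infinite rings needs the two-sided hypothesis.

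Your proof is self-contained, and the characterization $\rank_R(X)=\len_R(\soc(X))$ is a nice device. It gives the intersection identity in one line, and it shows more than you used. For \emph{any} free $H\supseteq B$ with $\rank_R(H)=\rank_R(B)$, the inclusion $\soc(B)\subseteq\soc(H)$ between modules of equal length forces $\soc(H)\subseteq B$. Hence $\rank_R(B\cap V)=\rank_R(H\cap V)$ holds for every such $H$, not just the one coming from the normal form.
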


\setlength{\parindent}{2em}
Now consider a left $R$-module $M$, a right $R$-module $N$ and an \textit{$(R,R)$-bilinear map} $f:M\times N\longrightarrow R$, i.e., for any $u,v\in M$, $x,y\in N$ and $a,b\in R$, it holds that $f(u+v,x)=f(u,x)+f(v,x)$, $f(u,x+y)=f(u,x)+f(u,y)$, $f(au,xb)=af(u,x)b$. For any $A\subseteq M$, the \textit{right dual of $A$ with respect to $f$} is defined as
$$A^{\bot}\triangleq\{y\in N\mid \forall~x\in A:f(x,y)=0\},$$
and for any $B\subseteq N$, the \textit{left dual of $B$ with respect to $f$} is defined as
$${^{\bot}B}\triangleq\{x\in M\mid \forall~y\in B:f(x,y)=0\}.$$
Moreover, $f$ is said to be \textit{non-degenerate} if $M^{\bot}=\{0\}$ and ${^{\bot}N}=\{0\}$.

The following lemma will be used in Sections 4 and 5 for deriving Wei-type duality theorems.

\setlength{\parindent}{0em}
\begin{lemma}
{\bf{(1)}}\,\,Suppose that $R$ is quasi-Frobenius, $f$ is non-degenerate, and both $M$ and $N$ have a composition series. Then, for any $A\leqslant_{R}M$ and $B\leqslant_{R}N$, we have $^{\bot}(A^{\bot})=A$, $({^{\bot}B})^{\bot}=B$, $\len_{R}(A^{\bot})=\len_{R}(M)-\len_{R}(A)$ and $\len_R(A\cap({^{\bot}B}))-\len_R(A^{\bot}\cap B)=\len_{R}(A)-\len_{R}(B)$.

{\bf{(2)}}\,\,Suppose that $R$ is a chain ring, $f$ is non-degenerate, and both $M$ and $N$ are finitely generated free $R$-modules with $\rank_R(M)=\rank_R(N)=m$. Let $A\leqslant_{R}M$ and $B\leqslant_{R}N$ be free $R$-submodules. Then, $A^{\bot}$ is a free right $R$-submodule of $N$ with $\rank_R(A^{\bot})=m-\rank_R(A)$, ${^{\bot}B}$ is a free left $R$-submodule of $M$ with $\rank_R({^{\bot}B})=m-\rank_R(B)$, and it holds that
$$\rank_R(A\cap({^{\bot}B}))-\rank_R(A^{\bot}\cap B)=\rank_R(A)-\rank_R(B).$$
\end{lemma}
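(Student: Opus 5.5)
We prove the two parts separately, in each case reducing the identity to a length (resp. rank) count of a sum of submodules.

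\textbf{Part (1).} The starting point is the double annihilator property of quasi-Frobenius rings. Since $M$ has a composition series, we induct on $\len_R(M)$ to show $\len_R(A^{\bot})=\len_R(M)-\len_R(A)$ and ${^{\bot}(A^{\bot})}=A$. The key observation is that for a simple submodule $S\leqslant_R M$, the map $N\to\Hom_R(S,R)$, $y\mapsto f(\cdot,y)$, has kernel $S^{\bot}$. For a quasi-Frobenius ring, $\Hom_R(-,R)$ is exact and sends simple left modules to simple right modules; this follows from the self-injectivity of $R$, which in turn follows from the annihilator conditions in the definition. Non-degeneracy together with injectivity of $R$ then gives that $N/A^{\bot}$ embeds into $\Hom_R(A,R)$. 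Moreover, we can argue that $A\mapsto A^{\bot}$ reverses inclusions and is injective, using non-degeneracy via $M^{\bot}=0$ and ${^{\bot}N}=0$ applied to the induced pairings on quotients.

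A cleaner route is to first prove the inequalities $\len_R(A^{\bot})\geqslant\len_R(M)-\len_R(A)$ and $\len_R({^{\bot}B})\geqslant\len_R(N)-\len_R(B)$ by injecting $N/A^{\bot}\hookrightarrow\Hom_R(A,R)$ and using $\len(\Hom_R(A,R))=\len(A)$. Applying both inequalities to $A$ and to ${^{\bot}(A^{\bot})}\supseteq A$, together with the case $A=M$ and $B=N$ (which uses $\len M=\len N$, obtained from the two injections $M\hookrightarrow\Hom(N,R)$ and $N\hookrightarrow\Hom(M,R)$), forces equality everywhere and hence ${^{\bot}(A^{\bot})}=A$.

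For the last identity, note that $(A+{^{\bot}B})^{\bot}=A^{\bot}\cap B$. Hence
$$\len(A^{\bot}\cap B)=\len M-\len(A+{^{\bot}B})=\len M-\len A-\len({^{\bot}B})+\len(A\cap{^{\bot}B}).$$
Substituting $\len({^{\bot}B})=\len M-\len B$ gives the claim.

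\textbf{Part (2).} This part is parallel, with rank replacing length. We choose a basis and reduce to the standard form: a free submodule of a free module over a chain ring is a direct summand, since free submodules over a local ring with $\rank$ preserved split, by Nakayama. So $A$ has a complementary free submodule, and $f$ non-degenerate identifies $N$ with $\Hom_R(M,R)$. This identification uses finiteness of $R$'s principal ideals and the quasi-Frobenius property, which together give that an injective map between free modules of equal rank is bijective. Via the dual basis, $A^{\bot}$ is then free of rank $m-\rank A$.

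For the rank identity, we may use $\len_R$ if $R$ has finite length, since it is a chain ring and therefore quasi-Frobenius with a composition series. In that case, for free modules $\len=\rank\cdot\len(R)$. However, $A\cap{^{\bot}B}$ need not be free. We therefore use instead Lemma 2.4 to replace the intersection by a free module of the same rank, or work directly: let $e=\len({}_RR)$. For any finitely generated module $X$ over a chain ring, $\rank_R(X)=\len(X/\Jac(R)X)$ (the number of cyclic summands). Part (1) gives the length identity $\len(A\cap{^{\bot}B})-\len(A^{\bot}\cap B)=e(\rank A-\rank B)$, but we need ranks. So we apply part (1) to the socle or top layer instead: we pass to the pairing induced over $R/\Jac(R)$ after splitting off a common free summand.

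The main obstacle is exactly this step, converting the length identity into a rank identity for non-free intersections. My first attempt will be to apply the length identity of (1) to $A$ and $B$, and also to $\pi A$ and $\pi B$, where $\pi$ generates the maximal ideal, and compare consecutive layers. The goal is to show that the number of cyclic summands of $A\cap{^{\bot}B}$ and of $A^{\bot}\cap B$ differ layer by layer by the same constant.
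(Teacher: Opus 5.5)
Your argument has two gaps, and the one in (2) is the serious one. In (2) you never prove the rank identity. You name the passage from lengths to ranks as the obstacle and only announce a tentative layer comparison via $\pi A,\pi B$. The top layer is the wrong one to use, since $J(X\cap Y)\neq JX\cap JY$ in general (here $J=\Jac(R)=R\pi=\pi R$). The layer that works is the socle:
\begin{itemize}
\item Over a chain ring every finitely generated module is a direct sum of cyclic uniserials $R/J^{i}$. Hence $\rank_R(X)=\len_R(\soc X)$ with $\soc X=\{x\in X\mid \pi x=0\}$, and $\soc(X\cap Y)=\soc X\cap\soc Y$.
\item For free $A$ and $\ell=\len(R)$, one checks that $\soc A=\pi^{\ell-1}A$, that $\pi^{\ell-1}x=0$ iff $x\in JA$, and that $\pi^{\ell-1}x\in{^{\bot}B}$ iff $f(x,B)\subseteq J$ (the right annihilator of $\pi^{\ell-1}$ is $J$).
\item Hence the semilinear bijection $A/JA\to\soc A$, $\bar x\mapsto\pi^{\ell-1}x$, carries $\bar A\cap{^{\bot}\bar B}$ onto $\soc(A\cap{^{\bot}B})$. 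Here $\bar A,\bar B$ are the images in $M/JM$ and $N/NJ$, and orthogonality is for the induced pairing $M/JM\times N/NJ\to R/J$.
\item That pairing is non-degenerate because dual bases reduce to dual bases, and $\dim\bar A=\rank_R(A)$ since $A$ is a summand.
\end{itemize}
Doing the same on the right side, the rank identity becomes the division-ring case of (1). Separately, free submodules are not direct summands ``by Nakayama''; that fails for $2\mathbb{Z}_{(2)}\subseteq\mathbb{Z}_{(2)}$. They are summands because finitely generated free modules over the self-injective ring $R$ are injective. The complement is then projective over a local ring, hence free, and your dual-basis argument for $A^{\bot}$ goes through.

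In (1), your ``cleaner route'' does not force equality. The three facts $\len_R(A^{\bot})\geqslant\len_R(M)-\len_R(A)$, $\len_R({^{\bot}B})\geqslant\len_R(N)-\len_R(B)$ and $A\subseteq{^{\bot}(A^{\bot})}$ are all one-sided, and applying them to $A$ and ${^{\bot}(A^{\bot})}$ only produces more lower bounds. All of them hold for $R=M=N=k[x,y]/(x,y)^2$ with $f(a,b)=ab$, yet ${^{\bot}((x)^{\bot})}=(x,y)\neq(x)$. You need an upper bound, and it comes from non-degeneracy on the other side. The map $y\mapsto f(\cdot,y)$ embeds $A^{\bot}$ into $\Hom_R(M/A,R)$ with kernel $M^{\bot}=\{0\}$, so $\len_R(A^{\bot})\leqslant\len_R(M)-\len_R(A)$. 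Equivalently, $N\cong\Hom_R(M,R)$ by your length count, and restriction to $A$ is onto by self-injectivity. With the exact formula in hand, double orthogonality and your final computation via $(A+{^{\bot}B})^{\bot}=A^{\bot}\cap B$ are fine. For comparison, the paper proves none of this itself: it cites [1, Theorem 30.1] or [10, Theorem 58.8] for (1) and [19, Section 2] for (2).
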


\begin{proof}
(1) is a corollary of [1, Theorem 30.1] or [10, Theorem 58.8], and (2) follows from [19, Section 2]. We omit the details of the verification.
\end{proof}

\section{Galois connection and related coding-theoretic properties}
\setlength{\parindent}{2em}
In this section, we establish some results for Galois connections and apply these results to coding-theoretic scenarios.

\subsection{Security drop}
\setlength{\parindent}{2em}
It was shown in [34, Corollary A] that GHWs characterize the performance of code on the wire-tap channel of type II. An analogue of this result for rank metric code was established in [28, Theorem 42] by showing that GRWs measure the worst-case security drops of a Gabidulin code when it is employed in the scheme proposed by Silva, Kschischang and Koetterin \cite{31}, which can be used to secure a network communication against an eavesdropper. Both of these results may be illustrated by the following proposition.

\setlength{\parindent}{0em}
\begin{proposition}
Let $(k,m)\in\mathbb{N}^{2}$, and let $(\varphi,\psi)$ be a Galois connection between $[0,k]$ and $[0,m]$. Then, the following two statements hold:

{\bf{(1)}}\,\,For any $r\in [0,k-1]$, it holds that $\{b\in[0,m]\mid \psi(b)=r\}=[\varphi(r),\varphi(r+1)-1]$; in particular, $r\in\ran(\psi)$ if and only if $\varphi(r)+1\leqslant\varphi(r+1)$;

{\bf{(2)}}\,\,For any $l\in [1,m]$, it holds that $\{a\in[0,k]\mid \varphi(a)=l\}=[\psi(l-1)+1,\psi(l)]$; in particular, $l\in\ran(\varphi)$ if and only if $\psi(l-1)+1\leqslant\psi(l)$.
\end{proposition}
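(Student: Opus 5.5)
The plan is to use only the defining equivalence of a Galois connection, $\varphi(a)\leqslant b\Longleftrightarrow a\leqslant\psi(b)$, together with the monotonicity in Remark 2.1. The facts $\varphi(0)=0$ and $\psi(m)=k$ from Lemma 2.2 will handle the boundary cases.

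For (1), fix $r\in[0,k-1]$ and $b\in[0,m]$. Since $\psi(b)$ is an integer, $\psi(b)=r$ is equivalent to the conjunction of $r\leqslant\psi(b)$ and $\psi(b)<r+1$, that is, $r+1\not\leqslant\psi(b)$. Both $r$ and $r+1$ lie in $[0,k]$, the domain of $\varphi$. So the Galois connection converts the first condition into $\varphi(r)\leqslant b$ and the second into $\varphi(r+1)\not\leqslant b$, i.e. $b\leqslant\varphi(r+1)-1$. Hence $\{b\in[0,m]\mid\psi(b)=r\}=\{b\in[0,m]\mid \varphi(r)\leqslant b\leqslant\varphi(r+1)-1\}$. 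Because $\varphi$ takes values in $[0,m]$, both endpoints lie in the appropriate range, and this set equals $[\varphi(r),\varphi(r+1)-1]$. The ``in particular'' clause follows because $r\in\ran(\psi)$ holds exactly when this integer interval is nonempty, that is, when $\varphi(r)\leqslant\varphi(r+1)-1$.

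For (2), the argument is dual. Fix $l\in[1,m]$ and $a\in[0,k]$. Then $\varphi(a)=l$ is equivalent to $\varphi(a)\leqslant l$ together with $\varphi(a)\not\leqslant l-1$. Both $l$ and $l-1$ lie in $[0,m]$, the domain of $\psi$. The Galois connection therefore turns these conditions into $a\leqslant\psi(l)$ and $a\not\leqslant\psi(l-1)$, i.e. $a\geqslant\psi(l-1)+1$. Since $\psi$ maps into $[0,k]$, the solution set is exactly $[\psi(l-1)+1,\psi(l)]$. Nonemptiness of this interval gives the criterion $\psi(l-1)+1\leqslant\psi(l)$ for $l\in\ran(\varphi)$.

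There is no real obstacle; the whole proof is direct. The only care needed is to check that the shifted arguments stay in the right domains: $r+1\leqslant k$ because $r\leqslant k-1$, and $l-1\geqslant0$ because $l\geqslant1$. We also need to check that the intervals obtained are subsets of $[0,m]$ (respectively $[0,k]$), which follows from the codomains of $\varphi$ and $\psi$.
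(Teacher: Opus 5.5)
Your proof is correct and matches the paper's argument: both split $\psi(b)=r$ (resp.\ $\varphi(a)=l$) into two inequalities and convert each one through the defining equivalence of the Galois connection, applied at $r$ and $r+1$ (resp.\ at $l$ and $l-1$). The facts $\varphi(0)=0$ and $\psi(m)=k$ that you announce at the start are never actually used, so you can drop them.
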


\begin{proof}
{\bf{(1)}}\,\,For any $b\in[0,m]$, from $r\leqslant\psi(b)\Longleftrightarrow\varphi(r)\leqslant b$ and $\psi(b)\leqslant r\Longleftrightarrow b\leqslant \varphi(r+1)-1$, we deduce that $\psi(b)=r$ if and only if $\varphi(r)\leqslant b\leqslant \varphi(r+1)-1$, as desired.

{\bf{(2)}}\,\,For any $a\in[0,k]$, from $\varphi(a)\leqslant l\Longleftrightarrow a\leqslant \psi(l)$ and $l\leqslant \varphi(a)\Longleftrightarrow\psi(l-1)+1\leqslant a$, we deduce that $\varphi(a)=l$ if and only if $\psi(l-1)+1\leqslant a\leqslant \psi(l)$, as desired.
\end{proof}

\setlength{\parindent}{0em}
\begin{example}
{\bf{(1)}}\,\,Following Example 2.1 and consider a $k$-dimensional linear code $C\subseteq\mathbb{F}^{m}$. By (2) of Proposition 3.2, for any $l\in [1,m]$, $\mathbf{K}_{l-1}(C)+1\leqslant\mathbf{K}_l(C)$ if and only if $l=\mathbf{d}_r(C)$ for some $r\in[1,k]$. Alternatively speaking, the DLPs of $C$ increase at exactly the GHWs of $C$. When $C^{\bot}$ is employed in the wire-tap channel of type II, $\mathbf{K}_l(C)$ measures the information obtained
by the adversary with $l$ taps. Hence the drops of the adversary's equivocation occurs at exactly the GHWs of $C$, as has been shown in [34, Corollary A].

{\bf{(2)}}\,\,Following Example 2.3 and consider a Gabidulin code $C\subseteq\mathbb{E}^{[m]}$ with $\dim_{\mathbb{E}}(C)=k$. By (2) of Proposition 3.2, for any $l\in [1,m]$, $\mathbf{K}_{l-1}^{G}(C)+1\leqslant\mathbf{K}_l^{G}(C)$ if and only if $l=\mathbf{d}_r^{G}(C)$ for some $r\in[1,k]$. Alternatively speaking, the DIPs of $C$ increases at exactly the GRWs of $C$. When $C$ is employed in the rank-metric coding scheme  proposed in \cite{31}, the information that an eavesdropper can obtain listening at $l$ links of the channel is bounded by $\mathbf{K}_l^{G}(C)$. Moreover, following [28, Definition 41], the worst case security drops of $C$ are those $l's$ between $1$ and $m$ satisfying  $\mathbf{K}_{l-1}^{G}(C)+1\leqslant\mathbf{K}_l^{G}(C)$, which are exactly the GRWs of $C$, as has been shown in [28, Theorem 42].
\end{example}

\subsection{Relations between generalized weights of Gabidulin and Delsarte codes}

\setlength{\parindent}{2em}
It was shown in [28, Theorem 28] that GRWs of a Gabidulin code are determined by the Delsarte generalized weights of its associated Delsarte code and vice versa. An analogue of this result for GMWs (and the more general notion of relative generalized matrix weights or RGMWs, see [26, Definition 10]) was established in [26, Theorem 7]. Both of these results may be illustrated by the following proposition.

\setlength{\parindent}{0em}
\begin{proposition}
Let $k\in\mathbb{N}$, $Q\subseteq\mathbb{Z}$, and let $(\varphi,\psi)$ be a Galois connection between $[0,k]$ and $Q$. Fix $\theta\in\mathbb{Z}^{+}$. Define $f:[0,\theta k]\longrightarrow Q$ as $f(a)=\varphi\left(\left\lceil\frac{a}{\theta}\right\rceil\right)$, and define $g:Q\longrightarrow [0,\theta k]$ as $g(b)=\theta\psi(b)$. Then, $(f,g)$ is a Galois connection between $[0,\theta k]$ and $Q$. Moreover, if $h:[0,\theta k]\longrightarrow Q$ and $(h,g)$ is a Galois connection between $[0,\theta k]$ and $Q$, then we have $h(a)=\varphi\left(\left\lceil\frac{a}{\theta}\right\rceil\right)$ for all $a\in[0,\theta k]$.
\end{proposition}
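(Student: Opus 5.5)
The plan is to verify the defining condition of Definition 2.1 directly for $(f,g)$, and then obtain the uniqueness statement from Remark 2.1.

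First, check that $f$ and $g$ are well-defined maps between the stated sets. For $a\in[0,\theta k]$, we have $\lceil a/\theta\rceil\in[0,k]$, since $0\leqslant a\leqslant\theta k$ and $\theta\geqslant1$. Hence $f(a)=\varphi(\lceil a/\theta\rceil)\in Q$. For $b\in Q$, we have $\psi(b)\in[0,k]$, so $g(b)=\theta\psi(b)\in[0,\theta k]$.

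Next, fix $(a,b)\in[0,\theta k]\times Q$ and write $c=\lceil a/\theta\rceil$. Since $(\varphi,\psi)$ is a Galois connection,
$$f(a)\leqslant b\Longleftrightarrow\varphi(c)\leqslant b\Longleftrightarrow c\leqslant\psi(b).$$
It then suffices to show $c\leqslant\psi(b)\Longleftrightarrow a\leqslant\theta\psi(b)$. This is the standard ceiling fact: for an integer $n$ and $\theta\in\mathbb{Z}^{+}$, $\lceil a/\theta\rceil\leqslant n$ if and only if $a/\theta\leqslant n$, because $\lceil x\rceil$ is the least integer $\geqslant x$. Multiplying by $\theta>0$ gives $a\leqslant\theta n$. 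Taking $n=\psi(b)$ yields $f(a)\leqslant b\Longleftrightarrow a\leqslant g(b)$, so $(f,g)$ is a Galois connection.

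For the uniqueness claim, suppose $(h,g)$ is also a Galois connection. By Remark 2.1 (citing [11, 7.33]), the lower adjoint is determined by the upper one via
$$h(a)=\min\{b\in Q\mid a\leqslant g(b)\}.$$
The same formula holds for $f$, so $h(a)=f(a)=\varphi(\lceil a/\theta\rceil)$ for all $a$. If one prefers not to rely on the cited formula, a direct argument works. For each $a$, applying the Galois condition for $(h,g)$ with $b=h(a)$ gives $a\leqslant g(h(a))$, hence $f(a)\leqslant h(a)$ by the condition for $(f,g)$. Symmetrically, applying the condition for $(f,g)$ with $b=f(a)$ gives $a\leqslant g(f(a))$, hence $h(a)\leqslant f(a)$.

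No step is a real obstacle; the only points needing care are the ceiling equivalence and the well-definedness of $f$, i.e. that $\lceil a/\theta\rceil$ lands in $[0,k]$.
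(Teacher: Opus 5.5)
Your proposal is correct and follows the paper's proof essentially verbatim: the same chain of equivalences $f(a)\leqslant b\Longleftrightarrow\lceil a/\theta\rceil\leqslant\psi(b)\Longleftrightarrow a\leqslant\theta\psi(b)$ via the integrality of $\psi(b)$, with uniqueness taken from the ``moreover'' part of Remark 2.1. Your explicit well-definedness check and the self-contained two-sided argument $f(a)\leqslant h(a)\leqslant f(a)$ are valid additions that the paper leaves implicit.
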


\begin{proof}
Let $a\in[0,\theta k]$ and $b\in Q$. Noticing that $\psi(b)$ is an integer, we have
$$f(a)\leqslant b\Longleftrightarrow\varphi\left(\left\lceil\frac{a}{\theta}\right\rceil\right)\leqslant b\Longleftrightarrow\left\lceil\frac{a}{\theta}\right\rceil\leqslant\psi(b)\Longleftrightarrow\frac{a}{\theta}\leqslant\psi(b)\Longleftrightarrow a\leqslant\theta\psi(b)\Longleftrightarrow a\leqslant g(b),$$
which implies that $(f,g)$ is a Galois connection between $[0,\theta k]$ and $Q$, as desired. Finally, the ''moreover'' assertion follows from the ``moreover'' part of Remark 2.1.
\end{proof}

\setlength{\parindent}{2em}
\begin{example}
Let $\mathbb{E}/\mathbb{F}$ be a finite dimensional field extension with $\dim_{\mathbb{F}}(\mathbb{E})=n$, and fix an ordered $\mathbb{F}$-basis $(\tau_1,\dots,\tau_n)$ of $\mathbb{E}$. Let $m\in\mathbb{Z}^{+}$. For any $\alpha\in\mathbb{E}^{[m]}$, define $\mathcal{M}(\alpha)\in\Mat_{m,n}(\mathbb{F})$ as $\alpha_i=\sum_{j=1}^{n}\mathcal{M}(\alpha)_{(i,j)}\tau_j$ for all $i\in[1,m]$. Now for an $k$-dimensional Gabidulin code $C\subseteq\mathbb{E}^{[m]}$, the Delsarte code associated to $C$ is defined as $Q\triangleq\{\mathcal{M}(\alpha)\mid\alpha\in C\}$. By Examples 2.3 and 2.4, $(\mathbf{d}_r^{G}(C)\mid r\in[0,k])$ and $(\mathbf{K}_l^{G}(C)\mid l\in[0,m])$ form a Galois connection between $[0,k]$ and $[0,m]$, and $(\mathbf{d}_a^{M}(Q)\mid a\in[0,nk])$ and $(\mathbf{K}_l^{M}(Q)\mid l\in[0,m])$ form a Galois connection between $[0,nk]$ and $[0,m]$. Moreover, it has been shown in [26, Theorem 7] that $\mathbf{K}_l^{M}(Q)=n\mathbf{K}_l^{G}(C)$ for all $l\in[0,m]$ and $\mathbf{d}_a^{M}(Q)=\mathbf{d}_{_{\left\lceil\frac{a}{n}\right\rceil}}^{G}(C)$ for all $a\in[0,nk]$, where the latter can be derived from the former one and Proposition 3.2.
\end{example}

\subsection{Singleton bound}
\setlength{\parindent}{2em}
Throughout the rest of this section, let $(k,m)\in\mathbb{N}^{2}$, $w\in\mathbb{Z}^{+}$, and let $(\varphi,\psi)$ be a Galois connection between $[0,k]$ and $[0,m]$ such that $\psi(0)=0$ and
\begin{equation}\text{$\psi(l)-\psi(l-1)\leqslant w$ for all $l\in[1,m]$}. \end{equation}
Moreover, define $\eta:[0,m]\longrightarrow[0,wm-k]$ as
\begin{equation}\eta(l)=\psi(m-l)+wl-k,\end{equation}
and let $\tau:[0,wm-k]\longrightarrow[0,m]$ such that $(\tau,\eta)$ is a Galois connection between $[0,wm-k]$ and $[0,m]$.

In this subsection, we establish a Singleton bound for $k$, as detailed in the following proposition.

\setlength{\parindent}{0em}
\begin{proposition}
Suppose that $k\geqslant1$ and $d=\varphi(1)$. Then, for any $l\in[d-1,m]$, it holds that
\begin{equation}k-w(m-l)\leqslant\psi(l)\leqslant w(l-d+1).\end{equation}
In particular, we have
\begin{equation}k\leqslant w(m-d+1).\end{equation}
\end{proposition}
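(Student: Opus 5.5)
The plan is to derive both inequalities in (3.3) from the increment bound (3.1), telescoped in two directions from fixed anchor values of $\psi$. The anchors come from Lemma 2.2 and (3) of Lemma 2.3. Lemma 2.2 gives $\psi(m)=k$. Since $k\geqslant1$, (3) of Lemma 2.3 gives $d=\varphi(1)\geqslant1$ and $\{b\in[0,m]\mid\psi(b)=0\}=[0,d-1]$. In particular $\psi(d-1)=0$, and $d-1\in[0,m]$ since $\varphi(1)\leqslant m$. So the range $l\in[d-1,m]$ is nonempty and makes sense.

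For the upper bound, fix $l\in[d-1,m]$ and sum (3.1) over $j\in[d,l]$:
$$\psi(l)=\psi(d-1)+\sum_{j=d}^{l}\bigl(\psi(j)-\psi(j-1)\bigr)\leqslant0+w(l-d+1).$$
For the lower bound, sum (3.1) over $j\in[l+1,m]$:
$$k=\psi(m)=\psi(l)+\sum_{j=l+1}^{m}\bigl(\psi(j)-\psi(j-1)\bigr)\leqslant\psi(l)+w(m-l),$$
which rearranges to $k-w(m-l)\leqslant\psi(l)$. This establishes (3.3).

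For (3.4), I would take $l=m$ in the upper bound, which gives $k=\psi(m)\leqslant w(m-d+1)$. Alternatively, take $l=d-1$ in the lower bound, which gives $k-w(m-d+1)\leqslant\psi(d-1)=0$. Either choice yields the same inequality.

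There is no real obstacle here. The only point needing care is citing Lemma 2.3(3) correctly to get $\psi(d-1)=0$. If one wanted to avoid that lemma, the same fact follows directly from the Galois connection: for $b<\varphi(1)$ we cannot have $1\leqslant\psi(b)$, so $\psi(b)=0$.
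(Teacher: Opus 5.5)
Your proof is correct and matches the paper's: the upper bound comes from $\psi(d-1)=0$ together with (3.1), and (3.4) comes from setting $l=m$, exactly as in the paper. The only cosmetic difference is in the lower bound. The paper cites $\eta(m-l)=\psi(l)+w(m-l)-k\geqslant 0$ from Lemma 2.3, whereas you telescope (3.1) down from $\psi(m)=k$, which is precisely the computation behind that nonnegativity.
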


\begin{proof}
For any $l\in[d-1,m]$, from $\eta(m-l)=\psi(l)+w(m-l)-k\geqslant0$, we deduce that $k-w(m-l)\leqslant\psi(l)$; moreover, from $\psi(d-1)=0$ and (3.1), we deduce that $\psi(l)\leqslant w(l-(d-1))=w(l-d+1)$, as desired. Finally, (3.3) implies that $k=\psi(m)\leqslant w(m-d+1)$, as desired.
\end{proof}

\setlength{\parindent}{2em}
Now we show that if equality holds in (3.4), then $(\varphi,\psi)$ is uniquely determined.

\begin{theorem}
Suppose that $k\geqslant1$, $d=\varphi(1)$ and $k=w(m-d+1)$. Then, we have $\psi(l)=w(l-d+1)$ for all $l\in[d-1,m]$, and $\varphi(a)=\left\lceil\frac{a}{w}\right\rceil+d-1$ for all $a\in[1,k]$.
\end{theorem}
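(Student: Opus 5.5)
The plan is to pin down $\psi$ on $[d-1,m]$ by squeezing it between the two bounds of Proposition 3.3. Then we recover $\varphi$ from $\psi$ via Proposition 3.1(2), or equivalently via the formula in Remark 2.1.

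\textbf{Computing $\psi$.} Fix $l\in[d-1,m]$. By (3.3) we have
$$k-w(m-l)\leqslant\psi(l)\leqslant w(l-d+1).$$
Substitute the hypothesis $k=w(m-d+1)$ into the left-hand side. It becomes
$$w(m-d+1)-w(m-l)=w(l-d+1),$$
which equals the right-hand side. Hence $\psi(l)=w(l-d+1)$ for all $l\in[d-1,m]$. Note that $d\geqslant1$ by Lemma 2.3(3), so $d-1\in[0,m]$.

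We also record the values below $d-1$. Lemma 2.3(3) gives $\psi(b)=0$ for $b\in[0,d-1]$. This matches the formula at $l=d-1$, which is the only overlap.

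\textbf{Computing $\varphi$.} Let $a\in[1,k]$. Remark 2.1 gives $\varphi(a)=\min\{b\in[0,m]\mid a\leqslant\psi(b)\}$. Since $a\geqslant1$, every $b\leqslant d-1$ has $\psi(b)=0<a$, so the minimum is attained at some $b\geqslant d$. For such $b$ the condition reads $a\leqslant w(b-d+1)$, i.e. $b\geqslant \frac{a}{w}+d-1$. Since $b$ is an integer, this is equivalent to $b\geqslant\left\lceil\frac{a}{w}\right\rceil+d-1$.

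Two checks are needed for this candidate to be the minimum.
\begin{itemize}
\item It satisfies $b\geqslant d$, because $\left\lceil\frac{a}{w}\right\rceil\geqslant1$.
\item It is at most $m$, because $a\leqslant k=w(m-d+1)$ gives $\left\lceil\frac{a}{w}\right\rceil\leqslant m-d+1$.
\end{itemize}
Hence $\varphi(a)=\left\lceil\frac{a}{w}\right\rceil+d-1$.

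Alternatively, Proposition 3.1(2) reaches the same conclusion. It says $\varphi(a)=l$ exactly when $\psi(l-1)<a\leqslant\psi(l)$, and with $\psi(l)=w(l-d+1)$ that interval is $(w(l-d),\,w(l-d+1)]$.

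\textbf{Main obstacle.} There is no real obstacle; the argument is a direct computation. The only care required is the boundary bookkeeping: confirming $d\geqslant1$ so that $d-1$ lies in the domain, and keeping $\left\lceil a/w\right\rceil+d-1$ inside $[d,m]$. Both checks are handled above.
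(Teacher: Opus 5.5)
Your proof is correct and follows the paper's argument: squeeze $\psi(l)$ between the two bounds of (3.3), then recover $\varphi$ from $\psi$ through the Galois connection. The paper uses Proposition 3.1(2), which is your alternative route, rather than the min-formula of Remark 2.1, but these amount to the same step.
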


\begin{proof}
For any $l\in[d-1,m]$, from $k-w(m-l)=w(l-d+1)$ and (3.3), we deduce that $\psi(l)=w(l-d+1)$, as desired. Next, consider $a\in[1,k]$. Let $b\triangleq\varphi(a)\geqslant d$. By (2) of Proposition 3.1, we have $\psi(b-1)+1\leqslant a\leqslant\psi(b)$, which implies that $w(b-d)+1\leqslant a\leqslant w(b-d+1)$. It then follows that $\lceil\frac{a}{w}\rceil=b-d+1$ and hence $\varphi(a)=b=\lceil\frac{a}{w}\rceil+d-1$, as desired.
\end{proof}

\setlength{\parindent}{2em}
In the following example, we use (3.4) to recover Singleton bounds for codes endowed with various metrics.

\setlength{\parindent}{0em}
\begin{example}
{\bf{(1)}}\,\,Following (1) of Example 2.5 and consider a linear code $C\subseteq\mathbb{F}^{\Omega}$ with $\dim_{\mathbb{F}}(C)=k\geqslant1$. Then, $d$ becomes the minimal $\mathbf{P}$-weight of $C$ (see \cite{20}), and (3.4) recovers the Singleton bound for $\mathbf{P}$-codes, which has been established in [20, Corollary 2.2]. When $k=m-d+1$, then $C$ is an MDS $\mathbf{P}$-code (see [20, Definition 2.3]). Theorem 3.1 shows that the generalized $\mathbf{P}$-weights and profiles of an MDS $\mathbf{P}$-code is completely determined by the code length $m$ and dimension $k$. In particular, if $\mathbf{P}$ is an anti-chain, then $d$ is equal to the minimal Hamming weight of $C$, and (3.4) recovers the classical Singleton bound.

{\bf{(2)}}\,\,Following (2) of Example 2.5 and consider a Gabidulin code $C\subseteq\mathbb{E}^{[m]}$ with $\dim_{\mathbb{E}}(C)=k\geqslant1$. Then, $d$ is equal to the minimal rank weight of $C$ (see [17, Definition 3.3]), and (3.4) becomes the Singleton bound for Gabidulin codes (see [17, Remark 3.6]). If $k=m-d+1$, then $C$ is an MRD code (see [17, Definition 3.8]), whose GRWs and DIPs are completely determined by $m$ and $k$, as has been shown in Theorem 3.1.

{\bf{(3)}}\,\,Following (3) of Example 2.5 and consider a Delsarte code $C\subseteq\Mat_{m,n}(\mathbb{F})$ with $\dim_{\mathbb{F}}(C)=k\geqslant1$. Then, $d$ is equal to the minimal rank weight of $C$ (see [17, Definition 3.1]), and (3.4) becomes the Singleton bound for Delsarte codes (cf. [17, Theorem 3.5]). If $k=n(m-d+1)$, then $C$ is an MRD Delsarte code (see [17, Definition 3.8]), whose GMWs and DRPs are completely determined by $(m,n,k)$, as has been shown in Theorem 3.1.
\end{example}

\subsection{Generalized Singleton bound}
\setlength{\parindent}{2em}
Now we establish a generalized Singleton bound for $\varphi$. The following theorem is the main result of this subsection.

\setlength{\parindent}{0em}
\begin{theorem}
{\bf{(1)}}\,\,For any $r\in[0,k-w]$, it holds that $\varphi(r)+1\leqslant\varphi(r+w)$.

{\bf{(2)}}\,\,For any $a,c\in[0,k]$ with $a\leqslant c$, it holds that
\begin{equation}\varphi(a)+\left\lfloor\frac{c-a}{w}\right\rfloor\leqslant\varphi(c).\end{equation}

{\bf{(3)}}\,\,For any $a\in[0,k]$, it holds that
\begin{equation}\varphi(a)\leqslant m-\left\lfloor\frac{k-a}{w}\right\rfloor.\end{equation}

{\bf{(4)}}\,\,Let $a\in[0,k]$ such that $\varphi(a)=m-\left\lfloor\frac{k-a}{w}\right\rfloor$, and let $c\in[a,k]$ with $c\equiv a~(\bmod~w)$. Then, it holds that $\varphi(c)=m-\left\lfloor\frac{k-c}{w}\right\rfloor$.
\end{theorem}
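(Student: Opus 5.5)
Part (1) follows directly from Lemma 2.2. Standing assumption (3.1) is exactly statement (1) of Lemma 2.2, so statement (2) of that lemma gives $\varphi(r)+1\leqslant\max\{\varphi(r+w),1\}$ for all $r\in[0,k-w]$. If $\varphi(r+w)\geqslant1$ this is the claim. Otherwise $\varphi(r+w)=0$; monotonicity (Remark 2.1) then forces $\varphi(r)=0$, and the inequality reads $1\leqslant1$, which is useless.

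So we must exclude $\varphi(r+w)=0$ with $r+w\geqslant w\geqslant1$. By the Galois property, $\varphi(r+w)\leqslant0$ is equivalent to $r+w\leqslant\psi(0)=0$, which is impossible since $r+w\geqslant1$. Hence $\varphi(r+w)\geqslant1$, and (1) follows.

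For (2), we iterate (1). Write $c-a=qw+s$ with $q=\lfloor (c-a)/w\rfloor$ and $0\leqslant s<w$. Applying (1) successively at $r=a,a+w,\dots,a+(q-1)w$ (each satisfies $r\leqslant k-w$ because $a+qw\leqslant c\leqslant k$) gives $\varphi(a)+q\leqslant\varphi(a+qw)$. Monotonicity then gives $\varphi(a+qw)\leqslant\varphi(c)$. For (3), we take $c=k$ in (2) and use $\varphi(k)\leqslant m$, since $\varphi$ maps into $[0,m]$.

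For (4), write $c=a+qw$, so that $\lfloor (c-a)/w\rfloor=q$ and $\lfloor (k-c)/w\rfloor=\lfloor (k-a)/w\rfloor-q$ exactly, because $k-c=(k-a)-qw$. By (2) and the hypothesis,
$$\varphi(c)\geqslant\varphi(a)+q=m-\left\lfloor\frac{k-a}{w}\right\rfloor+q=m-\left\lfloor\frac{k-c}{w}\right\rfloor,$$
while (3) gives the reverse inequality, so equality holds. The only step needing care is the boundary case in (1), where the $\max\{\cdot,1\}$ of Lemma 2.2 must be removed using $\psi(0)=0$; everything else is bookkeeping with floors.
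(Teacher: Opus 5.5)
Your proof is correct and follows the paper's argument step for step: (1) from Lemma 2.2 once $\varphi(r+w)\geqslant1$ is known, (2) by iterating (1) up to $a+\lfloor\frac{c-a}{w}\rfloor w$ and then using monotonicity, (3) by taking $c=k$, and (4) by combining (2) with (3). Your justification of $\varphi(r+w)\geqslant1$, namely that $\varphi(r+w)\leqslant0$ would force $r+w\leqslant\psi(0)=0$, makes explicit the use of $\psi(0)=0$ that the paper leaves implicit in ``by $r+w\geqslant1$''.
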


\begin{proof}
{\bf{(1)}}\,\,For any $r\in[0,k-w]$, by $r+w\geqslant1$, we have $\varphi(r+w)\geqslant1$, which, together with (3.1) and Lemma 2.2, implies that $\varphi(r)+1\leqslant\max\{\varphi(r+w),1\}=\varphi(r+w)$, as desired.

{\bf{(2)}}\,\,By $a\leqslant a+\left\lfloor\frac{c-a}{w}\right\rfloor w\leqslant c$ and (1), we have $\varphi(a)+\left\lfloor\frac{c-a}{w}\right\rfloor\leqslant\varphi(a+\left\lfloor\frac{c-a}{w}\right\rfloor w)\leqslant\varphi(c)$, as desired.

{\bf{(3)}}\,\,By (2), we have $\varphi(a)+\lfloor\frac{k-a}{w}\rfloor\leqslant\varphi(k)\leqslant m$, which establishes (3.6).

{\bf{(4)}}\,\,From (2) and $\frac{c-a}{w}\in\mathbb{N}$, we deduce that $\varphi(c)\geqslant\varphi(a)+\frac{c-a}{w}=m-\left\lfloor\frac{k-c}{w}\right\rfloor$; moreover, by (3), we have $\varphi(c)\leqslant m-\left\lfloor\frac{k-c}{w}\right\rfloor$, which further establishes the desired result.
\end{proof}

\setlength{\parindent}{2em}
We note that (1) of Theorem 3.2 establishes monotonicity of $\varphi$, and inequality (3.6), which follows from the more general inequality (3.5), may be regarded as the generalized Singleton bound for $\varphi$. When $k\geqslant1$, (3.6) recovers the Singleton bound (3.4) by setting $a=1$.

In the following example, we use Theorem 3.2 to recover several generalized Singleton bounds in coding theory.

\setlength{\parindent}{0em}
\begin{example}
Following (1) of Example 2.5, and consider a linear code $C\subseteq\mathbb{F}^{\Omega}$ with $\dim_{\mathbb{F}}(C)=k$. Then, (1) of Theorem 3.2 implies that $\mathbf{d}_r^{\mathbf{P}}(C)+1\leqslant\mathbf{d}_{r+1}^{\mathbf{P}}(C)$ for all $r\in[0,k-1]$, and (3.6) implies that $\mathbf{d}_r^{\mathbf{P}}(C)\leqslant m-k+r$ for all $r\in[0,k]$. These recover the monotonicity and generalized Singleton bound of generalized $\mathbf{P}$-weights, respectively (see [27, Propositions 1 and 2]). In particular, if $\mathbf{P}$ is an anti-chain, then we obtain the well-known monotonicity and generalized Singleton bound for GHWs (see [34, Theorem 1 and Corollary 1]). Similarly, following (2) and (3) of Example 2.5, (1) and (3) of Theorem 3.2 recover the monotonicity and generalized Singleton bound for GRWs of Gabidulin codes ([13, Theorem I.2 and Corollary II.4]) as well as those for GMWs of Delsarte codes (see [26, Proposition 43 and Theorem 5]), respectively.
\end{example}

\setlength{\parindent}{2em}
The following theorem gives monotonicity and generalized Singleton bound for $\tau$. The proof is similar to that of Theorem 3.2 and thus omitted.

\setlength{\parindent}{0em}
\begin{theorem}
We have $\tau(c)+1\leqslant\tau(c+w)$ for all $c\in[0,w(m-1)-k]$, and $\tau(a)\leqslant \left\lceil\frac{k+a}{w}\right\rceil$ for all $a\in[0,wm-k]$.
\end{theorem}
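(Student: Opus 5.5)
Mirroring Theorem 3.2, I would work with the Galois connection $(\tau,\eta)$ between $[0,wm-k]$ and $[0,m]$ in place of $(\varphi,\psi)$, and feed it into Lemma 2.2.

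First, by (1) of Lemma 2.3, $\eta(0)=0$ and $0\leqslant\eta(l)-\eta(l-1)\leqslant w$ for all $l\in[1,m]$. So $\eta$ satisfies the same increment condition (3.1) that $\psi$ does. Applying Lemma 2.2 to $(\tau,\eta)$, with $k$ replaced by $wm-k$, gives $\tau(c)+1\leqslant\max\{\tau(c+w),1\}$ for all $c\in[0,wm-k-w]=[0,w(m-1)-k]$. Since $c+w\geqslant1$, Lemma 2.2 also gives $\tau(0)=0$, and monotonicity of $\tau$ (Remark 2.1) then forces $\tau(c+w)\geqslant1$. Here I still need $\tau(s)\geqslant1$ for $s\geqslant1$. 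This holds because $\tau(s)=0$ would imply $s\leqslant\eta(0)=0$ by the Galois property. Hence the maximum equals $\tau(c+w)$, which proves the first claim.

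For the bound, I would not iterate the monotonicity; it is easier to go directly through $\eta$. Fix $a\in[0,wm-k]$ and set $b=\lceil\frac{k+a}{w}\right\rceil$ written with matching delimiters as $\left\lceil\frac{k+a}{w}\right\rceil$. Since $k+a\leqslant wm$, we have $b\leqslant m$, so $b\in[0,m]$. By the Galois property, $\tau(a)\leqslant b$ is equivalent to $a\leqslant\eta(b)=\psi(m-b)+wb-k$. Because $\psi\geqslant0$ and $wb\geqslant k+a$, the right-hand side is at least $wb-k\geqslant a$, which is what we need.

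As a cross-check, one can also argue from $\tau(wm-k)\leqslant m$ by iterating the monotonicity downward, as in (2)–(3) of Theorem 3.2. This gives $\tau(a)\leqslant m-\lfloor\frac{wm-k-a}{w}\rfloor=\lceil\frac{k+a}{w}\rceil$, the same bound.

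The only delicate point is confirming that $\tau(c+w)\geqslant1$, so that the maximum in Lemma 2.2 collapses. This follows from $\eta(0)=0$ as explained above.
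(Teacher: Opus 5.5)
Your proof is correct. The first claim is handled as the paper intends; the paper omits the proof as ``similar to that of Theorem 3.2''. You apply Lemma 2.2 to $(\tau,\eta)$ with the increment bound from (1) of Lemma 2.3. You then collapse the maximum using $\tau(s)\geqslant1$ for $s\geqslant1$, which does follow from $\tau(s)\leqslant0\Longleftrightarrow s\leqslant\eta(0)=0$. Your intermediate remark that $\tau(0)=0$ plus monotonicity ``forces'' $\tau(c+w)\geqslant1$ is not a valid reason on its own and can be deleted.

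For the bound, the paper's route is the one you list as a cross-check. One iterates monotonicity as in (2)--(3) of Theorem 3.2 to get $\tau(a)+\lfloor\frac{wm-k-a}{w}\rfloor\leqslant\tau(wm-k)\leqslant m$, and then rewrites $m-\lfloor\frac{wm-k-a}{w}\rfloor=\lceil\frac{k+a}{w}\rceil$.

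Your main argument is genuinely different and more elementary. Testing the Galois inequality at $b=\lceil\frac{k+a}{w}\rceil$ shows that the bound is just $\psi(m-b)\geqslant0$ transported through (3.2). It therefore uses neither the increment bound on $\eta$ nor the first claim. It is the dual counterpart of the left inequality in (3.3), which came from $\eta\geqslant0$.

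What the iterative route gives in exchange is the more general intermediate inequality $\tau(a)+\lfloor\frac{c-a}{w}\rfloor\leqslant\tau(c)$ for all $a\leqslant c$, the analogue of (3.5). That inequality is what equality-propagation statements like (4) of Theorem 3.2 require.
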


\setlength{\parindent}{2em}
We end this subsection with the following proposition, which improves (3.4) and gives sufficient conditions for $\varphi$ and $\tau$ to obtain the generalized Singleton bounds.

\setlength{\parindent}{0em}
\begin{proposition}
{\bf{(1)}}\,\,Suppose that $k\leqslant wm-1$ and $e=\tau(1)$. Then, for any $t\in[m+1-e,m]$, it holds that
\begin{equation}\psi(t)=k-w(m-t).\end{equation}
In particular, it holds that
\begin{equation}k=w(e-1)+\psi(m-e+1).\end{equation}
Moreover, for any $a\in[0,k]$ with $\varphi(a)\geqslant m+1-e$, it holds that
\begin{equation}\varphi(a)=m-\left\lfloor\frac{k-a}{w}\right\rfloor.\end{equation}

{\bf{(2)}}\,\,Suppose that $k\geqslant1$ and $d=\varphi(1)$. Then, for any $l\in[m-d+1,m]$, it holds that
\begin{equation}\eta(l)=wl-k.\end{equation}
In particular, it holds that
\begin{equation}k=w(m-d+1)-\eta(m-d+1).\end{equation}
Moreover, for any $a\in[0,wm-k]$ with $\tau(a)\geqslant m-d+1$, it holds that
\begin{equation}\tau(a)=\left\lceil\frac{a+k}{w}\right\rceil.\end{equation}

\end{proposition}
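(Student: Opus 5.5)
We prove (1) in detail; (2) is the mirror image, obtained by running the same argument with the roles of $(\varphi,\psi)$ and $(\tau,\eta)$ exchanged.

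\textbf{Step 1 (the identity for $\psi$).} By (4) of Lemma 2.3, $\{l\in[0,m]\mid\eta(l)=0\}=[0,e-1]$. For $t\in[m+1-e,m]$ we have $m-t\in[0,e-1]$, so $\eta(m-t)=0$. Substituting into the definition (3.2), $\eta(m-t)=\psi(t)+w(m-t)-k$, gives $\psi(t)=k-w(m-t)$, which is (3.7). Taking $t=m-e+1$ and rearranging yields (3.8).

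\textbf{Step 2 (the formula for $\varphi$).} Let $a\in[0,k]$ with $b\triangleq\varphi(a)\geqslant m+1-e$. There are two cases.
\begin{itemize}
\item If $b\geqslant1$, then (2) of Proposition 3.1 gives $\psi(b-1)+1\leqslant a\leqslant\psi(b)$.
\item If $b=0$, then $m+1-e\leqslant0$, so $e\geqslant m+1$, which contradicts $\tau(1)\leqslant m$. Hence $b\geqslant1$ always, and $a\geqslant1$ is not needed separately.
\end{itemize}
Since $b\geqslant m+1-e$, Step 1 gives $\psi(b)=k-w(m-b)$, so $a\leqslant k-w(m-b)$. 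This means $m-b\leqslant (k-a)/w$, i.e. $b\geqslant m-\lfloor (k-a)/w\rfloor$. The reverse inequality $b\leqslant m-\lfloor (k-a)/w\rfloor$ is (3.6) of Theorem 3.2. Together these give (3.9). We do not need $\psi(b-1)$ at all, which avoids handling the case $b-1<m+1-e$.

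\textbf{Step 3 (part (2)).} By (3) of Lemma 2.3, $\psi(l')=0$ exactly for $l'\in[0,d-1]$. For $l\in[m-d+1,m]$ we have $m-l\leqslant d-1$, so (3.2) gives $\eta(l)=wl-k$, which is (3.10); taking $l=m-d+1$ gives (3.11). For (3.12), let $c=\tau(a)\geqslant m-d+1$. By Remark 2.1 (the Galois connection property applied to $(\tau,\eta)$), $a\leqslant\eta(c)=wc-k$, so $c\geqslant\lceil (a+k)/w\rceil$. The reverse bound $c\leqslant\lceil (a+k)/w\rceil$ is Theorem 3.3. Hence (3.12) holds.

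\textbf{Main obstacle.} The only delicate point is the lower bound on $\varphi(a)$ (and likewise on $\tau(a)$). It must come from the single inequality $a\leqslant\psi(\varphi(a))$, which is valid because $\varphi(a)$ lies in the range where (3.7) applies; the upper bound is already provided by the generalized Singleton bounds of Theorems 3.2 and 3.3.
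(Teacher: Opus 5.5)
Your proof is correct and follows the paper's argument. The paper likewise derives (3.7) from $\eta(m-t)=0$ for $m-t\leqslant e-1$, bounds $\varphi(a)$ from below via $a\leqslant\psi(\varphi(a))$ and (3.7) and from above via (3.6), and leaves part (2) as ``similar''; your detour through (2) of Proposition 3.1 and the case $b=0$ is harmless but unnecessary, since $a\leqslant\psi(\varphi(a))$ holds directly by Remark 2.1.
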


\begin{proof}
We only prove (1) and the proof of (2) is similar. First, for any $t\in[m+1-e,m]$, by $m-t\leqslant e-1$, we have $0=\eta(m-t)=\psi(t)+w(m-t)-k$, which establishes (3.7), as desired. In particular, we have $\psi(m+1-e)=k-w(e-1)$, which establishes (3.8), as desired. Now consider $a\in[0,k]$ with $\varphi(a)\geqslant m-e+1$. By (3.7), we have $a\leqslant\psi(\varphi(a))=k-w(m-\varphi(a))$. This implies that $m-\lfloor\frac{k-a}{w}\rfloor\leqslant\varphi(a)$, which, together with (3.6), further establishes (3.9), as desired.
\end{proof}

\subsection{The case that $\varphi(1)+\tau(1)=m+2$}
\setlength{\parindent}{2em}
In this subsection, we prove the following theorem, which can be used to characterize MDS and MRD codes.

\setlength{\parindent}{0em}
\begin{theorem}
Suppose that $1\leqslant k\leqslant wm-1$, $d=\varphi(1)$ and $e=\tau(1)$. Then, we have
\begin{equation}w(e-1)\leqslant k\leqslant w(m-d+1).\end{equation}
In particular, it holds that $d+e\leqslant m+2$. Moreover, the following three statements are equivalent to each other:

{\bf{(1)}}\,\,$k=w(m-d+1)$;

{\bf{(2)}}\,\,$k=w(e-1)$;

{\bf{(3)}}\,\,$d+e=m+2$.
\end{theorem}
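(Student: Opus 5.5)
The inequality (3.13) should follow directly from Propositions 3.3 and 3.4. The upper bound $k\leqslant w(m-d+1)$ is exactly (3.4), which applies because $k\geqslant1$. For the lower bound, I would use (3.8) from (1) of Proposition 3.4, which applies because $k\leqslant wm-1$. It gives $k=w(e-1)+\psi(m-e+1)$, and since $\psi$ takes values in $[0,k]$, we get $\psi(m-e+1)\geqslant0$ and hence $k\geqslant w(e-1)$. Combining the two bounds gives $w(e-1)\leqslant w(m-d+1)$. Dividing by $w\geqslant1$ yields $e-1\leqslant m-d+1$, i.e. $d+e\leqslant m+2$.

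For the equivalences, two implications are immediate from (3.13).
\begin{itemize}
\item (1) and (2) together give $w(m-d+1)=w(e-1)$, hence $d+e=m+2$.
\item Conversely, (3) gives $w(e-1)=w(m-d+1)$, and (3.13) then squeezes $k$ to equal both expressions, so (3) implies (1) and (2).
\end{itemize}
What remains is (1)$\Rightarrow$(2) and (2)$\Rightarrow$(1).

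For (2)$\Rightarrow$(1), I would use (3.8) again. If $k=w(e-1)$, then $\psi(m-e+1)=0$. By (3) of Lemma 2.3, the zero set of $\psi$ is $[0,d-1]$. Hence $m-e+1\leqslant d-1$, i.e. $d+e\geqslant m+2$. Together with $d+e\leqslant m+2$ this gives (3), hence (1).

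Symmetrically, for (1)$\Rightarrow$(2) I would use (3.11) from (2) of Proposition 3.4. If $k=w(m-d+1)$, then $\eta(m-d+1)=0$. By (4) of Lemma 2.3, the zero set of $\eta$ is $[0,e-1]$. Hence $m-d+1\leqslant e-1$, i.e. $d+e\geqslant m+2$, which again gives (3) and therefore (2).

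The only step needing care is matching the zero-set descriptions in Lemma 2.3 (3) and (4) with the right indices; both hypotheses $k\geqslant1$ and $k\leqslant wm-1$ are used there. I do not expect a genuine obstacle.
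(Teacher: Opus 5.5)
Your proposal is correct and is essentially the paper's proof. The paper gets both halves of (3.13) from (3.8) and (3.11), whereas you take the upper bound from (3.4); this swap is inconsequential. The equivalences are obtained exactly as you do, via $\psi(m-e+1)=0$ and $\eta(m-d+1)=0$ together with the zero-set descriptions in (3) and (4) of Lemma 2.3, closed off by $d+e\leqslant m+2$.
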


\begin{proof}
First of all, (3.13) is a combination of (3.8) and (3.11), which further implies that $e-1\leqslant m-d+1$ and hence $d+e\leqslant m+2$, as desired. Moreover, (3.11) and (3.8) imply that
$$k=w(m-d+1)\Longleftrightarrow\eta(m-d+1)=0\Longleftrightarrow m-d+1\leqslant e-1\Longleftrightarrow m+2\leqslant d+e,$$
$$k=w(e-1)\Longleftrightarrow\psi(m-e+1)=0\Longleftrightarrow m-e+1\leqslant d-1\Longleftrightarrow m+2\leqslant d+e,$$
which, together with $d+e\leqslant m+2$, further establishes the desired result.
\end{proof}

\setlength{\parindent}{0em}
\begin{example}
{\bf{(1)}}\,\,Following (1) of Example 2.5 and consider a linear code $C\subseteq\mathbb{F}^{\Omega}$ with $\dim_{\mathbb{F}}(C)=k\in[1,m-1]$. Then, Theorem 3.4 implies that  $C$ is an MDS $\mathbf{P}$-code if and only if $C^{\bot}$ is an MDS $\overline{\mathbf{P}}$-code, if and only if $\mathbf{d}_1^{\mathbf{P}}(C)+\mathbf{d}_1^{\overline{\mathbf{P}}}(C^{\bot})=m+2$, which recovers [20, Theorem 3.12]. In particular, if $\mathbf{P}$ is an anti-chain, then Theorem 3.4 recovers the well-known fact that $C$ is MDS if and only if $C^{\bot}$ is MDS.

{\bf{(2)}}\,\,Following (2) of Example 2.5 and consider a Gabidulin code $C\subseteq\mathbb{E}^{[m]}$ with $\dim_{\mathbb{E}}(C)=k\in[1,m-1]$. Then, Theorem 3.4 implies that $C$ is MRD if and only if $C^{\bot}$ is MRD, if and only if $\mathbf{d}_1^{G}(C)+\mathbf{d}_1^{G}(C^{\bot})=m+2$.

{\bf{(3)}}\,\,Following (3) of Example 2.5 and consider a Delsarte code $C\subseteq\Mat_{m,n}(\mathbb{F})$ with $\dim_{\mathbb{F}}(C)=k\in[1,mn-1]$. Then, Theorem 3.4 implies that $C$ is MRD if and only if $C^{\bot}$ is MRD, if and only if $\mathbf{d}_1^{M}(C)+\mathbf{d}_1^{M}(C^{\bot})=m+2$, as has been shown in [8, Proposition 19 (1)].
\end{example}

\subsection{A duality theorem}

\setlength{\parindent}{2em}
In this subsection, we prove a duality theorem which will be used frequently in the rest of this section.

\setlength{\parindent}{0em}
\begin{theorem}
{\bf{(1)}}\,\,Let $c\in[0,wm-k]$ and $l\in\mathbb{Z}$. Then, $\tau(c)\leqslant l$ if and only if both $c\leqslant wl$ and
$$c\geqslant wl-k+1\Longrightarrow\varphi(c+k-wl)\leqslant m-l$$
hold true. Moreover, $\tau(c)\geqslant l$ if and only if both $c\geqslant w(l-1)-k+1$ and
$$c\leqslant w(l-1)\Longrightarrow\varphi(c-w(l-1)+k)\geqslant m-l+2$$
hold true.

{\bf{(2)}}\,\,Let $a\in[0,k]$ and $b\in\mathbb{Z}$. Then, $\varphi(a)\leqslant b$ if and only if both $a\leqslant wb$ and
$$a\geqslant k+1-w(m-b)\Longrightarrow\tau(a+w(m-b)-k)\leqslant m-b$$
hold true. Moreover, $\varphi(a)\geqslant b$ if and only if both $a\geqslant k+1-w(m-b+1)$ and
$$a\leqslant w(b-1)\Longrightarrow\tau(a+w(m-b+1)-k)\geqslant m-b+2$$
hold true.
\end{theorem}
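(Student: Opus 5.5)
The plan is to turn each inequality on $\tau$ (resp.\ $\varphi$) into an inequality on $\eta$ (resp.\ $\psi$) via the Galois connection property, then use the defining relation $\eta(l)=\psi(m-l)+wl-k$, and finally turn the $\psi$-inequality back into a $\varphi$-inequality via the Galois connection $(\varphi,\psi)$. The boundary cases come from the ranges $[0,k]$ and $[0,m]$ of the maps and from the fact that $\psi$ takes values in $[0,k]$ with $\psi(m)=k$, $\psi(0)=0$.

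For the first assertion of (1), we first dispose of out-of-range $l$. If $l<0$, then $\tau(c)\leqslant l$ fails, and $c\leqslant wl$ fails as well since $c\geqslant0$. If $l\geqslant m$, then $\tau(c)\leqslant l$ holds. In that case $c\leqslant wm-k\leqslant wl$, and the premise $c\geqslant wl-k+1$ fails because $wl-k+1>wm-k$, so the right-hand side holds too. For $l\in[0,m]$ we have $\tau(c)\leqslant l\iff c\leqslant\eta(l)=\psi(m-l)+wl-k\iff \psi(m-l)\geqslant c+k-wl$. We then split according to the sign of $c+k-wl$.
\begin{itemize}
\item If $c+k-wl\leqslant 0$, the last condition holds automatically. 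It is also equivalent to $c\leqslant wl-k$, which forces $c\leqslant wl$, and the premise of the implication fails.
\item If $c+k-wl\geqslant1$, set $a=c+k-wl$. Since $c\leqslant wm-k$ we get $a\leqslant k+w(m-l)$, but we need $a\leqslant k$. When $a>k$, the condition $\psi(m-l)\geqslant a$ is false because $\psi\leqslant k$. The corresponding failure on the right-hand side is $c>wl$, so "$c\leqslant wl$" exactly encodes $a\leqslant k$. When $a\in[1,k]$, we have $\psi(m-l)\geqslant a\iff\varphi(a)\leqslant m-l$ by the Galois connection.
\end{itemize}

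The "moreover" part of (1) follows by negating the first part with $l$ replaced by $l-1$: $\tau(c)\geqslant l\iff\neg(\tau(c)\leqslant l-1)$. Negating "$c\leqslant w(l-1)$ and ($c\geqslant w(l-1)-k+1\Rightarrow\varphi(\cdot)\leqslant m-l+1$)" yields "$c>w(l-1)$, or ($c\geqslant w(l-1)-k+1$ and $\varphi(\cdot)\geqslant m-l+2$)". This should then be rewritten into the stated form, namely "$c\geqslant w(l-1)-k+1$ and ($c\leqslant w(l-1)\Rightarrow\varphi\geqslant m-l+2$)". The rewriting is a propositional identity, valid because $c>w(l-1)$ implies $c\geqslant w(l-1)-k+1$, using $k\geqslant 0$.

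Part (2) is symmetric. For $b\in[0,m]$ we have $\varphi(a)\leqslant b\iff a\leqslant\psi(b)=\eta(m-b)-w(m-b)+k$, that is, $\eta(m-b)\geqslant a+w(m-b)-k$. We then run the same case split, using $\eta\leqslant wm-k$ and the Galois connection $(\tau,\eta)$, to get $\tau(a+w(m-b)-k)\leqslant m-b$. Here "$a\leqslant wb$" encodes that the argument $a+w(m-b)-k$ does not exceed $wm-k$. The cases $b<0$ and $b\geqslant m$ are checked directly, and the "moreover" part again follows by negation with $b$ replaced by $b-1$.

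The main obstacle is the careful bookkeeping of boundary cases. We must make sure every argument fed to $\varphi$ or $\tau$ lies in its domain, and that the extremes $l\notin[0,m]$ and $b\notin[0,m]$ match the stated conditions. The algebra is otherwise routine.
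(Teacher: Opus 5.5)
Your proposal is correct and follows the paper's own argument: you convert $\tau(c)\leqslant l$ into $c+k-wl\leqslant\psi(m-l)$ via the Galois connection $(\tau,\eta)$ and (3.2), then pass back to $\varphi$ via $(\varphi,\psi)$, and you obtain the ``moreover'' parts by applying the first assertions to $l-1$ (resp.\ $b-1$). Your explicit handling of out-of-range $l,b$ and of the propositional rewriting just spells out what the paper calls immediate; note only that in part (2) these boundary checks and the rewriting use $k\leqslant wm$ (Lemma 2.3), whereas in part (1) they use $k\geqslant 0$.
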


\begin{proof}
{\bf{(1)}}\,\,We only prove the first assertion as the ``moreover'' part follows from an application of the first assertion to $(c,l-1)$. In what follows, we assume that $l\in[0,m]$ as otherwise the result immediately follows. First, note that $\tau(c)\leqslant l$ if and only if $c\leqslant \eta(l)$, if and only if $c+k-wl\leqslant\psi(m-l)$. Next, we prove the ``only if'' part. By $\tau(c)\leqslant l$, we have $c+k-wl\leqslant\psi(m-l)\leqslant k$ and hence $c\leqslant wl$, as desired. Moreover, if $c\geqslant wl-k+1$, then it follows from $c+k-wl\leqslant\psi(m-l)$ that $\varphi(c+k-wl)\leqslant m-l$, as desired. Finally, we prove the ``if'' part. If $c\leqslant wl-k$, then we have $c+k-wl\leqslant0\leqslant \psi(m-l)$, as desired. And if $c\geqslant wl-k+1$, then from $\varphi(c+k-wl)\leqslant m-l$, we deduce that $c+k-wl\leqslant \psi(m-l)$, as desired.

{\bf{(2)}}\,\,The proof of the first assertion is similar to that of (1), and the ``moreover'' assertion follows from an application of the first assertion to $(a,b-1)$. We omit the details of the verification.
\end{proof}

\setlength{\parindent}{2em}
Theorem 3.5 shows that $\varphi$ and $\tau$ are determined by each other. It is convenient to apply Theorem 3.5 in various cases as substituting $(c,l)$ or $(a,b)$ accordingly will lead to the desired result. Theorem 3.5 may be regarded as an analogue of the Wei-type duality theorem for $\varphi$ and $\tau$ (see Theorem 3.12) as many coding-theoretic results derived from Wei-type duality theorems can be established by a simple application of Theorem 3.5, as will be shown in later sections. In addition, Theorem 3.5 is very elementary in the sense that it directly follows from (3.2) and the definition of Galois connection.

\subsection{Necessary and sufficient conditions for obtaining the generalized Singleton bound}
\setlength{\parindent}{2em}
As a first application of Theorem 3.5, in this subsection, we derive necessary and sufficient conditions for equality to hold in (3.6). The following is the main result of this subsection.

\setlength{\parindent}{0em}
\begin{theorem}
{\bf{(1)}}\,\,Let $\gamma\in[1,w]$ with $k\geqslant w(m-1)+\gamma$. Then, for any $a\in[0,k]$ with $a\equiv\gamma~(\bmod~w)$, it holds that $\varphi(a)=m-\lfloor\frac{k-a}{w}\rfloor$.

{\bf{(2)}}\,\,Let $\gamma\in[1,w]$ with $k\leqslant w(m-1)+\gamma-1$. Then, for any $a\in[0,k]$ with $a\equiv\gamma~(\bmod~w)$, it holds that
$$\varphi(a)=m-\left\lfloor\frac{k-a}{w}\right\rfloor\Longleftrightarrow\tau\left(\left\lceil\frac{k-\gamma+1}{w}\right\rceil w-k+\gamma\right)\geqslant \left\lfloor\frac{k-a}{w}\right\rfloor+2.$$
\end{theorem}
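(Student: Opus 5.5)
The plan is to combine the generalized Singleton bound (3.6) with the ``moreover'' part of (2) of Theorem 3.5. Fix $a\in[0,k]$ with $a\equiv\gamma~(\bmod~w)$ and write
$$q\triangleq\left\lfloor\frac{k-a}{w}\right\rfloor,\qquad k-a=wq+\rho,\qquad \rho\in[0,w-1].$$
By (3.6), $\varphi(a)\leqslant m-q$ always holds. So in both parts, $\varphi(a)=m-q$ is equivalent to $\varphi(a)\geqslant b$ with $b\triangleq m-q$.

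By Theorem 3.5 (2), $\varphi(a)\geqslant b$ holds if and only if both of the following hold:
\begin{itemize}
\item[(i)] $a\geqslant k+1-w(q+1)$;
\item[(ii)] $a\leqslant w(m-q-1)\Longrightarrow\tau(a+w(q+1)-k)\geqslant q+2$.
\end{itemize}
Condition (i) is automatic, since $k-a=wq+\rho<w(q+1)$. Everything therefore reduces to deciding whether the hypothesis of the implication in (ii) holds, and then identifying the argument of $\tau$.

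Write $a=\gamma+wj$ with $j\in\mathbb{Z}$, $j\geqslant-1$; here $j=-1$ occurs only when $a=0$ and $\gamma=w$.

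\emph{Part (1).} Assume $k\geqslant w(m-1)+\gamma$. Then $k-a\geqslant w(m-1-j)$, so $q\geqslant m-1-j$. Hence $w(m-q-1)\leqslant wj<\gamma+wj=a$, using $\gamma\geqslant1$. So the hypothesis of the implication in (ii) fails, (ii) holds vacuously, and $\varphi(a)=m-q$. In the edge case $j=-1$ the same inequality gives $w(b-1)<0=a$.

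\emph{Part (2).} Assume $k\leqslant w(m-1)+\gamma-1$. Then $k-a\leqslant w(m-1-j)-1$, so $q\leqslant m-2-j$. Hence $w(m-q-1)\geqslant w(j+1)\geqslant \gamma+wj=a$, using $\gamma\leqslant w$. So the hypothesis in (ii) holds, and $\varphi(a)=m-q$ becomes equivalent to $\tau(a+w(q+1)-k)\geqslant q+2$.

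It remains to check that $a+w(q+1)-k=w-\rho$ equals $\lceil\frac{k-\gamma+1}{w}\rceil w-k+\gamma$. Since $a\equiv\gamma$, we have $k-\gamma\equiv k-a\equiv\rho~(\bmod~w)$, so $k-\gamma=wt+\rho$ for some integer $t$. Because $\rho+1\leqslant w$, we get $\lceil\frac{wt+\rho+1}{w}\rceil=t+1$. Then
$$\left\lceil\frac{k-\gamma+1}{w}\right\rceil w-(k-\gamma)=w(t+1)-wt-\rho=w-\rho,$$
as needed.

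The main obstacle is the bookkeeping in these inequalities. The key points are that $\gamma\in[1,w]$ forces the implication in (ii) to be vacuous in case (1) and active in case (2), and that the argument of $\tau$ does not depend on $a$ within its residue class mod $w$. I would double-check the boundary residue $\gamma=w$ (including $a=0$) and confirm that the argument $w-\rho$ of $\tau$ lies in $[0,wm-k]$.
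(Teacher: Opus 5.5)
Your proposal is correct and is exactly the paper's argument. The paper also combines (3) of Theorem 3.2 with the ``moreover'' part of (2) of Theorem 3.5, applied to $a$ and $b=m-\lfloor\frac{k-a}{w}\rfloor$; your case analysis via $a=\gamma+wj$ and your identification of the $\tau$-argument as $w-\rho$ carry out the ``straightforward computation'' the paper leaves implicit. The domain concern you flag is harmless, since in part (2) the active hypothesis $a\leqslant w(m-q-1)$ gives $1\leqslant w-\rho=a+w(q+1)-k\leqslant wm-k$.
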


\begin{proof}
With (3) of Theorem 3.2 and some straightforward computation, the desired result follows from an application of (2) of Theorem 3.5 to $a$ and $b=m-\lfloor\frac{k-a}{w}\rfloor$.
\end{proof}

\setlength{\parindent}{0em}
\begin{corollary}
Let $\gamma\in[1,w]$ with $k\leqslant w(m-1)+\gamma-1$, and let $\theta=\tau(\lceil\frac{k-\gamma+1}{w}\rceil w-k+\gamma)$. Then, for any $a\in[0,k]$ with $a\equiv\gamma~(\bmod~w)$, the following two statements are equivalent to each other:

{\bf{(1)}}\,\,$\theta=\left\lfloor\frac{k-a}{w}\right\rfloor+2$;

{\bf{(2)}}\,\,$\varphi(a)=m-\left\lfloor\frac{k-a}{w}\right\rfloor$, and $a=\min\{c\in[0,k]\mid c\equiv\gamma~(\bmod~w),\varphi(c)=m-\left\lfloor\frac{k-c}{w}\right\rfloor\}$.
\end{corollary}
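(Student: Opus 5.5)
Fix $\gamma$ and write $\theta$ as in the statement. By (2) of Theorem 3.6, for every $c\in[0,k]$ with $c\equiv\gamma~(\bmod~w)$ we have
$$\varphi(c)=m-\left\lfloor\frac{k-c}{w}\right\rfloor\Longleftrightarrow\theta\geqslant\left\lfloor\frac{k-c}{w}\right\rfloor+2 .$$
The corollary is then a combinatorial consequence of this equivalence, applied to the elements of the residue class $S=\{c\in[0,k]\mid c\equiv\gamma~(\bmod~w)\}$.

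\textbf{Step 1.} The map $c\mapsto\lfloor\frac{k-c}{w}\rfloor$ is strictly decreasing on $S$, dropping by exactly $1$ between consecutive elements $c$ and $c+w$. So the set $T$ of $c\in S$ at which the generalized Singleton bound is attained is an up-set (final segment) of $S$. This agrees with (4) of Theorem 3.2.

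\textbf{Step 2.} Prove (1)$\Rightarrow$(2). If $\theta=\lfloor\frac{k-a}{w}\rfloor+2$, then $a\in T$ by the equivalence. If $a-w\geqslant 0$, then $a-w\in S$ and
$$\left\lfloor\frac{k-(a-w)}{w}\right\rfloor+2=\theta+1>\theta,$$
so $a-w\notin T$. Since $T$ is an up-set, $a=\min T$. If $a-w<0$, then $a=\min S$, so again $a=\min T$.

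\textbf{Step 3.} Prove (2)$\Rightarrow$(1). Since $a\in T$, we have $\theta\geqslant\lfloor\frac{k-a}{w}\rfloor+2$. If $a-w\geqslant0$, minimality gives $a-w\notin T$, so $\theta\leqslant\lfloor\frac{k-a}{w}\rfloor+2$, and equality follows.

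\textbf{Step 4.} The delicate case, and the main obstacle, is $a=\min S$, i.e.\ $a=\gamma$ if $\gamma<w$, or $a\in\{0,w\}$ with $a=0$ when $\gamma=w$. Here there is no smaller element, so we need an upper bound on $\theta$ directly, via Theorem 3.3: $\tau(c)\leqslant\lceil\frac{k+c}{w}\rceil$. Take $c=\lceil\frac{k-\gamma+1}{w}\rceil w-k+\gamma$, so that $k+c=\lceil\frac{k-\gamma+1}{w}\rceil w+\gamma$. This gives
$$\theta\leqslant\left\lceil\frac{k-\gamma+1}{w}\right\rceil+1 .$$
It remains to check that this equals $\lfloor\frac{k-a}{w}\rfloor+2$ for $a=\min S$.
\begin{itemize}
\item[(i)] If $\gamma<w$, then $a=\gamma$. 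Write $k-\gamma=qw+\rho$ with $\rho\in[0,w-1]$. Then $\lceil\frac{k-\gamma+1}{w}\rceil=q+1$ and $\lfloor\frac{k-\gamma}{w}\rfloor=q$, so both sides equal $q+2$.
\item[(ii)] If $\gamma=w$, then $a=0$. Since $\lceil\frac{k-w+1}{w}\rceil=\lfloor\frac{k}{w}\rfloor$, the bound becomes $\theta\leqslant\lfloor\frac{k}{w}\rfloor+1$. This is smaller than $\lfloor\frac{k-0}{w}\rfloor+2$, so $0\notin T$. Thus $a=0$ never satisfies (2), and (1) fails at $a=0$ as well, so the two statements agree there.
\end{itemize}
One must also confirm that the argument of $\tau$ lies in $[0,wm-k]$. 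This follows from the hypothesis $k\leqslant w(m-1)+\gamma-1$, by a short ceiling computation.
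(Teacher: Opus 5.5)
Your proposal is correct and follows essentially the same route as the paper: both directions rest on (2) of Theorem 3.6 applied at $a$ and at $a-w$, with Theorem 3.3 supplying the upper bound on $\theta$ when $a\leqslant w-1$. The differences are cosmetic. For (1)$\Rightarrow$(2) you argue via the up-set structure of the residue class, where the paper compares $\lfloor\frac{k-a}{w}\rfloor\geqslant\lfloor\frac{k-c}{w}\rfloor$ directly. For the small-$a$ case you split into $\gamma<w$ and $\gamma=w$, where the paper handles both at once using $a\leqslant\gamma$.
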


\begin{proof}
$(1)\Longrightarrow(2)$\,\,By Theorem 3.6, we have $\varphi(a)=m-\lfloor\frac{k-a}{w}\rfloor$. Now let $c\in[0,k]$ such that $c\equiv\gamma~(\bmod~w)$ and $\varphi(c)=m-\lfloor\frac{k-c}{w}\rfloor$. By Theorem 3.6, we have $\theta\geqslant\lfloor\frac{k-c}{w}\rfloor+2$. It follows that $\lfloor\frac{k-a}{w}\rfloor\geqslant\lfloor\frac{k-c}{w}\rfloor$, which, together with $c\equiv a~(\bmod~w)$, further implies that $c\geqslant a$, as desired.

$(2)\Longrightarrow(1)$\,\,By Theorem 3.6, it remains to show that $\theta\leqslant\lfloor\frac{k-a}{w}\rfloor+2$. If $a\geqslant w$, then by (2), we have $\varphi(a-w)\neq m-\lfloor\frac{k-(a-w)}{w}\rfloor$, which, together with Theorem 3.6, implies that $\theta\leqslant\lfloor\frac{k-(a-w)}{w}\rfloor+1=\lfloor\frac{k-a}{w}\rfloor+2$, as desired. Moreover, if $a\leqslant w-1$, then we have $a\leqslant\gamma$, which, together with Theorem 3.3, further implies that $\theta\leqslant\lfloor\frac{k-\gamma}{w}\rfloor+2\leqslant\lfloor\frac{k-a}{w}\rfloor+2$, as desired.
\end{proof}

\setlength{\parindent}{2em}
Theorem 3.6 and Corollary 3.1 can be used to characterize $i$-MDS codes and $i$-MRD codes, as detailed in the following example.

\setlength{\parindent}{0em}
\begin{example}
{\bf{(1)}}\,\,Following (1) of Example 2.5 and consider a linear code $C\subseteq\mathbb{F}^{\Omega}$ with $\dim_{\mathbb{F}}(C)=k$. For any $i\in[0,k]$, we say that $C$ is an $i$-MDS $\mathbf{P}$-code if $\mathbf{d}_i^{\mathbf{P}}(C)=m-k+i$. Suppose that $k\leqslant m-1$ and $a\in[0,k]$. Then, Theorem 3.6 and Corollary 3.1 imply that $C$ is $a$-MDS if and only if $\mathbf{d}_1^{\overline{\mathbf{P}}}(C^{\bot})\geqslant k-a+2$; moreover, $\mathbf{d}_1^{\overline{\mathbf{P}}}(C^{\bot})=k-a+2$ if and only if $C$ is $a$-MDS and $a=\min\{c\in[0,k]\mid \text{$C$ is $c$-MDS}\}$. In particular, if $\mathbf{P}$ is an anti-chain, then the counterpart result has been established in [14, Proposition 4] by using the Wei-type duality theorem for GHWs (see [34, Theorem 3]).

{\bf{(2)}}\,\,Following (2) of Example 2.5 and consider a Gabidulin code $C\subseteq\mathbb{E}^{[m]}$ with $\dim_{\mathbb{E}}(C)=k$. For any $i\in[0,k]$, we say that $C$ is $i$-MRD if $\mathbf{d}_i^{G}(C)=m-k+i$ (see [13, Definition 1]). Suppose that $k\leqslant m-1$ and $a\in[0,k]$. Then, Theorem 3.6 implies that $C$ is $a$-MRD if and only if $\mathbf{d}_1^{G}(C^{\bot})\geqslant k-a+2$, as has been shown in [13, Corollary III.3] by using the Wei-type duality theorem for GRWs (see [13, Theorem I.3]).

{\bf{(3)}}\,\,Suppose that $\mathbb{F}$ is a field, $C\subseteq\Mat_{m,w}(\mathbb{F})$ is a Delsarte code with $\dim_{\mathbb{F}}(C)=k$, and $\varphi(r)$ is equal to the $r$-th Delsarte generalized weight of $C$ for all $r\in[0,k]$ (see [28, Definition 23]). Then, $\tau(s)$ is equal to the $s$-th Delsarte generalized weight of $C^{\bot}$ for all $s\in[0,wm-k]$. For any $i\in[1,\lceil\frac{k}{w}\rceil]$, we say that $C$ is $i$-MRD if $\varphi((i-1)w+1)=m-\lceil\frac{k}{w}\rceil+i$ (see [7, Definition 6.1]). Then, with $\gamma$ set to be $1$, Theorem 3.6 and Corollary 3.1 recover counterpart results in [7, Section 6.1], which were established for $i$-MRD codes by using the Wei-type duality theorem for Delsarte generalized weights (see [28, Corollary 38]).
\end{example}

\subsection{MDS discrepancy}

\setlength{\parindent}{2em}
Following [34, Section VI], the \textit{MDS discrepancy} of a $k$-dimensional linear code $C\subseteq\mathbb{F}^{m}$, where $\mathbb{F}$ is a field, is defined as the largest integer $s\in[0,k]$ satisfying $\mathbf{d}_s(C)\leqslant m-k$. MDS discrepancy measures how far the code $C$ is away from an MDS code. It has been proven in [34, Theorem 10] that the MDS discrepancy of a code is equal to that of its dual code, which has been further generalized to poset codes in \cite{27} by using the Wei-type duality theorem for poset codes (see [27, Theorem 2]). In terms of Galois connections, these results may be illustrated by the following proposition.

\setlength{\parindent}{0em}
\begin{proposition}
Suppose that $w\mid k$ and $1\leqslant\frac{k}{w}\leqslant m-1$. Then, for any $c\in[1,\min(k,wm-k)]$, it holds that $\tau(c)\leqslant \frac{k}{w}\Longleftrightarrow\varphi(c)\leqslant m-\frac{k}{w}$.
\end{proposition}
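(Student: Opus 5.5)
Write $k=wq$ with $1\leqslant q\leqslant m-1$. The plan is to apply (1) of Theorem 3.5 to the pair $(c,l)=(c,q)$ and simplify. The integer constraints in Theorem 3.5 collapse because $wl=k$.

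By (1) of Theorem 3.5, $\tau(c)\leqslant q$ holds if and only if both of the following hold:
\begin{itemize}
\item[(i)] $c\leqslant wq=k$;
\item[(ii)] the implication $c\geqslant wq-k+1\Longrightarrow\varphi(c+k-wq)\leqslant m-q$.
\end{itemize}
Substituting $wq=k$:
\begin{itemize}
\item condition (i) becomes $c\leqslant k$, which is automatic since $c\leqslant\min(k,wm-k)$;
\item the hypothesis of (ii) becomes $c\geqslant1$, which is automatic since $c\geqslant 1$;
\item the conclusion of (ii) becomes $\varphi(c)\leqslant m-q=m-\frac{k}{w}$.
\end{itemize}
Hence $\tau(c)\leqslant\frac kw$ if and only if $\varphi(c)\leqslant m-\frac kw$.

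Domain bookkeeping also has to be checked.
\begin{itemize}
\item $\tau(c)$ is defined because $c\in[0,wm-k]$.
\item $\varphi(c)$ is defined because $c\leqslant k$.
\item Theorem 3.5 requires $l$ to be an integer, which holds because $w\mid k$.
\item In the proof of Theorem 3.5 one reduces to $l\in[0,m]$; here $q\in[1,m-1]$, so nothing further is needed.
\end{itemize}

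The only delicate point is making sure the side conditions of Theorem 3.5 are genuinely vacuous rather than silently restricting $c$. This is exactly why the hypotheses $c\geqslant1$ and $c\leqslant\min(k,wm-k)$ are imposed. Alternatively, one could argue directly: $\tau(c)\leqslant q\Leftrightarrow c\leqslant\eta(q)=\psi(m-q)+wq-k=\psi(m-q)\Leftrightarrow\varphi(c)\leqslant m-q$, using the Galois connections $(\tau,\eta)$ and $(\varphi,\psi)$ together with (3.2). This one-line chain is what I would actually write.
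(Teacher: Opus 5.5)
Your proposal is correct and matches the paper, which proves the proposition by applying (1) of Theorem 3.5 with $l=\frac{k}{w}$. Your direct chain $\tau(c)\leqslant\frac{k}{w}\Leftrightarrow c\leqslant\eta(\frac{k}{w})=\psi(m-\frac{k}{w})\Leftrightarrow\varphi(c)\leqslant m-\frac{k}{w}$ is simply the proof of Theorem 3.5 specialized to this case.
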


\begin{proof}
An application of (1) of Theorem 3.5 to $l=\frac{k}{w}$ immediately implies the desired result.
\end{proof}

\setlength{\parindent}{2em}
We apply Proposition 3.5 to poset codes and rank metric codes, as detailed in the following example.

\setlength{\parindent}{0em}
\begin{example}
{\bf{(1)}}\,\,Following (1) of Example 2.5 and consider a linear code $C\subseteq\mathbb{F}^{\Omega}$ with $\dim_{\mathbb{F}}(C)=k\in[1,m-1]$. For any $s\in[1,\min\{k,m-k\}]$, Proposition 3.5 implies that $\mathbf{d}_s^{\overline{\mathbf{P}}}(C^{\bot})\leqslant k\Longleftrightarrow\mathbf{d}_s^{\mathbf{P}}(C)\leqslant m-k$. Hence the $\mathbf{P}$-MDS discrepancy of $C$ is equal to the $\overline{\mathbf{P}}$-MDS discrepancy of $C^{\bot}$ (see [27, Section 4.1]). This recovers [27, Theorem 3], which further recovers [34, Theorem 10] when $\mathbf{P}$ is an anti-chain.

{\bf{(2)}}\,\,Following (2) of Example 2.5 and consider a Gabidulin code $C\subseteq\mathbb{E}^{[m]}$ with $\dim_{\mathbb{E}}(C)=k\in[1,m-1]$. For any $s\in[1,\min\{k,m-k\}]$, Proposition 3.5 implies that $\mathbf{d}_s^{G}(C^{\bot})\leqslant k\Longleftrightarrow\mathbf{d}_s^{G}(C)\leqslant m-k$. Hence if we define the MRD discrepancy of $C$ as the largest integer $s\in[0,k]$ satisfying $\mathbf{d}_{s}^{G}(C)\leqslant m-k$, then the MRD discrepancy of $C$ is equal to that of $C^{\bot}$.

{\bf{(3)}}\,\,Following (3) of Example 2.5 and consider a Delsarte code $C\subseteq\Mat_{m,n}(\mathbb{F})$ with $\dim_{\mathbb{F}}(C)=k\in[1,mn-1]$ and $n\mid k$. Then, for any $s\in[1,\min\{k,mn-k\}]$, Proposition 3.5 implies that $\mathbf{d}_s^{M}(C^{\bot})\leqslant \frac{k}{n}\Longleftrightarrow\mathbf{d}_s^{M}(C)\leqslant m-\frac{k}{n}$. Hence if we define the MRD discrepancy of $C$ as the largest integer $s\in[0,k]$ satisfying $\mathbf{d}_{s}^{M}(C)\leqslant m-\frac{k}{n}$, then the MRD discrepancy of $C$ is equal to that of $C^{\bot}$.
\end{example}

\subsection{The case that $\varphi(1)+\tau(1)=m$}
\setlength{\parindent}{2em}
In this subsection, we establish some results for the special case $\varphi(1)+\tau(1)=m$, which can be used to characterize near MDS (NMDS) and near MRD (NMRD) codes. We begin with the following proposition.

\setlength{\parindent}{0em}
\begin{proposition}
Suppose that $1\leqslant k\leqslant wm-1$, $d=\varphi(1)$, $e=\tau(1)$, $w\mid k$ and $d+e\neq m+2$. Then, it holds that
\begin{equation}e\leqslant\frac{k}{w}\leqslant m-d.\end{equation}
Consequently, we have $d+e\leqslant m$. Moreover, if $d+e=m$, then we have $k=we=w(m-d)$.
\end{proposition}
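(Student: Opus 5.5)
We start from (3.13) of Theorem 3.4, which gives $w(e-1)\leqslant k\leqslant w(m-d+1)$. Since $d+e\neq m+2$, Theorem 3.4 also tells us that $k\neq w(m-d+1)$ and $k\neq w(e-1)$. Hence the bound (3.13) is strict on both sides:
$$w(e-1)<k<w(m-d+1).$$
Dividing by $w$ and using $w\mid k$, the integer $\frac{k}{w}$ lies strictly between the integers $e-1$ and $m-d+1$. It therefore satisfies $e\leqslant\frac{k}{w}\leqslant m-d$, which is (3.14). Combining the two ends immediately gives $e\leqslant m-d$, that is, $d+e\leqslant m$.

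For the ``moreover'' part, assume $d+e=m$. Then $e=m-d$, so the two outer bounds in (3.14) coincide. This forces $\frac{k}{w}=e=m-d$, that is, $k=we=w(m-d)$.

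The only point requiring care is the strictness. It comes exactly from the equivalence in Theorem 3.4 of $k=w(m-d+1)$, $k=w(e-1)$ and $d+e=m+2$. Since we are given that the third statement fails, the first two fail as well. Divisibility by $w$ then upgrades the strict inequalities to the integer bounds in (3.14). No further obstacle is expected.
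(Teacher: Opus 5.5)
Your proposal is correct and matches the paper's proof essentially line for line. Like the paper, you use Theorem 3.4 to rule out both endpoints of (3.13) when $d+e\neq m+2$, then use $\frac{k}{w}\in\mathbb{Z}$ to obtain (3.14), from which the remaining assertions follow.
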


\begin{proof}
By $d+e\neq m+2$ and Theorem 3.4, we have $\frac{k}{w}\neq e-1$ and $\frac{k}{w}\neq m-d+1$, which, together with $e-1\leqslant\frac{k}{w}\leqslant m-d+1$ (see (3.13)) and $\frac{k}{w}\in\mathbb{Z}$, immediately establishes (3.14), which further yields the rest part.
\end{proof}

\begin{corollary}
Suppose that $1\leqslant k\leqslant wm-1$, $d=\varphi(1)$, $e=\tau(1)$ and $d+e=m$. Then, $(d,e)$ is equal to either $(m-\left\lceil\frac{k}{w}\right\rceil,\left\lceil\frac{k}{w}\right\rceil)$ or $(m-\left\lfloor\frac{k}{w}\right\rfloor,\left\lfloor\frac{k}{w}\right\rfloor)$.
\end{corollary}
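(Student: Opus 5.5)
We start from inequality (3.13) of Theorem 3.4, $w(e-1)\leqslant k\leqslant w(m-d+1)$, which holds under the standing hypothesis $1\leqslant k\leqslant wm-1$. Since $d+e=m\neq m+2$, Theorem 3.4 rules out both $k=w(m-d+1)$ and $k=w(e-1)$. Hence
$$w(e-1)<k<w(m-d+1)=w(e+1),$$
where we substituted $m-d=e$. Dividing by $w$ gives $e-1<\frac{k}{w}<e+1$.

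We then split into two cases according to whether $w$ divides $k$. If $w\mid k$, then $\frac{k}{w}$ is an integer strictly between $e-1$ and $e+1$, so $\frac{k}{w}=e$. This agrees with Proposition 3.6, which already gives $k=we$. In this case $\lceil\frac{k}{w}\rceil=\lfloor\frac{k}{w}\rfloor=e$, and both listed pairs coincide with $(m-e,e)=(d,e)$.

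If $w\nmid k$, then $\frac{k}{w}$ lies in the open interval $(e-1,e+1)$ and is not an integer. So either $e-1<\frac{k}{w}<e$ or $e<\frac{k}{w}<e+1$. In the first subcase $\lceil\frac{k}{w}\rceil=e$. In the second subcase $\lfloor\frac{k}{w}\rfloor=e$. In either subcase we get $d=m-e$, which yields the two stated alternatives.

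There is no real obstacle here. The only point needing care is that the two strict inequalities come from the equivalences in Theorem 3.4, and those require $d+e\neq m+2$. This holds automatically, because $d+e=m$.
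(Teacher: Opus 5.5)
Your proof is correct and takes essentially the paper's route: both arguments rest on the bounds $w(e-1)\leqslant k\leqslant w(m-d+1)$ from Theorem 3.4 together with $d+e=m$. The paper settles $w\mid k$ via Proposition 3.6, which is the same strict-inequality step you use; for $w\nmid k$ it needs only the non-strict bounds plus the identity $(m-\lceil\frac{k}{w}\rceil+1-d)+(\lceil\frac{k}{w}\rceil-e)=1$, whereas your open interval $e-1<\frac{k}{w}<e+1$ handles both cases at once.
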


\begin{proof}
By Proposition 3.6, we can assume that $w\nmid k$. From $w(e-1)\leqslant k\leqslant w(m-d+1)$, we deduce that $e\leqslant\left\lceil\frac{k}{w}\right\rceil$ and $d\leqslant m-\left\lceil\frac{k}{w}\right\rceil+1$. This, together with $(m-\left\lceil\frac{k}{w}\right\rceil+1)+\left\lceil\frac{k}{w}\right\rceil=m+1=d+e+1$, implies that either $e=\left\lceil\frac{k}{w}\right\rceil$ or $e=\left\lceil\frac{k}{w}\right\rceil-1=\left\lfloor\frac{k}{w}\right\rfloor$, which further establishes the desired result.
\end{proof}

\setlength{\parindent}{2em}
The following result follows from Proposition 3.6 and Theorem 3.5.

\setlength{\parindent}{0em}
\begin{corollary}
Suppose that $1\leqslant k\leqslant wm-1$, $d=\varphi(1)$ and $e=\tau(1)$. Then, the following two statements are equivalent to each other:

{\bf{(1)}}\,\,$w\mid k$ and $d+e=m$;

{\bf{(2)}}\,\,$k=w(m-d)$, and moreover, $d\leqslant m-2\Longrightarrow\varphi(w+1)=d+2$.
\end{corollary}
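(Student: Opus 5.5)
The plan is to prove both implications using Proposition 3.6, inequality (3.13), Theorem 3.4, and the two halves of (2) of Theorem 3.5, applied at the single point $a=w+1$, $b=d+2$.

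\emph{(1)$\Rightarrow$(2).} Assume $w\mid k$ and $d+e=m$.
\begin{itemize}
\item Since $d+e=m\neq m+2$, the hypotheses of Proposition 3.6 hold, and its ``moreover'' part gives $k=we=w(m-d)$. In particular $e=m-d$.
\item Now assume $d\leqslant m-2$. Then $k=w(m-d)\geqslant 2w\geqslant w+1$, so $a=w+1$ lies in $[0,k]$.
\item \emph{Upper bound.} Apply the first assertion of (2) of Theorem 3.5 with $a=w+1$ and $b=d+2$. First, $a\leqslant wb$ holds trivially. Second, $k+1-w(m-b)=w(m-d)+1-w(m-d-2)=2w+1>a$, so the implication in that assertion is vacuous. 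Hence $\varphi(w+1)\leqslant d+2$.
\item \emph{Lower bound.} Apply the ``moreover'' assertion of (2) of Theorem 3.5 with the same $a,b$. We have $k+1-w(m-b+1)=w+1=a$, so the first condition holds. Also $a=w+1\leqslant 2w\leqslant w(d+1)=w(b-1)$, since $d\geqslant1$. So the implication is active, and we need $\tau\bigl(a+w(m-d-1)-k\bigr)=\tau(1)=e\geqslant m-b+2=m-d$. This holds because $e=m-d$. Hence $\varphi(w+1)\geqslant d+2$.
\item Together, $\varphi(w+1)=d+2$.
\end{itemize}

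\emph{(2)$\Rightarrow$(1).} Assume $k=w(m-d)$ and the conditional statement about $\varphi(w+1)$.
\begin{itemize}
\item $w\mid k$ is immediate.
\item Since $k\geqslant1$, we have $d\leqslant m-1$.
\item \emph{Upper bound on $e$.} By (3.13), $e\leqslant m-d+1$. If $e=m-d+1$, then $d+e=m+2$, and Theorem 3.4 would force $k=w(m-d+1)\neq w(m-d)$, a contradiction. Hence $e\leqslant m-d$.
\item \emph{Lower bound on $e$, case $d\leqslant m-2$.} We have $\varphi(w+1)=d+2\geqslant d+2$. Apply the ``only if'' direction of the ``moreover'' assertion of (2) of Theorem 3.5 with $a=w+1$, $b=d+2$. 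As computed above, the condition $a\leqslant w(b-1)$ holds and the shifted argument equals $1$. So it yields $\tau(1)\geqslant m-d$.
\item \emph{Lower bound on $e$, case $d=m-1$.} Here $k=w\leqslant wm-1$. By (4) of Lemma 2.3, $e=\tau(1)\geqslant1=m-d$.
\item In both cases $e=m-d$, i.e.\ $d+e=m$.
\end{itemize}

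The only delicate point is the index bookkeeping in Theorem 3.5. One must check that the shifted argument $a+w(m-b+1)-k$ collapses exactly to $1$. One must also check that the side conditions behave as claimed: the implication is vacuous for the upper bound and active for the lower bound, which uses $d\geqslant1$. Everything else is direct substitution.
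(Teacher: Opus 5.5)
Your argument is essentially the paper's. Proposition 3.6 gives $k=w(m-d)$, and the duality Theorem 3.5 ties $\tau(1)\geqslant m-d$ to $\varphi(w+1)\geqslant d+2$. You use part (2) at $(a,b)=(w+1,d+2)$, while the paper uses part (1) at $(c,l)=(1,m-d)$; this is the same equivalence. For $\varphi(w+1)\leqslant d+2$ the paper cites the generalized Singleton bound (3.6), whereas you use the vacuous case of Theorem 3.5.

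There is one slip in (2)$\Rightarrow$(1): $e=m-d+1$ gives $d+e=m+1$, not $m+2$. The step is easily repaired. If $e=m-d+1$, then $k=w(m-d)=w(e-1)$, so Theorem 3.4 forces $k=w(m-d+1)$, a contradiction. Alternatively, argue as the paper does: $k\neq w(m-d+1)$ gives $d+e\neq m+2$ by Theorem 3.4, and Proposition 3.6 then yields $e\leqslant k/w=m-d$.
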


\begin{proof}
${\bf{(1)}}\Longrightarrow{\bf{(2)}}$\,\,By Proposition 3.6, we have $k=w(m-d)$. Moreover, if $d\leqslant m-2$, then an application of (1) of Theorem 3.5 to $c=1$ and $l=m-d$ implies that $\varphi(w+1)\geqslant d+2$, which, together with (3) of Theorem 3.2, further implies that $\varphi(w+1)=d+2$, as desired.

${\bf{(2)}}\Longrightarrow{\bf{(1)}}$\,\,By Theorem 3.4, we have $w\mid k$, $d+e\neq m+2$, which, together with Proposition 3.6, implies that $d+e\leqslant m$. Hence if $d\geqslant m-1$, then from $e\geqslant1$, we deduce that $d+e=m$, as desired. And if $d\leqslant m-2$, then we have $\varphi(w+1)=d+2$, and an application of (1) of Theorem 3.5 to $c=1$ and $l=m-d$ implies that $e=m-d$, as desired.
\end{proof}

\setlength{\parindent}{2em}
Next, we compute $\varphi$ and $\psi$ explicitly when $\varphi(1)+\tau(1)=m$.

\setlength{\parindent}{0em}
\begin{theorem}
Suppose that $1\leqslant k\leqslant wm-1$, $d=\varphi(1)$, $e=\tau(1)$ and $d+e=m$, and let $\theta=\max\{a\in[1,k]\mid\varphi(a)=d\}$. Then, we have $\psi(d)=\theta$, $\psi(t)=k-w(m-t)$ for all $t\in[d+1,m]$, $\varphi(c)=d$ for all $c\in[1,\theta]$, and $\varphi(a)=m-\left\lfloor\frac{k-a}{w}\right\rfloor$ for all $a\in[\theta+1,k]$.
\end{theorem}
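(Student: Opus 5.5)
We combine Proposition 3.4(1) with Proposition 3.1. Since $d+e=m$, we have $m+1-e=d+1$. Proposition 3.4(1) then applies to every $t\in[d+1,m]$ and gives $\psi(t)=k-w(m-t)$; this is the second claim. For the last claim, take $a\in[0,k]$ with $\varphi(a)\geqslant d+1=m+1-e$. The ``moreover'' part of Proposition 3.4(1) gives $\varphi(a)=m-\lfloor\frac{k-a}{w}\rfloor$. So it suffices to show that $\varphi(a)\geqslant d+1$ exactly when $a\geqslant\theta+1$.

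We next handle the level set at $d$. By Proposition 3.1(2), $\{a\in[0,k]\mid\varphi(a)=d\}=[\psi(d-1)+1,\psi(d)]$. Lemma 2.3(3) gives $\psi(d-1)=0$, so this set is $[1,\psi(d)]$. It is nonempty because it contains $1$, since $\varphi(1)=d$. Its maximum is therefore $\psi(d)$, which gives $\theta=\psi(d)$, and $\varphi(c)=d$ for all $c\in[1,\theta]$. Since $\varphi$ is order preserving, every $a\in[\theta+1,k]$ satisfies $\varphi(a)\geqslant d$, and $\varphi(a)\neq d$ because $a$ lies outside the level set. Hence $\varphi(a)\geqslant d+1$, and Proposition 3.4(1) finishes the argument as described above. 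If $\theta=k$, the range $[\theta+1,k]$ is empty and there is nothing to prove.

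The main point to check is the hypotheses of Proposition 3.4(1): we need $k\leqslant wm-1$, which is assumed, and $e=\tau(1)$ with $m+1-e=d+1$, which follows directly from $d+e=m$. We also need $d-1\geqslant 0$, which holds because $k\geqslant1$ forces $d\geqslant1$ by Lemma 2.3(3). No further computation is needed. In particular, Corollaries 3.2 and 3.3 are not required, although they show consistency: for example, when $w\mid k$ the formula gives $\psi(d+1)=w+\theta$, which agrees with $\varphi(w+1)=d+2$.
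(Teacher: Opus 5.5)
Your proof is correct and is essentially the paper's argument. The paper gets $\theta=\psi(d)$ directly from $\psi(d)=\max\{c\in[0,k]\mid\varphi(c)\leqslant d\}$ (Remark 2.1) rather than via Proposition 3.1(2) and Lemma 2.3(3), and then likewise finishes with Proposition 3.4(1) using $m+1-e=d+1$. One slip in your closing aside, which the proof does not use: when $w\mid k$, Proposition 3.6 gives $k=w(m-d)$. Hence $\psi(d+1)=k-w(m-d-1)=w$, not $w+\theta$, and it is $\psi(d+1)=w$ that is consistent with $\varphi(w+1)=d+2$.
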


\begin{proof}
We note that $\psi(d)=\max\{c\in [0,k]\mid \varphi(c)\leqslant d\}=\theta$, and for any $c\in[1,\theta]$, it follows from $\varphi(1)\leqslant\varphi(c)\leqslant\varphi(\theta)=d$ that $\varphi(c)=d$, as desired. Moreover, noticing that $\varphi(a)\geqslant d+1=m+1-e$ for all $a\in[\theta+1,k]$, the rest immediately follows from (1) of Proposition 3.4.
\end{proof}

\setlength{\parindent}{2em}
Now we further improve Theorem 3.7 with the additional assumption that $w\mid k$.

\setlength{\parindent}{0em}
\begin{theorem}
Suppose that $1\leqslant k\leqslant wm-1$, $w\mid k$, $d=\varphi(1)$, $e=\tau(1)$ and $d+e=m$, and let $\theta=\max\{a\in[1,k]\mid\varphi(a)=d\}$. Then, the following four statements hold:

{\bf{(1)}}\,\,$\psi(d)=\theta$. Moreover, we have $\psi(t)=w(t-d)$ for all $t\in[d+1,m]$;

{\bf{(2)}}\,\,We have $\varphi(c)=d$ for all $c\in[1,\theta]$, and $\varphi(a)=d+\left\lceil\frac{a}{w}\right\rceil$ for all $a\in[\theta+1,k]$;

{\bf{(3)}}\,\,$\eta(e)=\theta$. Moreover, we have $\eta(l)=w(l-e)$ for all $l\in[e+1,m]$;

{\bf{(4)}}\,\,We have $\tau(c)=e$ for all $c\in[1,\theta]$, and $\tau(a)=\left\lceil\frac{a}{w}\right\rceil+e$ for all $a\in[\theta+1,wm-k]$.
\end{theorem}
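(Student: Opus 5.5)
Since $w\mid k$, Proposition 3.6 together with $d+e=m$ gives $k=we=w(m-d)$. I will therefore substitute $k=w(m-d)$ into Theorem 3.7 and use the symmetry between $(\varphi,\psi)$ and $(\tau,\eta)$ for (3) and (4).

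For (1), Theorem 3.7 already gives $\psi(d)=\theta$ and $\psi(t)=k-w(m-t)$ for $t\in[d+1,m]$. Substituting $k=w(m-d)$ gives $\psi(t)=w(t-d)$.

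For (2), Theorem 3.7 gives $\varphi(c)=d$ on $[1,\theta]$. For $a\in[\theta+1,k]$ it gives
$$\varphi(a)=m-\left\lfloor\frac{w(m-d)-a}{w}\right\rfloor=m-(m-d)+\left\lceil\frac{a}{w}\right\rceil=d+\left\lceil\frac{a}{w}\right\rceil,$$
using $\lfloor n-x\rfloor=n-\lceil x\rceil$ for integer $n$.

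For (3), I will compute directly from $\eta(l)=\psi(m-l)+wl-k$.
\begin{itemize}
\item At $l=e$ we have $m-e=d$, so $\eta(e)=\psi(d)+we-k=\theta$, since $we=k$.
\item For $l\in[e+1,m]$ we have $m-l\leqslant d-1$. Lemma 2.3(3) gives $\psi(m-l)=0$ there, so $\eta(l)=wl-k=w(l-e)$.
\end{itemize}
For completeness, the same lemma applied to $\eta$ (Lemma 2.3(4)) gives $\eta(l)=0$ for $l\leqslant e-1$.

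For (4), I will recover $\tau$ from $\eta$ using Proposition 3.1(2), applied to the Galois connection $(\tau,\eta)$. It says that for $l\in[1,m]$, $\tau(a)=l$ exactly when $a\in[\eta(l-1)+1,\eta(l)]$. Since $\eta(e-1)=0$ and $\eta(e)=\theta$, we get $\tau(c)=e$ for $c\in[1,\theta]$. For $l\geqslant e+1$ the formula is $\eta(l)=w(l-e)$, except at $l-1=e$ where $\eta(e)=\theta$ rather than $0$. Given $a\in[\theta+1,wm-k]$, put $l=\lceil a/w\rceil+e$. Then:
\begin{itemize}
\item $a\leqslant w(l-e)=\eta(l)$, which needs $l\leqslant m$; this holds because $a\leqslant wm-k=w(m-e)$.
\item $a\geqslant\eta(l-1)+1$. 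If $l-1\geqslant e+1$ this reads $a\geqslant w(l-1-e)+1$, which is true. If $l-1=e$ it reads $a\geqslant\theta+1$, which is true by assumption.
\end{itemize}
Hence $\tau(a)=l=\lceil a/w\rceil+e$.

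The main obstacle is this case split in (4). I also need to check that $\theta\leqslant w$, so that $\lceil a/w\rceil\geqslant1$ and $l\geqslant e+1$ in the second step. This holds because $\theta=\psi(d)$ and $\psi(d-1)=0$, so (3.1) gives $\psi(d)\leqslant w$.
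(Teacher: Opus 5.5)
Your proof is correct. Parts (1)--(3) match the paper. The paper also derives $k=we=w(m-d)$ from Proposition 3.6, reads (1) and (2) off Theorem 3.7, and for (3) cites (3.10), which is exactly the fact $\psi(m-l)=0$ for $m-l\leqslant d-1$ that you take from Lemma 2.3(3).

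The only genuine difference is in (4). The paper does not compute level sets of $\tau$. For $c\leqslant\theta=\eta(e)$ it uses the Galois connection to get $\tau(c)\leqslant e$, and monotonicity to get $\tau(c)\geqslant\tau(1)=e$. For $a\geqslant\theta+1>\eta(e)$ it gets $\tau(a)\geqslant e+1=m-d+1$, and then invokes (3.12) of Proposition 3.4, which gives $\tau(a)=\lceil\frac{a+k}{w}\rceil=\lceil\frac{a}{w}\rceil+e$.

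You instead read $\tau$ off the explicit formula for $\eta$ via Proposition 3.1(2) applied to $(\tau,\eta)$. This forces your case split at $l-1=e$. The paper's route is shorter because (3.12) already packages that computation. Yours is more self-contained and shows directly that $\theta=\eta(e)$ is the single breakpoint where $\tau$ departs from the Singleton-type formula.

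One small correction: the check $\theta\leqslant w$ is not actually used. The bound $\lceil a/w\rceil\geqslant1$, and hence $l\geqslant e+1$, follows already from $a\geqslant\theta+1\geqslant1$. Your two cases $l-1=e$ and $l-1\geqslant e+1$ then cover everything without it, although the inequality $\theta\leqslant w$ is true.
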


\begin{proof}
By Proposition 3.6, we have $k=w(m-d)=we$. Hence (1) and (2) follow from Theorem 3.7. Next, we prove (3). By (1), we have $\eta(e)=\psi(m-e)+we-k=\psi(d)=\theta$; moreover, the rest follows from (3.10) (see Proposition 3.4). Finally, we prove (4). For any $c\in[1,\theta]$, by (3), we have $c\leqslant\eta(e)$ and hence $\tau(c)\leqslant e$, which, together with $\tau(c)\geqslant\tau(1)=e$, implies that $\tau(c)=e$, as desired. Moreover, for any $a\in[\theta+1,wm-k]$, by $a\geqslant\eta(e)+1$, we have $\tau(a)\geqslant e+1=m-d+1$, which, together with (3.12) (see Proposition 3.4), further establishes the desired result.
\end{proof}

\setlength{\parindent}{2em}
We end this subsection by recovering some characterizations of NMDS and NMRD codes, as detailed in the following example.

\setlength{\parindent}{0em}
\begin{example}
{\bf{(1)}}\,\,Following (1) of Example 2.5 and consider a linear code $C\subseteq\mathbb{F}^{\Omega}$ with $\dim_{\mathbb{F}}(C)=k\in[1,m-1]$. Following [2, Definition 2.3], we say that $C$ is a near-MDS (NMDS) $\mathbf{P}$-code if $\mathbf{d}_1^{\mathbf{P}}(C)=m-k$, and $k\geqslant2\Longrightarrow\mathbf{d}_2^{\mathbf{P}}(C)=m-k+2$. Corollary 3.3 implies that $C$ is NMDS if and only if $\mathbf{d}_1^{\mathbf{P}}(C)+\mathbf{d}_1^{\overline{\mathbf{P}}}(C^{\bot})=m$, as has been shown in [2, Lemma 2.4]. In particular, if $\mathbf{P}$ is an anti-chain, then Corollary 3.3 recovers the well-known fact that $C$ is NMDS if and only if $\mathbf{d}_1(C)+\mathbf{d}_1(C^{\bot})=m$ (see [14, Definition 13]).

{\bf{(2)}}\,\,Following (2) of Example 2.5 and consider a Gabidulin code $C\subseteq\mathbb{E}^{[m]}$ with $\dim_{\mathbb{E}}(C)=k\in[1,m-1]$. Following [25, Definition 5.1], we say that $C$ is near-MRD (NMRD) if $\mathbf{d}_1^{G}(C)=m-k$, and $k\geqslant2\Longrightarrow\mathbf{d}_2^{G}(C)=m-k+2$. Corollary 3.3 implies that $C$ is NMRD if and only if $\mathbf{d}_1^{G}(C)+\mathbf{d}_1^{G}(C^{\bot})=m$, as has been shown in [25, Proposition 5.3].

{\bf{(3)}}\,\,Following (3) of Example 3.6 and consider a Delsarte code $C\subseteq\Mat_{m,w}(\mathbb{F})$ with $\dim_{\mathbb{F}}(C)=k\in[1,wm-1]$ and $w\mid k$. Following [7, Definition 6.8], we say that $C$ is near-MRD (NMRD) if $k=w(m-d)$, and moreover, $k\geqslant2w$ implies that the $(w+1)$-th Delsarte generalized weight of $C$ is equal to $d+2$. Corollary 3.3 implies that $C$ is NMRD if and only if $\mathbf{d}_1^{M}(C)+\mathbf{d}_1^{M}(C^{\bot})=m$, as has been shown in [7, Theorem 6.9]. In addition, Corollary 3.2 recovers [8, Remark 36].
\end{example}

\subsection{The case that $\varphi(1)+\tau(1)=m+1$}
\setlength{\parindent}{2em}
In this subsection, we establish some results for the special case $\varphi(1)+\tau(1)=m+1$, which can be used to characterize dually quasi-MRD (dually QMRD) codes. We begin with the following theorem.

\setlength{\parindent}{0em}
\begin{theorem}
Suppose that $1\leqslant k\leqslant wm-1$, $d=\varphi(1)$ and $e=\tau(1)$. Then, the following three statements are equivalent to each other:

{\bf{(1)}}\,\,$d+e=m+1$;

{\bf{(2)}}\,\,$w\nmid k$, $d=m-\left\lceil\frac{k}{w}\right\rceil+1$, $e=\left\lceil\frac{k}{w}\right\rceil$;

{\bf{(3)}}\,\,$\left\lceil\frac{k}{w}\right\rceil=m-d+1=e$.
\end{theorem}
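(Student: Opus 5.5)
The plan is to combine the sandwich inequality (3.13) from Theorem 3.4, namely $w(e-1)\leqslant k\leqslant w(m-d+1)$, with the fact that $d$ and $e$ are integers. Two rounding consequences drive everything. From $k\leqslant w(m-d+1)$ we get $\lceil\frac{k}{w}\rceil\leqslant m-d+1$. From $w(e-1)\leqslant k$ we get $e-1\leqslant\lfloor\frac{k}{w}\rfloor$, hence $e\leqslant\lfloor\frac{k}{w}\rfloor+1$. We will also use the equivalences of Theorem 3.4, which say that $d+e=m+2$ holds iff $k=w(m-d+1)$ iff $k=w(e-1)$, together with Proposition 3.6 for the case $w\mid k$.

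For (1)$\Rightarrow$(2), assume $d+e=m+1$. First, $w\nmid k$: otherwise, since $d+e\neq m+2$, Proposition 3.6 gives $d+e\leqslant m$, a contradiction. Now $w\nmid k$ means $\lfloor\frac{k}{w}\rfloor=\lceil\frac{k}{w}\rceil-1$, so $e\leqslant\lceil\frac{k}{w}\rceil$. Combining this with $d\leqslant m+1-\lceil\frac{k}{w}\rceil$ gives
\[d+e\leqslant m+1,\]
and equality forces both bounds to be equalities. Hence $d=m-\lceil\frac{k}{w}\rceil+1$ and $e=\lceil\frac{k}{w}\rceil$. This is the same pinching argument as in Corollary 3.2. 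The implication (2)$\Rightarrow$(3) is immediate.

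For (3)$\Rightarrow$(1), adding the two equalities $\lceil\frac{k}{w}\rceil=m-d+1$ and $\lceil\frac{k}{w}\rceil=e$ yields $d+e=m+1$ directly. No divisibility assumption is needed here, because the implication only uses the two stated equalities.

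The only step needing care is ruling out $w\mid k$ in (1)$\Rightarrow$(2), which is handled by Proposition 3.6. As a consistency check, if $w\mid k$ and (3) held, then $k=w(m-d+1)$, and Theorem 3.4 would give $d+e=m+2$, contradicting $d+e=m+1$. So (3) cannot hold when $w\mid k$, which agrees with (2).
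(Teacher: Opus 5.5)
Your proposal is correct and follows the paper's proof: you rule out $w\mid k$ using Proposition 3.6 (since $d+e=m+1\neq m+2$), then pinch $d\leqslant m+1-\lceil\frac{k}{w}\rceil$ and $e\leqslant\lceil\frac{k}{w}\rceil$ from (3.13) against $d+e=m+1$, and the remaining implications are immediate. You also state explicitly that the bound $e\leqslant\lceil\frac{k}{w}\rceil$ needs $w\nmid k$, a step the paper leaves implicit.
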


\begin{proof}
It suffices to prove $(1)\Longrightarrow(2)$. From $d+e\neq m+2$ and Proposition 3.6, we deduce that $w\nmid k$, as desired. Moreover, from $w(e-1)\leqslant k\leqslant w(m-d+1)$, we deduce that $e\leqslant\lceil\frac{k}{w}\rceil$ and $d\leqslant m+1-\lceil\frac{k}{w}\rceil$, which, together with $d+e=m+1$, immediately establishes (2), as desired.
\end{proof}

\setlength{\parindent}{2em}
Next, we compute $\varphi$, $\psi$, $\tau$ and $\eta$ explicitly when $\varphi(1)+\tau(1)=m+1$. The following theorem is an immediate corollary of Proposition 3.4.

\begin{theorem}
Suppose that $1\leqslant k\leqslant wm-1$, $d=\varphi(1)$, $e=\tau(1)$ and $d+e\geqslant m+1$. Then, we have $\psi(t)=k-w(m-t)$ for all $t\in[d,m]$, $\eta(l)=wl-k$ for all $l\in[e,m]$, $\varphi(a)=m-\left\lfloor\frac{k-a}{w}\right\rfloor$ for all $a\in[1,k]$, and $\tau(c)=\left\lceil\frac{c+k}{w}\right\rceil$ for all $c\in[1,wm-k]$.
\end{theorem}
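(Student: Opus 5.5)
The plan is to apply Proposition 3.4 directly, using the hypothesis $d+e\geqslant m+1$ to make the index ranges in its two parts cover everything needed.

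\textbf{The profiles.} Since $d+e\geqslant m+1$, we have $d\geqslant m+1-e$. Then (3.7) in (1) of Proposition 3.4, which holds for all $t\in[m+1-e,m]$, gives $\psi(t)=k-w(m-t)$ for every $t\in[d,m]\subseteq[m+1-e,m]$. Here $k\leqslant wm-1$ guarantees that $e=\tau(1)$ makes sense. Symmetrically, $e\geqslant m-d+1$, so (3.10) in (2) of Proposition 3.4 gives $\eta(l)=wl-k$ for all $l\in[e,m]\subseteq[m-d+1,m]$. Here $k\geqslant1$ is needed.

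\textbf{The weights $\varphi$.} For $a\in[1,k]$, monotonicity of $\varphi$ (Remark 2.1) gives $\varphi(a)\geqslant\varphi(1)=d\geqslant m+1-e$. The last part of (1) of Proposition 3.4 then yields (3.9), namely $\varphi(a)=m-\lfloor\frac{k-a}{w}\rfloor$.

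\textbf{The weights $\tau$.} For $c\in[1,wm-k]$, monotonicity of $\tau$ gives $\tau(c)\geqslant\tau(1)=e\geqslant m-d+1$. The last part of (2) of Proposition 3.4 then yields (3.12), namely $\tau(c)=\lceil\frac{c+k}{w}\rceil$.

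\textbf{Where care is needed.} There is essentially no obstacle. The only checks are that the interval inclusions follow from $d+e\geqslant m+1$, and that the monotonicity of $\varphi$ and $\tau$ comes from their being the lower adjoints in Galois connections (Remark 2.1). Both are immediate.
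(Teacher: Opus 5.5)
Your proposal is correct and matches the paper, which states this theorem as an immediate corollary of Proposition 3.4. The inclusions $[d,m]\subseteq[m+1-e,m]$ and $[e,m]\subseteq[m-d+1,m]$, together with the monotonicity bounds $\varphi(a)\geqslant d$ and $\tau(c)\geqslant e$, are exactly the checks needed to invoke (3.7), (3.9), (3.10) and (3.12).
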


\setlength{\parindent}{2em}
Now we give the following necessary and sufficient conditions for $\varphi(1)+\tau(1)\geqslant m+1$.

\setlength{\parindent}{0em}
\begin{proposition}
Suppose that $1\leqslant k\leqslant wm-1$, $d=\varphi(1)$ and $e=\tau(1)$. Then, the following two statements are equivalent to each other:

{\bf{(1)}}\,\,$d+e\geqslant m+1$;

{\bf{(2)}}\,\,$k\geqslant w(m-d)+1$. Moreover, $d\leqslant m-1\Longrightarrow\varphi(k-w(m-d)+1)=d+1$.
\end{proposition}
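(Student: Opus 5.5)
The plan mirrors Corollary 3.3. I will use Theorem 3.5(1) with $c=1$ and $l=m-d$ to detect whether $e\geqslant m-d+1$.

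First I reformulate (1). Since $e=\tau(1)$, the condition $d+e\geqslant m+1$ is equivalent to $\tau(1)\geqslant m-d+1$, that is, to $\tau(1)\not\leqslant m-d$. By the first assertion of (1) of Theorem 3.5 with $c=1$ and $l=m-d$, we have $\tau(1)\leqslant m-d$ if and only if both of the following hold: $1\leqslant w(m-d)$, and
\[
1\geqslant w(m-d)-k+1 \Longrightarrow \varphi(1+k-w(m-d))\leqslant d .
\]
Hence (1) holds if and only if either $w(m-d)\leqslant 0$, or both $k\geqslant w(m-d)$ and $\varphi(k-w(m-d)+1)\geqslant d+1$ hold. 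In the case $w(m-d)\leqslant0$ we have $d=m$, so $k\geqslant1=w(m-d)+1$ is automatic.

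For $(1)\Rightarrow(2)$, I would not rely on the case analysis above. Instead, by Theorem 3.10 (the case $d+e\geqslant m+1$) we get $\varphi(a)=m-\lfloor\frac{k-a}{w}\rfloor$ for all $a\in[1,k]$. Taking $a=1$ gives $d=m-\lfloor\frac{k-1}{w}\rfloor$, so $\lfloor\frac{k-1}{w}\rfloor=m-d$ and therefore $k-1\geqslant w(m-d)$, which is the inequality in (2). Now let $a=k-w(m-d)+1\in[1,k]$. Then $\lfloor\frac{k-a}{w}\rfloor=\lfloor\frac{w(m-d)-1}{w}\rfloor=m-d-1$ when $d\leqslant m-1$, and so $\varphi(a)=d+1$.

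For $(2)\Rightarrow(1)$, if $d=m$ then $e\geqslant1$ already gives $d+e\geqslant m+1$. If $d\leqslant m-1$, then $k\geqslant w(m-d)+1$ makes the index $a=k-w(m-d)+1$ lie in $[2,k]$, and by hypothesis $\varphi(a)=d+1\not\leqslant d$. Since also $k\geqslant w(m-d)$, the implication in the criterion above fails. Thus $\tau(1)\not\leqslant m-d$, so $e\geqslant m-d+1$.

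The main point to check is the boundary bookkeeping. We need $a\leqslant k$, which holds since $m-d\geqslant1$ and $w\geqslant1$, and we need $k\leqslant wm-1$ so that $\tau(1)$ is defined. The floor computation in the forward direction also needs $w(m-d)\geqslant1$, which holds because $d\leqslant m-1$.
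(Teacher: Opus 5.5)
Your proof is correct and follows essentially the paper's route. You use Theorem 3.10 for $(1)\Rightarrow(2)$, and for $(2)\Rightarrow(1)$ the duality Theorem 3.5 with the same case split $d=m$ / $d\leqslant m-1$. The differences are cosmetic: you get $k\geqslant w(m-d)+1$ from Theorem 3.10 at $a=1$ rather than from Theorems 3.4 and 3.9, and you apply part (1) of Theorem 3.5 at $(c,l)=(1,m-d)$ instead of part (2) at $(a,b)=(k-w(m-d)+1,d+1)$, which is the same equivalence read from the other side.
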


\begin{proof}
$(1)\Longrightarrow(2)$\,\,It follows from Theorems 3.4 and 3.9 that $k\geqslant w(m-d)+1$; moreover, the rest follows from Theorem 3.10.

$(2)\Longrightarrow(1)$\,\,If $d=m$, then it follows from $e\geqslant1$ that $d+e\geqslant m+1$, as desired. Moreover, if $d\leqslant m-1$, then an application of (2) of Theorem 3.5 to $a=k-w(m-d)+1$ and $b=d+1$ implies that $e\geqslant m-d+1$, as desired.
\end{proof}

\setlength{\parindent}{2em}
Finally, we use results established in this subsection to recover some characterizations of dually QMRD codes, which has been proposed in \cite{8}.

\setlength{\parindent}{0em}
\begin{example}
Following (3) of Example 2.5 and consider a Delsarte code $C\subseteq\Mat_{m,n}(\mathbb{F})$ with $\dim_{\mathbb{F}}(C)=k\in[1,mn-1]$. By [8, Definitions 10 and 13], we say that $C$ is dually QMRD if $n\nmid k$, $\mathbf{d}_1^{M}(C)=m-\left\lceil\frac{k}{n}\right\rceil+1$ and $\mathbf{d}_1^{M}(C^{\bot})=\left\lceil\frac{k}{n}\right\rceil$. Theorem 3.9 implies that $C$ is dually QMRD if and only if $\mathbf{d}_1^{M}(C)+\mathbf{d}_1^{M}(C^{\bot})=m+1$, as has been shown in [8, Proposition 19]. Theorem 3.10 implies that if $C$ is either MRD or dually QMRD, then the GMWs of $C$ coincide with the Delsarte generalized weights of $C$, which are completely determined by $m,n,k$, as has been shown in [8, Theorem 22 and Remark 23]. Moreover, Proposition 3.7 gives necessary and sufficient conditions for $C$ to be either MRD or dually QMRD (cf. [8, Theorem 22]).
\end{example}

\subsection{Evasive property}
\setlength{\parindent}{2em}
We begin by recalling some notions on evasive property (see \cite{3,25}). Let $\mathbb{E}/\mathbb{F}$ be a finite dimensional field extension. Following [3, Definition 1.1], for $(h,r)\in\mathbb{N}^{2}$, an $\mathbb{F}$-subspace $U\subseteq\mathbb{E}^{k}$ is said to be \textit{$(h,r)$-evasive} in $\mathbb{E}^{k}$ if $\dim_{\mathbb{F}}(U\cap W)\leqslant r$ for all $\mathbb{E}$-subspace $W\subseteq\mathbb{E}^{k}$ with $\dim_{\mathbb{E}}(W)=h$.

$(h,r)$-evasiveness is closely related to GRWs of Gabidulin codes. More precisely, let $C\subseteq\mathbb{E}^{[m]}$ be a Gabidulin code such that $\dim_{\mathbb{E}}(C)=k$ and $\mathbf{d}_k^{G}(C)=m$, $L\in\Mat_{m,k}(\mathbb{E})$ be a generator matrix of $C$, i.e., $C=\{L\gamma^{T}\mid\gamma\in\mathbb{E}^{k}\}$, and let $U$ be the $\mathbb{F}$-subspace of $\mathbb{E}^{k}$ generated by all the rows of $L$. Then, for $(h,r)\in\mathbb{N}^{2}$ with $h\leqslant \min\{k-1,r\}$, it has been proven in [25, Theorem 3.3] that the following three statements are equivalent to each other:

$(i)$\,\,$U$ is $(h,r)$-evasive in $\mathbb{E}^{k}$;

$(ii)$\,\,$\mathbf{d}_{k-h}^{G}(C)\geqslant m-r$;

$(iii)$\,\,$\mathbf{d}_{r-h+1}^{G}(C^{\bot})\geqslant r+2$.\vspace*{2mm}\\
Noticing that if we follow Example 2.3 and set $\varphi$ to be the GRWs of $C$, then Condition $(ii)$ is equivalent to $\varphi(k-h)\geqslant m-r$. Based on this observation, we propose the following definition of $(h,r)$-evasiveness for $\varphi$ in the general case.

\begin{definition}
For $(h,r)\in\mathbb{N}^{2}$, we say that $\varphi$ is \textit{$(h,r)$-evasive} provided that $k\geqslant w(h+1)$ and $\varphi(k-w(h+1)+1)\geqslant m-r$.
\end{definition}

We collect some basic properties of $(h,r)$-evasiveness in the following proposition.

\setlength{\parindent}{0em}
\begin{proposition}
{\bf{(1)}}\,\,Let $h\in\mathbb{N}$ with $k\geqslant w(h+1)$. Then, we have $\varphi(k-w(h+1)+1)\leqslant m-h$. Hence $\varphi$ is $(h,h)$-evasive if and only if $\varphi(k-w(h+1)+1)=m-h$.

{\bf{(2)}}\,\,Let $(h,r)\in\mathbb{N}^{2}$ such that $\varphi$ is $(h,r)$-evasive. Then, we have $h\leqslant r$, and moreover, $\varphi$ is $(h-a,r-a)$-evasive for all $a\in[0,h]$.
\end{proposition}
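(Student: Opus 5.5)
Both parts rest on the generalized Singleton bound (3) of Theorem 3.2 and the monotonicity estimate (3.5) of (2) of Theorem 3.2.

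For (1), let $h\in\mathbb{N}$ with $k\geqslant w(h+1)$ and put $a=k-w(h+1)+1\in[1,k]$. Then $k-a=w(h+1)-1$, so $\lfloor\frac{k-a}{w}\rfloor=h$ because $wh\leqslant w(h+1)-1<w(h+1)$. Inequality (3.6) now gives $\varphi(a)\leqslant m-h$. By Definition 3.1, $(h,h)$-evasiveness means $\varphi(a)\geqslant m-h$. Combined with the upper bound just obtained, this is equivalent to $\varphi(a)=m-h$.

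For (2), suppose $\varphi$ is $(h,r)$-evasive, so $k\geqslant w(h+1)$ and $\varphi(k-w(h+1)+1)\geqslant m-r$. Part (1) gives $m-r\leqslant\varphi(k-w(h+1)+1)\leqslant m-h$, hence $h\leqslant r$.

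Now fix $a\in[0,h]$. We must show $k\geqslant w(h-a+1)$ and $\varphi(k-w(h-a+1)+1)\geqslant m-(r-a)$.
\begin{itemize}
\item The first condition is immediate from $k\geqslant w(h+1)\geqslant w(h-a+1)$.
\item For the second, set $c_1=k-w(h+1)+1$ and $c_2=k-w(h-a+1)+1=c_1+wa$. Both lie in $[1,k]$ and $c_1\leqslant c_2$.
\item Applying (3.5) with $\lfloor\frac{c_2-c_1}{w}\rfloor=a$ gives $\varphi(c_2)\geqslant\varphi(c_1)+a\geqslant m-r+a=m-(r-a)$.
\end{itemize}
This is exactly $(h-a,r-a)$-evasiveness. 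Alternatively, one can iterate (1) of Theorem 3.2 $a$ times.

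The only real checks are the index ranges and the floor computation in (1). Both are routine, so I expect no genuine obstacle.
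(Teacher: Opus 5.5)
Your proposal is correct and follows the paper's proof: part (1) comes from the generalized Singleton bound (3.6), using $\lfloor\frac{w(h+1)-1}{w}\rfloor=h$. Part (2) combines (1) with the monotonicity estimate of Theorem 3.2, applied to indices that differ by $wa$; your explicit checks of the index ranges (including $r-a\geqslant0$, which follows from $h\leqslant r$) are simply details the paper leaves unstated.
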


\begin{proof}
{\bf{(1)}}\,\,This follows from the generalized Singleton bound (3.6).

{\bf{(2)}}\,\,By (1), we have $m-r\leqslant\varphi(k-w(h+1)+1)\leqslant m-h$, which implies that $h\leqslant r$, as desired. Moreover, for any $a\in[0,h]$, by Theorem 3.2, we have $\varphi(k-w(h-a+1)+1)\geqslant\varphi(k-w(h+1)+1)+a\geqslant m-r+a=m-(r-a)$, which implies that $\varphi$ is $(h-a,r-a)$-evasive, as desired.
\end{proof}

\setlength{\parindent}{2em}
The following theorem is the main result of this subsection, in which we characterize $(h,r)$-evasiveness of $\varphi$ in terms of $\tau$.

\setlength{\parindent}{0em}
\begin{theorem}
Let $(h,r)\in\mathbb{N}^{2}$ with $h\leqslant r\leqslant m-1$ and $w(h+1)\leqslant k\leqslant w(m+h-r)-1$. Then, $\varphi$ is $(h,r)$-evasive if and only if $\tau(w(r-h)+1)\geqslant r+2$.
\end{theorem}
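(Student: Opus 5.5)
The plan is to apply the ``moreover'' part of (2) of Theorem 3.5 with $a=k-w(h+1)+1$ and $b=m-r$. The definition of $(h,r)$-evasiveness asks for $\varphi(a)\geqslant b$. The hypothesis $k\geqslant w(h+1)$ puts $a$ in $[1,k]$, so the first clause of Definition 3.1 holds automatically and Theorem 3.5 applies. That theorem says $\varphi(a)\geqslant b$ holds if and only if both of the following hold:
$$a\geqslant k+1-w(m-b+1), \qquad a\leqslant w(b-1)\Longrightarrow\tau\bigl(a+w(m-b+1)-k\bigr)\geqslant m-b+2.$$

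With $b=m-r$ we have $m-b+1=r+1$, so the first condition reads $k-w(h+1)+1\geqslant k+1-w(r+1)$. This is equivalent to $h\leqslant r$, which is a hypothesis. Next, the argument of $\tau$ is
$$a+w(r+1)-k=w(r-h)+1,$$
and $m-b+2=r+2$. This is exactly the quantity in the claim. The argument also lies in the domain $[0,wm-k]$, since $w(r-h)+1\leqslant wm-k$ is equivalent to $k\leqslant w(m+h-r)-1$, which is the upper hypothesis on $k$.

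It remains to check that the premise $a\leqslant w(b-1)$ of the implication is always satisfied, so that the implication reduces to the unconditional inequality $\tau(w(r-h)+1)\geqslant r+2$. The premise reads $k-w(h+1)+1\leqslant w(m-r-1)$, i.e.\ $k\leqslant w(m+h-r)-1$, which is again the upper hypothesis on $k$. Hence, under the stated hypotheses, $\varphi$ is $(h,r)$-evasive if and only if $\tau(w(r-h)+1)\geqslant r+2$.

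The only care needed is this bookkeeping: both nontrivial clauses of Theorem 3.5 must collapse exactly onto the two hypotheses $h\leqslant r$ and $k\leqslant w(m+h-r)-1$, and the argument of $\tau$ must be confirmed to lie in $[0,wm-k]$. The condition $r\leqslant m-1$ ensures $b=m-r\geqslant1$, so that $b$ falls in the meaningful range.
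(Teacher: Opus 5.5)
Your proposal is correct and is exactly the paper's proof: apply the ``moreover'' part of (2) of Theorem 3.5 with $a=k-w(h+1)+1$ and $b=m-r$. Your bookkeeping (the first clause reducing to $h\leqslant r$, the premise and the domain check both reducing to $k\leqslant w(m+h-r)-1$, and the argument of $\tau$ simplifying to $w(r-h)+1$) supplies the details the paper leaves implicit.
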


\begin{proof}
An application of (2) of Theorem 3.5 to $a=k-w(h+1)+1$ and $b=m-r$ immediately implies the desired result.
\end{proof}

\setlength{\parindent}{2em}
Theorem 3.11 immediately implies the following characterization for $(h,h)$-evasiveness of $\varphi$.

\begin{corollary}
Let $h\in[0,m-2]$ with $w(h+1)\leqslant k\leqslant wm-1$, and let $e=\tau(1)$. Then, $\varphi$ is $(h,h)$-evasive if and only if $e\geqslant h+2$.
\end{corollary}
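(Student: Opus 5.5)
The plan is to obtain the corollary by specializing Theorem 3.11 to $r=h$. Since $h\in[0,m-2]$, the hypothesis $h\leqslant r\leqslant m-1$ of Theorem 3.11 holds with $r=h$. The condition $w(h+1)\leqslant k\leqslant w(m+h-r)-1$ becomes $w(h+1)\leqslant k\leqslant wm-1$, which is exactly the assumption here. Theorem 3.11 then says that $\varphi$ is $(h,h)$-evasive if and only if $\tau(w\cdot 0+1)\geqslant h+2$, that is, if and only if $e=\tau(1)\geqslant h+2$. This is the desired equivalence.

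The only points to check are the following.

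\begin{itemize}
\item $\tau(1)$ is defined. This requires $1\in[0,wm-k]$, which follows from $k\leqslant wm-1$.
\item The instance of Theorem 3.5 used inside Theorem 3.11 stays within range. It is applied with $a=k-w(h+1)+1\in[1,k]$ and $b=m-h$.
\end{itemize}

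Both checks are immediate from the hypotheses. As a safeguard, the equivalence can also be re-derived directly from the second assertion of (2) of Theorem 3.5 with $a=k-w(h+1)+1$ and $b=m-h$.

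\begin{itemize}
\item The first condition there is $a\geqslant k+1-w(h+1)$, and it holds with equality.
\item The implication condition reads: if $a\leqslant w(m-h-1)$, then $\tau(a+w(h+1)-k)=\tau(1)\geqslant h+2$.
\item Its premise $a\leqslant w(m-h-1)$ is equivalent to $k\leqslant wm-1$, so the premise holds.
\end{itemize}

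Hence $\varphi(a)\geqslant m-h$ if and only if $e\geqslant h+2$. By Definition 3.1, and using $k\geqslant w(h+1)$, this is precisely $(h,h)$-evasiveness.

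There is no real obstacle here. The only care needed is the bookkeeping of the index ranges listed above.
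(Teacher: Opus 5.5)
Your proposal is correct and takes the paper's route: the corollary is Theorem 3.11 specialized to $r=h$, and your range checks are accurate. Your fallback derivation from the second assertion of (2) of Theorem 3.5 with $a=k-w(h+1)+1$, $b=m-h$ is exactly how Theorem 3.11 itself is proved.
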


\setlength{\parindent}{0em}
\begin{remark}
With (2) of Example 2.5 and $(i)\Longleftrightarrow(ii)$ of [25, Theorem 3.3], Proposition 3.8 recovers [3, Proposition 2.6] and [25, Remark 3.2], and Theorem 3.11 recovers $(ii)\Longleftrightarrow(iii)$ of [25, Theorem 3.3], which has been established by using the Wei-type duality theorem for GRWs.
\end{remark}

\subsection{Wei-type duality theorem}
\setlength{\parindent}{2em}
We end this section with the following Wei-type duality theorem for $\varphi$ and $\tau$.

\begin{theorem}
For any $\gamma\in\mathds{Z}$, define $\mathcal{A}_{\gamma}=\{\varphi(u)\mid u\in[1,k],~u\equiv\gamma+k~(\bmod~w)\}$, $\mathcal{B}_{\gamma}=\{m+1-\tau(v)\mid v\in[1,wm-k],~v\equiv\gamma~(\bmod~w)\}$. Then, for any $\gamma\in\mathds{Z}$, it holds that $\mathcal{A}_{\gamma}\cap \mathcal{B}_{\gamma}=\emptyset$, $\mathcal{A}_{\gamma}\cup \mathcal{B}_{\gamma}=[1,m]$.
\end{theorem}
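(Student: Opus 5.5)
The plan is to show that, for fixed $\gamma$, the two sets have sizes adding up to $m$, both lie in $[1,m]$, and are disjoint. Together these force $\mathcal{A}_\gamma\cup\mathcal{B}_\gamma=[1,m]$. Only the Galois connection properties, formula (3.2), and the monotonicity results Theorem 3.2(1) and Theorem 3.3 are needed.

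\emph{Step 1 (sizes and range).} Let $U_\gamma=\{u\in[1,k]\mid u\equiv\gamma+k~(\bmod~w)\}$ and $V_\gamma=\{v\in[1,wm-k]\mid v\equiv\gamma~(\bmod~w)\}$. The map $u\mapsto u-k$ sends $U_\gamma$ bijectively onto the integers of $[1-k,0]$ that are $\equiv\gamma~(\bmod~w)$. Likewise $V_\gamma$ is the set of integers of $[1,wm-k]$ that are $\equiv\gamma$. Hence $U_\gamma$ and $V_\gamma$ together correspond to the integers $\equiv\gamma~(\bmod~w)$ in the interval $[1-k,wm-k]$. This interval consists of $wm$ consecutive integers, so $|U_\gamma|+|V_\gamma|=m$.

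By Theorem 3.2(1), $\varphi(u)+1\leqslant\varphi(u+w)$, so $\varphi$ is strictly increasing on $U_\gamma$ and $|\mathcal{A}_\gamma|=|U_\gamma|$. By Theorem 3.3, $\tau$ is strictly increasing on $V_\gamma$, so $|\mathcal{B}_\gamma|=|V_\gamma|$.

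For the range: if $u\geqslant1$ then $\varphi(u)\geqslant1$, since $\varphi(u)=0$ would give $u\leqslant\psi(0)=0$; also $\varphi(u)\leqslant m$. Similarly $1\leqslant\tau(v)\leqslant m$ for $v\geqslant1$, which gives $m+1-\tau(v)\in[1,m]$.

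\emph{Step 2 (disjointness).} This is the main point. Suppose $\varphi(u)=b=m+1-\tau(v)$ with $u\in U_\gamma$, $v\in V_\gamma$ and $b\in[1,m]$. From the Galois connection, $\varphi(u)=b$ gives
\[
\psi(b-1)+1\leqslant u\leqslant\psi(b),
\]
exactly as in Proposition 3.1(2). Next, $\tau(v)=m+1-b$ is equivalent to $\eta(m-b)+1\leqslant v\leqslant\eta(m+1-b)$. Substituting (3.2), namely $\eta(l)=\psi(m-l)+wl-k$, the quantity $v'\triangleq v+k-w(m-b)$ therefore satisfies
\[
\psi(b)+1\leqslant v'\leqslant\psi(b-1)+w.
\]

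Now $u$ and $v'$ both lie in the window $[\psi(b-1)+1,\psi(b-1)+w]$ of $w$ consecutive integers, and $u\leqslant\psi(b)<v'$, so they are distinct. On the other hand $v'\equiv v+k\equiv\gamma+k\equiv u~(\bmod~w)$. Two distinct integers in a window of length $w$ cannot be congruent mod $w$, a contradiction. Hence $\mathcal{A}_\gamma\cap\mathcal{B}_\gamma=\emptyset$.

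\emph{Step 3 (conclusion).} The sets $\mathcal{A}_\gamma$ and $\mathcal{B}_\gamma$ are disjoint subsets of $[1,m]$ whose sizes sum to $m$, so their union is $[1,m]$. The only delicate point is the index bookkeeping when translating $\tau(v)=m+1-b$ through $\eta$. The boundary cases $b=1$ (using $\psi(0)=0$) and $b=m$ (using $\eta(0)=0$) fit the same inequalities, so no separate treatment is needed.
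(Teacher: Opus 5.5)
Your proof is correct. The paper does not actually prove Theorem 3.12: it records it as part of [35, Theorem 3.1] and refers there for the proof. So your argument is a genuinely different, self-contained route.

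You run Wei's classical counting-plus-disjointness scheme directly on the Galois connections. There are two ingredients:
\begin{itemize}
\item \emph{Counting.} Strict monotonicity along a residue class (Theorem 3.2(1) for $\varphi$, Theorem 3.3 for $\tau$) makes $|\mathcal{A}_\gamma|$ and $|\mathcal{B}_\gamma|$ equal to the sizes of the two index sets. Counting the integers $\equiv\gamma~(\bmod~w)$ among the $wm$ consecutive integers of $[1-k,wm-k]$ then gives $|\mathcal{A}_\gamma|+|\mathcal{B}_\gamma|=m$.
\item \emph{Disjointness.} You apply Proposition 3.1(2) to both $(\varphi,\psi)$ and $(\tau,\eta)$, and use (3.2) to place $u$ and the shifted index $v+k-w(m-b)$ on opposite sides of $\psi(b)$. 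Both lie in the window $[\psi(b-1)+1,\psi(b-1)+w]$, where two congruent integers would have to coincide.
\end{itemize}
The auxiliary checks also hold. The bounds $\varphi(u)\geqslant1$ and $\tau(v)\geqslant1$ follow from $\psi(0)=0$ and $\eta(0)=\psi(m)-k=0$. The degenerate cases $k=0$ and $k=wm$, where one index set is empty, are covered by the same counting.

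Citing [35] buys brevity and access to the more general statement, of which Theorem 3.12 is only a part. Your route makes Section 3 self-contained. It also shows that the Wei-type duality, like Theorem 3.5, is an elementary consequence of (3.1), (3.2) and the definition of a Galois connection.
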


Theorem 3.12 is part of [35, Theorem 3.1]. It can be applied to various coding-theoretic scenarios to show that the generalized weights of a code and those of its dual code are determined by each other. For example, with Examples 2.5 and 3.6, Theorem 3.12 recovers the Wei-type duality theorems for GHWs \cite{34}, generalized poset weights \cite{2,27}, GRWs \cite{13}, GMWs \cite{26} and Delsarte generalized weights \cite{26}. In Sections 4 and 5, we will use Theorem 3.12 to derive two new Wei-type duality theorems for codes over modules endowed with rank metric and extended generalized poset weight. We refer the reader to \cite{35} for the proof and more applications of Theorem 3.12.

\section{Generalized rank weight for modules over chain rings}
Recently, there has been a growing interest in rank metric codes over rings (see \cite{4,22}). For example, in the very recent paper \cite{4}, the authors establish a MacWilliams identity for rank metric codes over finite chain rings (see [4, Theorem 3.21]), which is further used to show that the dual code of an MRD code is again MRD, generalizing well-known results for rank metric codes over finite fields (see [18, Theorem 26]). In this section, we propose and study generalized rank weights (GRWs) and profiles for codes over principal ideal rings. This two quantities in fact form a Galois connection. Hence many results of Section 3 can be applied. In Section 4.1, we establish some basic properties for GRWs and profiles for codes over principal left ideal rings, and in sections 4.2-4.7, we will focus on codes over chain rings. All the rings considered in this section can be infinite and non-commutative.

\subsection{Generalized rank weights for modules over principal ideal rings}
\setlength{\parindent}{2em}
Throughout this subsection, let $R$ be a principal left ideal ring, $M$ be a finitely generated left $R$-module with $\rank_R(M)=m\geqslant1$, and let $\Omega$ be a non-empty finite set with $|\Omega|=n$. For any $\alpha\in M^{\Omega}$, the \textit{rank support of $\alpha$} is defined as
$$\sigma(\alpha)\triangleq\sum_{i\in\Omega} R\cdot\alpha_i.$$
Moreover, for any $D\subseteq M^{\Omega}$, the \textit{rank support of $D$} is defined as
$$\Supp(D)\triangleq\sum_{\alpha\in D}\sigma(\alpha).$$

\begin{remark}
If $R$ is a field, $M=R^{[m]}$ and $\Omega=[1,n]$, then we can regard each $\alpha\in M^{\Omega}$ as a matrix in $\Mat_{m,n}(R)$, whose $i$-th column is equal to $\alpha_i$. With this identification, we have $\sigma(\alpha)=\col(\alpha)$ for all $\alpha\in M^{\Omega}$ and $\Supp(D)=\CSupp(D)$ for all $D\subseteq M^{\Omega}$ (see Example 2.4).
\end{remark}

Now let $C\leqslant_{R}M^{\Omega}$ with $\rank_R(C)=k$. Define $\varphi:[0,k]\longrightarrow\mathbb{N}$ and $\psi:[0,m]\longrightarrow\mathbb{N}$ as
\begin{equation}\varphi(r)=\min\left\{\rank_R(\Supp(D))\mid\text{$D\leqslant_{R}C$, $\rank_R(D)=r$}\right\},\end{equation}
\begin{equation}\psi(l)=\max\left\{\rank_R(C\cap U^{\Omega})\mid U\leqslant_{R}M,\rank_R(U)=l\right\}.\end{equation}

\begin{definition}
For any $r\in[0,k]$, $\varphi(r)$ is referred to as \textit{the $r$-th generalized rank weight (GRW) of $C$}, and for any $l\in[0,m]$, $\psi(l)$ is referred to as \textit{the $l$-th rank profile of $C$}. Moreover, if $C\neq\{0\}$, then $\varphi(1)$ is referred to as the \textit{minimal rank weight of $C$}.
\end{definition}

\begin{remark}
With Remark 4.1, Definition 4.1 recovers the notion of GMWs and DRPs for Delsarte codes proposed in [26, Definitions 10 and 11], and also recovers the notion of Delsarte generalized weights for Delsarte codes proposed in [28, Definition 23] when $m<n$.
\end{remark}

\setlength{\parindent}{2em}
Now we establish two basic properties of RGWs and profiles.

\setlength{\parindent}{0em}
\begin{proposition}
For any $r\in[0,k]$, it holds that
\begin{equation}\varphi(r)=\min\left\{\rank_R(U)\mid U\leqslant_{R}M,\rank_R(C\cap U^{\Omega})\geqslant r\right\}.\end{equation}
\end{proposition}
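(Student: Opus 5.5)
We prove the two inequalities between the left-hand side $\varphi(r)$ of (4.3) and the right-hand side, which we denote by $\mu(r)$. Both rely on one observation: for $D\subseteq M^{\Omega}$ and $U\leqslant_{R}M$, we have $D\subseteq U^{\Omega}$ if and only if $\Supp(D)\subseteq U$. Indeed, $D\subseteq U^{\Omega}$ means every coordinate $\alpha_i$ of every $\alpha\in D$ lies in $U$. This holds if and only if $\sigma(\alpha)=\sum_{i}R\alpha_i\subseteq U$ for all $\alpha\in D$, that is, if and only if $\Supp(D)\subseteq U$. We also use Lemma 2.4 repeatedly: every submodule of a finitely generated module over a principal left ideal ring is finitely generated, with rank at most the rank of the ambient module. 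In particular, ranks of submodules of $M$ lie in $[0,m]$, and ranks of submodules of $C$ lie in $[0,k]$, so both minima are over nonempty sets of nonnegative integers once we check nonemptiness.

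\emph{Step 1: $\mu(r)\leqslant\varphi(r)$.} Choose $D\leqslant_{R}C$ with $\rank_R(D)=r$ attaining $\varphi(r)$; such a $D$ exists, for instance a submodule generated by $r$ of the elements in a minimal generating set of $C$, provided we check its rank is exactly $r$ (see the obstacle below). Set $U=\Supp(D)\leqslant_{R}M$. By the observation, $D\subseteq C\cap U^{\Omega}$. We then want $\rank_R(C\cap U^{\Omega})\geqslant\rank_R(D)=r$. This is monotonicity of rank under inclusion, which follows from the first assertion of Lemma 2.4 applied to the finitely generated module $C\cap U^{\Omega}$ and its submodule $D$. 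Hence $U$ is admissible for $\mu(r)$, and $\mu(r)\leqslant\rank_R(U)=\varphi(r)$.

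\emph{Step 2: $\varphi(r)\leqslant\mu(r)$.} Take $U\leqslant_{R}M$ with $\rank_R(C\cap U^{\Omega})\geqslant r$ attaining $\mu(r)$; the set is nonempty since $U=M$ gives $C\cap M^{\Omega}=C$ of rank $k\geqslant r$. Put $E=C\cap U^{\Omega}$ and choose $D\leqslant_{R}E$ with $\rank_R(D)=r$. Then $D\leqslant_{R}C$. Since $D\subseteq U^{\Omega}$, the observation gives $\Supp(D)\subseteq U$, hence $\rank_R(\Supp(D))\leqslant\rank_R(U)$ by Lemma 2.4. Therefore $\varphi(r)\leqslant\rank_R(U)=\mu(r)$.

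\emph{Main obstacle.} The key technical point is that a finitely generated module $E$ over a principal left ideal ring with $\rank_R(E)=s$ contains, for every $r\in[0,s]$, a submodule of rank exactly $r$. This is needed both for the nonemptiness of the sets defining $\varphi(r)$ and in Step 2. To prove it, take a minimal generating set $x_1,\dots,x_s$ of $E$ and let $D_j=\sum_{i\leqslant j}Rx_i$. Going from $D_{j-1}$ to $D_j$ adds one generator, so $\rank_R(D_j)\leqslant\rank_R(D_{j-1})+1$; the ranks start at $\rank_R(D_0)=0$ and end at $\rank_R(D_s)=s$. By this discrete intermediate value argument, every value in $[0,s]$ is attained by some $D_j$. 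The same step-by-one bound should also be verified directly from the definition of $\rank_R$. It needs no chain-ring hypothesis: any generating set of $D_{j-1}$ together with $x_j$ generates $D_j$.
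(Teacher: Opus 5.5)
Your proof is correct and is essentially the paper's argument: you get $\mu(r)\leqslant\varphi(r)$ by taking $U=\Supp(D)$, and $\varphi(r)\leqslant\mu(r)$ by choosing a rank-$r$ submodule of $C\cap U^{\Omega}$, with Lemma 2.4 supplying monotonicity of rank. Your extra justification that a module of rank $s$ has submodules of every rank $r\leqslant s$ (in fact $\rank_R(Rx_1+\cdots+Rx_r)=r$ for a minimal generating set) fills in a step the paper simply asserts; in Step 1, note only that this example shows the set defining $\varphi(r)$ is nonempty, not that it is a minimizer.
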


\begin{proof}
Let $r\in[0,k]$, and let $s$ denote the right hand side of (4.3). For any $D\leqslant_{R}C$ with $\rank_R(D)=r$, we have $\Supp(D)\leqslant_{R}M$ and $D\subseteq C\cap\Supp(D)^{\Omega}$, which, together with Lemma 2.4, implies that $\rank_R(C\cap \Supp(D)^{\Omega})\geqslant r$. This yields that $s\leqslant\varphi(r)$. Conversely, for any $U\leqslant_{R}M$ with $\rank_R(C\cap U^{\Omega})\geqslant r$, we can choose $D\leqslant_{R}C\cap U^{\Omega}$ with $\rank_R(D)=r$; from $D\subseteq U^{\Omega}$, we deduce that $\Supp(D)\subseteq U$, which, together with Lemma 2.4, implies that $\rank_R(\Supp(D))\leqslant\rank_R(U)$. This yields that $\varphi(r)\leqslant s$, which further establishes (4.3), as desired.
\end{proof}

\setlength{\parindent}{0em}
\begin{proposition}
For any $l\in[0,m]$, it holds that
\begin{equation}\psi(l)=\max\left\{\rank_R(C\cap U^{\Omega})\mid U\leqslant_{R}M,\rank_R(U)\leqslant l\right\}.\end{equation}
\end{proposition}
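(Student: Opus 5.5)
The plan is to show two inequalities between $\psi(l)$ and the right-hand side of (4.4), which I denote $t$. Since every $U\leqslant_{R}M$ with $\rank_R(U)=l$ also satisfies $\rank_R(U)\leqslant l$, we immediately get $\psi(l)\leqslant t$. The whole task is therefore the reverse inequality $t\leqslant\psi(l)$. For this I will show that any $U\leqslant_{R}M$ with $\rank_R(U)=l'\leqslant l$ is contained in some $U'\leqslant_{R}M$ with $\rank_R(U')=l$ exactly.

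Granting that enlargement, the argument is short. From $U\subseteq U'$ we get $C\cap U^{\Omega}\subseteq C\cap U'^{\Omega}$. By Lemma 2.4, applied with $R$ a principal left ideal ring and $C\cap U'^{\Omega}$ finitely generated as a submodule of the finitely generated module $M^{\Omega}$, we have $\rank_R(C\cap U^{\Omega})\leqslant\rank_R(C\cap U'^{\Omega})\leqslant\psi(l)$. Taking the maximum over $U$ then gives $t\leqslant\psi(l)$.

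For the enlargement, fix a generating set $\{x_1,\dots,x_m\}$ of $M$ of minimal size $m$. Starting from $U$, I will add the generators one at a time, forming the chain $U_j=U+Rx_1+\cdots+Rx_j$ for $j\in[0,m]$. Adding one cyclic submodule raises the minimal number of generators by at most $1$, so $\rank_R(U_j)\leqslant\rank_R(U_{j-1})+1$. Lemma 2.4 gives $\rank_R(U_{j-1})\leqslant\rank_R(U_j)$, because $U_{j-1}\leqslant_R U_j$ and $U_j$ is finitely generated. So the sequence $\rank_R(U_j)$ is nondecreasing with steps $0$ or $1$. It runs from $l'$ at $j=0$ to $\rank_R(M)=m$ at $j=m$, since $U_m=M$. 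By this discrete intermediate value property it takes every value in $[l',m]$, and in particular some $U_j$ has rank exactly $l$.

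The step I expect to need the most care is this monotonicity of rank along the chain. It relies on the first assertion of Lemma 2.4, namely that a submodule of a finitely generated module over a principal left ideal ring has rank at most the rank of the ambient module. That assertion is exactly what makes rank behave monotonically here. With it available, the rest is routine.
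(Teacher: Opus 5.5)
Your proof is correct and follows the same route as the paper: you take the trivial inequality $\psi(l)\leqslant t$, then enlarge any $U$ with $\rank_R(U)\leqslant l$ to some $V\supseteq U$ with $\rank_R(V)=l$, and conclude using the monotonicity of rank from Lemma 2.4. The paper only asserts that such a $V$ can be chosen, so your chain $U_j=U+Rx_1+\cdots+Rx_j$, whose ranks increase in steps of $0$ or $1$, validly supplies the step the paper leaves unjustified.
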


\begin{proof}
Let $l\in[0,m]$, and let $s$ denote the right hand side of (4.4). By definition, we have $\psi(l)\leqslant s$. Conversely, for any $U\leqslant_{R}M$ with $\rank_R(U)\leqslant l$, we can choose $V\leqslant_{R}M$ such that $\rank_R(V)=l$ and $U\subseteq V$; moreover, by Lemma 2.4, we have $\rank_R(C\cap U^{\Omega})\leqslant\rank_R(C\cap V^{\Omega})\leqslant\psi(l)$. This yields that $s\leqslant\psi(l)$, which further establishes (4.4), as desired.
\end{proof}

\setlength{\parindent}{2em}
Now we are ready to show that GRWs and profiles of $C$ form a Galois connection. The following theorem is the main result of this subsection.

\setlength{\parindent}{0em}
\begin{theorem}
We have $\varphi:[0,k]\longrightarrow[0,m]$ and $\psi:[0,m]\longrightarrow[0,k]$, and $(\varphi,\psi)$ is a Galois connection between $[0,k]$ and $[0,m]$ with $\psi(0)=0$. Moreover, for any $l\in[1,m]$, it holds that
\begin{equation}\psi(l)-\psi(l-1)\leqslant n,\end{equation}
and for any $r\in[0,k-n]$, it holds that
\begin{equation}\varphi(r)+1\leqslant\varphi(r+n).\end{equation}
\end{theorem}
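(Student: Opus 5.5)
The argument has four parts: identify the ranges of $\varphi$ and $\psi$, obtain the Galois connection from Lemma 2.1, prove the increment bound (4.5) directly, and then deduce (4.6) from Lemma 2.2.

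\textbf{Ranges and the Galois connection.} We apply Lemma 2.1 with $X$ the set of submodules $U\leqslant_R M$, $f(U)=\rank_R(C\cap U^{\Omega})$ and $g(U)=\rank_R(U)$.

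By Lemma 2.4, every $U\leqslant_R M$ satisfies $\rank_R(U)\leqslant m$. Likewise every $D\leqslant_R C$ satisfies $\rank_R(D)\leqslant k$. Hence $g$ maps into $[0,m]$ and $f$ maps into $[0,k]$.

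Taking $U=M$ gives $f(M)=\rank_R(C)=k=\max[0,k]$. Taking $U=\{0\}$ gives $g(\{0\})=0=\min[0,m]$.

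Propositions 4.1 and 4.2 rewrite $\varphi$ and $\psi$ exactly in the min/max form of Lemma 2.1. The condition $a\leqslant f(U)$ matches the constraint $\rank_R(C\cap U^\Omega)\geqslant r$ in (4.3), and $g(U)\leqslant b$ matches the constraint in (4.4). The sets involved are nonempty ($U=M$ in (4.3) and $U=\{0\}$ in (4.4)), so $\varphi$ maps into $[0,m]$ and $\psi$ maps into $[0,k]$. Lemma 2.1 then shows that $(\varphi,\psi)$ is a Galois connection.

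For $\psi(0)=0$: any $U$ of rank $0$ is $\{0\}$, so $C\cap U^\Omega=\{0\}$.

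\textbf{The bound (4.5).} Let $l\in[1,m]$ and choose $U\leqslant_R M$ with $\rank_R(U)=l$ and $\rank_R(C\cap U^\Omega)=\psi(l)$. The task is to find $V\subseteq U$ with $\rank_R(V)\leqslant l-1$ and $\rank_R(C\cap V^\Omega)\geqslant\psi(l)-n$. Proposition 4.2 then gives $\psi(l-1)\geqslant\psi(l)-n$.

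To build $V$, write $U=\sum_{j=1}^{l}Ru_j$ and set $V=\sum_{j<l}Ru_j$, so $\rank_R(V)\leqslant l-1$. Then $U/V$ is cyclic and $U^\Omega/V^\Omega\cong(U/V)^\Omega$ is generated by $n$ elements. The quotient $(C\cap U^\Omega)/(C\cap V^\Omega)$ embeds in $(U/V)^\Omega$, and over a principal left ideal ring Lemma 2.4 gives it rank at most $n$.

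It remains to show subadditivity: for a submodule $A\leqslant B$, $\rank(B)\leqslant\rank(A)+\rank(B/A)$. This holds because lifting generators of $B/A$ and adjoining generators of $A$ generates $B$. Applying it with $B=C\cap U^\Omega$ and $A=C\cap V^\Omega$ yields $\psi(l)\leqslant\rank_R(C\cap V^\Omega)+n\leqslant\psi(l-1)+n$.

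This step is the main obstacle. It relies on Lemma 2.4 holding for the subquotient, i.e.\ for submodules of finitely generated modules over a principal left ideal ring, which the lemma states.

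\textbf{The bound (4.6).} Apply Lemma 2.2 with $w=n$. This gives $\varphi(r)+1\leqslant\max\{\varphi(r+n),1\}$ for $r\in[0,k-n]$.

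Since $r+n\geqslant1$, we need $\varphi(r+n)\geqslant1$, i.e.\ that $\varphi$ is positive on $[1,k]$. This holds because $\psi(0)=0<r+n$ and the Galois connection property then gives $\varphi(r+n)>0$. So the maximum equals $\varphi(r+n)$, which yields (4.6) as in Theorem 3.2(1).
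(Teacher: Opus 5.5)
Your proof is correct and follows the paper's own argument. You obtain the Galois connection from Lemma 2.1 via Propositions 4.1--4.2, bound the increment by peeling off one generator and placing $(C\cap U^{\Omega})/(C\cap V^{\Omega})$ inside $(U/V)^{\Omega}$ via Lemma 2.4, and get (4.6) from Lemma 2.2 as in Theorem 3.2(1). The only differences are cosmetic: you take $V$ of rank at most $l-1$ and invoke Proposition 4.2, and you spell out rank subadditivity and the positivity $\varphi(r+n)\geqslant 1$, which the paper leaves implicit in Lemma 2.4 and Theorem 3.2.
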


\begin{proof}
First of all, it follows from Lemma 2.4 that $\ran(\varphi)\subseteq[0,m]$ and $\ran(\psi)\subseteq[0,k]$. Moreover, an application of Propositions 4.1, 4.2 and Lemma 2.1 implies that $(\varphi,\psi)$ is a Galois connection between $[0,k]$ and $[0,m]$ with $\psi(0)=\rank_R(C\cap \{0\}^{\Omega})=0$, as desired.

\hspace*{4mm}\,\,Now let $l\in[1,m]$. Then, we can choose $V\leqslant_{R}M$ such that $\rank_R(V)=l$ and $\rank_R(C\cap V^{\Omega})=\psi(l)$; moreover, we can choose $U\leqslant_{R}V$ such that $\rank_R(U)=l-1$ and $\rank_R(V/U)=1$. From Lemma 2.4, we deduce that
\begin{eqnarray*}
\begin{split}
\psi(l)-\psi(l-1)&\leqslant\rank_R(C\cap V^{\Omega})-\rank_R(C\cap U^{\Omega})\\
&\leqslant\rank_R((C\cap V^{\Omega})/(C\cap U^{\Omega}))\\
&\leqslant\rank_R(V^{\Omega}/U^{\Omega})\\
&=\rank_R((V/U)^{\Omega})\\
&\leqslant n,
\end{split}
\end{eqnarray*}
which establishes (4.5). Finally, (4.6) follows from (4.5) and Theorem 3.2, as desired.
\end{proof}

\setlength{\parindent}{2em}
As corollaries of Theorem 4.1, we derive Singleton bound and generalized Singleton bound for GRWs.

\setlength{\parindent}{0em}
\begin{corollary}
Suppose that $C\neq\{0\}$ and $d=\varphi(1)$. Then, we have
\begin{equation}d=\min\left\{\rank_R(U)\mid U\leqslant_{R}M,C\cap U^{\Omega}\neq\{0\}\right\}=\min\left\{\rank_R(\sigma(\alpha))\mid\alpha\in C,\alpha\neq0\right\}.\end{equation}
Moreover, we have $k\leqslant \rank_R(M^{n-d+1})\leqslant m(n-d+1)$ and $k\leqslant n(m-d+1)$.
\end{corollary}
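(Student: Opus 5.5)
The two expressions for $d$ come from Proposition 4.1 and Lemma 2.4, and the two bounds on $k$ come from a direct projection argument and from Proposition 3.3, respectively.

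\textbf{The first equality.} By Proposition 4.1 with $r=1$,
$$d=\min\{\rank_R(U)\mid U\leqslant_R M,\ \rank_R(C\cap U^{\Omega})\geqslant1\}.$$
A module has rank at least $1$ exactly when it is nonzero, so this is the first expression in (4.7).

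\textbf{The second equality.} Suppose $U\leqslant_R M$ and $0\neq\alpha\in C\cap U^{\Omega}$. Then $\sigma(\alpha)\subseteq U$, so Lemma 2.4 gives $\rank_R(\sigma(\alpha))\leqslant\rank_R(U)$. Conversely, for a nonzero $\alpha\in C$, the submodule $U=\sigma(\alpha)$ satisfies $\alpha\in C\cap U^{\Omega}$. Taking minima on both sides shows the two minima agree.

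\textbf{The bound $k\leqslant\rank_R(M^{n-d+1})\leqslant m(n-d+1)$.} This is the one step needing a genuine argument. I would imitate the puncturing proof of the classical Singleton bound.
\begin{itemize}
\item Choose a subset $J\subseteq\Omega$ with $|J|=d-1$; note $d\geqslant1$.
\item Consider the $R$-linear projection $\pi:C\to M^{\Omega-J}\cong M^{n-d+1}$.
\item If $\alpha\in\ker\pi$, then $\alpha_i=0$ for all $i\notin J$. Hence $\sigma(\alpha)=\sum_{i\in J}R\alpha_i$ is generated by $d-1$ elements, so $\rank_R(\sigma(\alpha))\leqslant d-1<d$.
\item By the second expression in (4.7), this forces $\alpha=0$. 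So $\pi$ is injective and $C$ embeds into $M^{n-d+1}$.
\item Lemma 2.4 then gives $k=\rank_R(C)\leqslant\rank_R(M^{n-d+1})$.
\item Finally, $M^{n-d+1}$ is generated by $m(n-d+1)$ elements, since $M$ is generated by $m$ elements.
\end{itemize}
The delicate point is the injectivity, which relies on $\rank_R(\sigma(\alpha))$ being at most the number of nonzero coordinates. This holds because $\sigma(\alpha)$ is generated by those coordinates.

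\textbf{The bound $k\leqslant n(m-d+1)$.} Theorem 4.1 shows that $(\varphi,\psi)$ is a Galois connection between $[0,k]$ and $[0,m]$ with $\psi(0)=0$ and increments bounded by $n$. The standing hypotheses of Section 3.3 therefore hold with $w=n$. Since $k\geqslant1$ when $C\neq\{0\}$, Proposition 3.3 (inequality (3.4)) yields $k\leqslant n(m-d+1)$ directly.
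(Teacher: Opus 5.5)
Your proof is correct and follows essentially the same route as the paper: Proposition 4.1 gives the first equality, Lemma 2.4 gives the second, the same puncturing argument gives $k\leqslant\rank_R(M^{n-d+1})$, and Theorem 4.1 with Proposition 3.3 gives $k\leqslant n(m-d+1)$. For the second equality you compare directly with the $U$-formulation, whereas the paper compares with rank-one subcodes $D$ via $\sigma(\alpha)\subseteq\Supp(D)$ and $\Supp(R\alpha)=\sigma(\alpha)$; for the puncturing step the paper phrases your injectivity of $\pi$ as $C\cap L=\{0\}$ and $\rank_R(C)\leqslant\rank_R(M^{\Omega}/L)$. The only unstated detail, also implicit in the paper, is that a set $J$ with $|J|=d-1$ exists: this needs $d\leqslant n$, which holds because each $\sigma(\alpha)$ is generated by the $n$ coordinates of $\alpha$.
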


\begin{proof}
The first equality of (4.7) follows from Proposition 4.1. Next, let $b$ denote the right hand side of (4.7). For any $D\leqslant_{R}C$ with $\rank_R(D)=1$, we can choose $\alpha\in D-\{0\}$; moreover, it follows from Lemma 2.4 that $\rank_R(\sigma(\alpha))\leqslant\rank_R(\Supp(D))$, which further implies that $b\leqslant d$. Conversely, for any $\alpha\in C-\{0\}$, we can choose $D\leqslant_{R}R\alpha$ with $\rank_R(D)=1$; moreover, it follows from $\Supp(D)\subseteq\Supp(R\alpha)=\sigma(\alpha)$ and Lemma 2.4 that $\rank_R(\Supp(D))\leqslant\rank_R(\sigma(\alpha))$, which further implies that $d\leqslant b$, as desired.

\hspace*{2mm}\,\,Now let $I\subseteq\Omega$ with $|I|=d-1$, and let $L=\{\alpha\in M^{\Omega}\mid\forall~i\in\Omega-I:\alpha_i=0\}$. For any $\alpha\in C\cap L$, by $\rank_R(\sigma(\alpha))\leqslant d-1$ and (4.7), we have $\alpha=0$. From $C\cap L=\{0\}$ and Lemma 2.4, we deduce that $\rank_R(C)\leqslant\rank_R(M^{\Omega}/L)=\rank_R(M^{n-d+1})\leqslant m(n-d+1)$, as desired. Finally, it follows from Theorem 4.1 and Proposition 3.3 that $k\leqslant n(m-d+1)$, as desired.
\end{proof}

\begin{corollary}
We have $\varphi(r)\leqslant m-\lfloor\frac{k-r}{n}\rfloor$ for all $r\in[0,k]$. Moreover, suppose that $a\in[0,k]$ and $\varphi(a)=m-\lfloor\frac{k-a}{n}\rfloor$. Then, for any $c\in[a,k]$ with $c\equiv a~(\bmod~n)$, it holds that $\varphi(c)=m-\lfloor\frac{k-c}{n}\rfloor$.
\end{corollary}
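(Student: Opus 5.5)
This corollary is the specialization of Theorem 3.2 to the Galois connection produced by Theorem 4.1. The plan is to check that $(\varphi,\psi)$ meets the standing hypotheses of Section 3 with $w=n$, and then read off parts (3) and (4) of Theorem 3.2.

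First, by Theorem 4.1, $\varphi:[0,k]\longrightarrow[0,m]$ and $\psi:[0,m]\longrightarrow[0,k]$ form a Galois connection between $[0,k]$ and $[0,m]$. Theorem 4.1 also gives $\psi(0)=0$, and (4.5) gives $\psi(l)-\psi(l-1)\leqslant n$ for all $l\in[1,m]$. Since $n=|\Omega|\geqslant1$, the integer $w\triangleq n$ lies in $\mathbb{Z}^{+}$. So condition (3.1) holds, and all standing assumptions at the start of Section 3.3 are satisfied with $w=n$. The auxiliary maps $\eta$ and $\tau$ exist by Lemma 2.3, but they play no role here.

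Next, applying (3) of Theorem 3.2, namely inequality (3.6), with $w=n$ gives $\varphi(r)\leqslant m-\lfloor\frac{k-r}{n}\rfloor$ for every $r\in[0,k]$. This is the first assertion.

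For the ``moreover'' part, take $a\in[0,k]$ with $\varphi(a)=m-\lfloor\frac{k-a}{n}\rfloor$ and $c\in[a,k]$ with $c\equiv a~(\bmod~n)$. Then (4) of Theorem 3.2 with $w=n$ gives $\varphi(c)=m-\lfloor\frac{k-c}{n}\rfloor$ directly. The underlying mechanism is that (3.5) yields $\varphi(c)\geqslant\varphi(a)+\frac{c-a}{n}$, which matches the upper bound just established.

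There is no genuine obstacle. The only point needing care is that Section 3 assumes the Galois connection is between $[0,k]$ and $[0,m]$ with $\psi(0)=0$ and the increment bound (3.1). The ranges, $\psi(0)=0$ and (4.5) are exactly what Theorem 4.1 provides, so the argument is a direct invocation.
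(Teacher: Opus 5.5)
Your proposal is correct and matches the paper's proof, which is simply ``This follows from Theorems 4.1 and 3.2.'' You correctly check that Theorem 4.1 supplies the Galois connection with $\psi(0)=0$ and the increment bound (4.5), so that (3) and (4) of Theorem 3.2 apply with $w=n$.
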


\begin{proof}
This follows from Theorems 4.1 and 3.2.
\end{proof}

\setlength{\parindent}{2em}
Now we define MRD/QMRD/NMRD/$i$-MRD codes and $(h,r)$-evasiveness of codes.

\setlength{\parindent}{0em}
\begin{definition}
Suppose that $C\neq\{0\}$ and $d$ is the minimal rank weight of $C$. Then,

{\bf{(1)}}\,\,We say that $C$ is MRD if $k=n(m-d+1)$;

{\bf{(2)}}\,\,We say that $C$ is quasi-MRD or QMRD if $n\nmid k$ and $\lceil\frac{k}{n}\rceil=m-d+1$;

{\bf{(3)}}\,\,We say that $C$ is near-MRD or NMRD provided that $k=n(m-d)$, and moreover, when $d\leqslant m-2$, the $(n+1)$-th GRW of $C$ is equal to $d+2$;

{\bf{(4)}}\,\,For any $i\in[1,\lceil\frac{k}{n}\rceil]$, we say that $C$ is $i$-MRD if the $((i-1)n+1)$-th GRW of $C$ is equal to $m-\lceil\frac{k}{n}\rceil+i$.
\end{definition}

\begin{definition}
For any $(h,r)\in\mathbb{N}^{2}$, we say that $C$ is $(h,r)$-evasive if $k\geqslant n(h+1)$ and the $(k-n(h+1)+1)$-th GRW of $C$ is larger than or equal to $m-r$.
\end{definition}

\setlength{\parindent}{2em}
Definitions 4.2 and 4.3 extend some notions for rank metric codes over fields, as detailed in the following remark.

\setlength{\parindent}{0em}
\begin{remark}
{\bf{(1)}}\,\,Following Remark 4.1, suppose that $R$ is a field, $M=R^{[m]}$ and $\Omega=[1,n]$. Then, (1) of Definition 4.2 recovers [17, Definition 3.8] when $m\leqslant n$, (2) of Definition 4.2 recovers [8, Definition 10], (3) of Definition 4.2 forms a subclass of the NMRD codes proposed in [7, Definition 6.8], and (4) of Definition 4.2 extends [7, Definition 6.1].

{\bf{(2)}}\,\,Following Example 3.2, let $\mathbb{E}/\mathbb{F}$ be a finite dimensional field extension with $\dim_{\mathbb{F}}(\mathbb{E})=n$, and set $R=\mathbb{F}$, $M=R^{[m]}$ and $\Omega=[1,n]$. Moreover, let $A\subseteq\mathbb{E}^{[m]}$ be a Gabidulin code with $\dim_{\mathbb{E}}(A)=p$ and $\mathbf{d}_p^{G}(A)=m$, $L\in\Mat_{m,p}(\mathbb{E})$ satisfy that $A=\{L\gamma^{T}\mid\gamma\in\mathbb{E}^{p}\}$, and let $U$ be the $\mathbb{F}$-subspace of $\mathbb{E}^{p}$ generated by all the rows of $L$. Suppose that $C$ is the Delsarte code associated to $A$ with respect of an ordered basis of $\mathbb{E}/\mathbb{F}$. For any $(h,r)\in\mathbb{N}^{2}$ with $h\leqslant \min\{p-1,r\}$, from [25, Theorem 3.3] and $\mathbf{d}_{(k-n(h+1)+1)}^{M}(C)=\mathbf{d}_{p-h}^{G}(A)$ (see Example 3.2), we deduce that $U$ is $(h,r)$-evasive in $\mathbb{E}^{p}$ if and only if $C$ is $(h,r)$-evasive.
\end{remark}

\setlength{\parindent}{2em}
We collect some connections among codes proposed in Definition 4.2 in the following lemma.

\setlength{\parindent}{0em}
\begin{lemma}
Suppose that $C\neq\{0\}$ and $d=\varphi(1)$. Then, the following three statements hold:

{\bf{(1)}}\,\,$C$ is $1$-MRD if and only if $\lceil\frac{k}{n}\rceil=m-d+1$, if and only if $C$ is either MRD or QMRD;

{\bf{(2)}}\,\,If $k\geqslant n(m-1)+1$, then we have $\lceil\frac{k}{n}\rceil=m$, $d=1$ and $C$ is $1$-MRD;

{\bf{(3)}}\,\,If $d\leqslant m-2$, then $C$ is NMRD if and only if $k=n(m-d)$ and $C$ is $2$-MRD.
\end{lemma}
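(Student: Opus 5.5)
The plan is to work throughout in the Galois-connection setting of Section 3, with $w=n$. By Theorem 4.1, $(\varphi,\psi)$ is a Galois connection between $[0,k]$ and $[0,m]$ with $\psi(0)=0$ and $\psi(l)-\psi(l-1)\leqslant n$. So every standing hypothesis of Section 3 holds, and Lemma 2.3, Theorem 3.2 and Theorem 3.5 may be used freely. Recall that $C$ is $1$-MRD exactly when $\varphi(1)=m-\lceil\frac{k}{n}\rceil+1$, i.e. $d=m-\lceil\frac{k}{n}\rceil+1$.

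\textbf{(1).} The first equivalence is a restatement of the definition of $1$-MRD. For the second, split according to whether $n\mid k$.
\begin{itemize}
\item If $n\mid k$, then $\lceil\frac{k}{n}\rceil=\frac{k}{n}$. The condition $\lceil\frac{k}{n}\rceil=m-d+1$ then reads $k=n(m-d+1)$, which is MRD. In this case QMRD is impossible, since it requires $n\nmid k$.
\item If $n\nmid k$, the condition is verbatim the QMRD definition. MRD is impossible here, since $k=n(m-d+1)$ would force $n\mid k$.
\end{itemize}
Combining the two cases gives the second equivalence.

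\textbf{(2).} Assume $k\geqslant n(m-1)+1$. Since $k\leqslant nm$ (Lemma 2.3), we get $\lceil\frac{k}{n}\rceil=m$. By Corollary 4.1, $k\leqslant n(m-d+1)$, so $n(m-1)+1\leqslant n(m-d+1)$. This forces $d\leqslant1$, and since $k\geqslant1$ gives $d\geqslant1$, we have $d=1$. Then $m-\lceil\frac{k}{n}\rceil+1=1=d$, so $C$ is $1$-MRD by (1).

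\textbf{(3).} Assume $d\leqslant m-2$.
\begin{itemize}
\item Both sides of the claimed equivalence contain the condition $k=n(m-d)$, so it suffices to show that under $k=n(m-d)$ the following two statements coincide: the $(n+1)$-th GRW equals $d+2$, and $C$ is $2$-MRD.
\item Under $k=n(m-d)$ we have $\lceil\frac{k}{n}\rceil=m-d$. Hence $2$-MRD means $\varphi((2-1)n+1)=\varphi(n+1)=m-(m-d)+2=d+2$, which is exactly the NMRD condition.
\item One must check that $2$ lies in $[1,\lceil\frac{k}{n}\rceil]$, so that $2$-MRD is defined. This holds because $m-d\geqslant2$.
\item Likewise $n+1\leqslant k$ holds, since $k=n(m-d)\geqslant 2n$, so $\varphi(n+1)$ is defined.
\end{itemize}

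There is no real obstacle here. The only care needed is checking these index ranges and, in (2), using the bound $k\leqslant nm$ to pin down $\lceil\frac{k}{n}\rceil$.
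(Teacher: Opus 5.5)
Your proof is correct. The paper states this lemma without proof, treating it as immediate from Definition 4.2, and your argument is exactly that routine verification: you unpack the definitions, split (1) according to whether $n\mid k$, use $k\leqslant nm$ together with the Singleton bound $k\leqslant n(m-d+1)$ of Corollary 4.1 to force $d=1$ in (2), and check the index ranges ($2\leqslant\lceil\frac{k}{n}\rceil$ and $n+1\leqslant k$) in (3).
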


\setlength{\parindent}{2em}
Now with the help of Theorem 4.1 and results established in Section 3, we give some basic properties of MRD codes, $i$-MRD codes and evasive property. First, we show that GRWs and profiles of MRD codes are completely determined by $m,n,k$. The following proposition is an immediate corollary of Theorems 4.1 and 3.1.

\setlength{\parindent}{0em}
\begin{proposition}
Suppose that $C\neq\{0\}$ is MRD and $d=\varphi(1)$. Then, we have $\psi(l)=n(l-d+1)$ for all $l\in[d-1,m]$, and $\varphi(a)=\left\lceil\frac{a}{n}\right\rceil+d-1$ for all $a\in[1,k]$.
\end{proposition}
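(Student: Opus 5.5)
The plan is to derive Proposition 4.3 directly from Theorem 3.1, with Theorem 4.1 supplying all the hypotheses that Theorem 3.1 needs.

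First, Theorem 4.1 shows that $(\varphi,\psi)$ is a Galois connection between $[0,k]$ and $[0,m]$. It also gives $\psi(0)=0$ and $\psi(l)-\psi(l-1)\leqslant n$ for all $l\in[1,m]$. These are exactly the standing assumptions of Section 3 (see (3.1)) with $w=n$. Consequently, $\eta$ and $\tau$ are well defined by Lemma 2.3, and every result of Section 3 applies to the pair $(\varphi,\psi)$.

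Second, since $C\neq\{0\}$, Lemma 2.4 gives $k=\rank_R(C)\geqslant1$. By (1) of Definition 4.2, the MRD hypothesis says precisely $k=n(m-d+1)$ with $d=\varphi(1)$. So the hypotheses $k\geqslant1$, $d=\varphi(1)$ and $k=w(m-d+1)$ of Theorem 3.1 all hold with $w=n$.

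Applying Theorem 3.1 then yields $\psi(l)=n(l-d+1)$ for all $l\in[d-1,m]$ and $\varphi(a)=\lceil\frac{a}{n}\rceil+d-1$ for all $a\in[1,k]$, which is the claim.

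There is no real obstacle here. The only point to check is that $k\geqslant1$, which follows because $C$ is a nonzero finitely generated module and so has rank at least one. The substantive work, namely the Galois connection and the increment bound $n$, was already done in Theorem 4.1.
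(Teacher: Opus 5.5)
Your proposal is correct and is exactly the paper's argument: the paper also treats this as an immediate corollary, using Theorem 4.1 to place $(\varphi,\psi)$ in the Section 3 framework with $w=n$ and then applying Theorem 3.1. One small point: $k\geqslant1$ follows directly from the definition of $\rank_R$ for a nonzero module, with Lemma 2.4 supplying only finite generation (so that the rank is defined).
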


\setlength{\parindent}{2em}
The following proposition is a corollary of the generalized Singleton bound (see Corollary 4.2), and it extends [7, Lemma 6.2] to $i$-MRD codes over rings.

\setlength{\parindent}{0em}
\begin{proposition}
Suppose that $k\geqslant1$. Then, the following three statements hold true:

{\bf{(1)}}\,\,For any $i\in[1,\lceil\frac{k}{n}\rceil]$, it holds that $\varphi((i-1)n+1)\leqslant m-\lceil\frac{k}{n}\rceil+i$;

{\bf{(2)}}\,\,Suppose that $i\in[1,\lceil\frac{k}{n}\rceil]$ and $C$ is $i$-MRD. Then, $C$ is $j$-MRD for all $j\in[i,\lceil\frac{k}{n}\rceil]$;

{\bf{(3)}}\,\,$C$ is $\lceil\frac{k}{n}\rceil$-MRD if and only if $\varphi(\lfloor\frac{k-1}{n}\rfloor n+1)=m$,  if and only if $C$ is $i$-MRD for some $i\in[1,\lceil\frac{k}{n}\rceil]$.
\end{proposition}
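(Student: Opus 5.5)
The plan is to reduce everything to Corollary 4.2, which gives the generalized Singleton bound $\varphi(r)\leqslant m-\lfloor\frac{k-r}{n}\rfloor$ together with the propagation property for indices in one residue class mod $n$. The whole statement is bookkeeping with floors and ceilings around that corollary.

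For (1), I would set $r=(i-1)n+1$ with $i\in[1,\lceil\frac{k}{n}\rceil]$. Since $i\leqslant\lceil\frac{k}{n}\rceil$, we have $(i-1)n+1\leqslant k$, so $r\in[1,k]$ and Corollary 4.2 applies. Write $k-1=qn+s$ with $s\in[0,n-1]$, so that $\lceil\frac{k}{n}\rceil=q+1$. Then
$$\left\lfloor\frac{k-r}{n}\right\rfloor=\left\lfloor\frac{k-1}{n}\right\rfloor-(i-1)=q-i+1=\left\lceil\frac{k}{n}\right\rceil-i.$$
Substituting gives $\varphi((i-1)n+1)\leqslant m-\lceil\frac{k}{n}\rceil+i$.

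For (2), suppose $C$ is $i$-MRD. By the identity just computed, $\varphi(a)=m-\lfloor\frac{k-a}{n}\rfloor$ holds at $a=(i-1)n+1$. For $j\in[i,\lceil\frac{k}{n}\rceil]$, put $c=(j-1)n+1$. Then $c\in[a,k]$ and $c\equiv a\pmod n$. The ``moreover'' part of Corollary 4.2 gives $\varphi(c)=m-\lfloor\frac{k-c}{n}\rfloor=m-\lceil\frac{k}{n}\rceil+j$, so $C$ is $j$-MRD.

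For (3), note that $\lceil\frac{k}{n}\rceil-1=\lfloor\frac{k-1}{n}\rfloor$. Hence $C$ is $\lceil\frac{k}{n}\rceil$-MRD exactly when $\varphi(\lfloor\frac{k-1}{n}\rfloor n+1)=m$, which is the first equivalence. If $C$ is $\lceil\frac{k}{n}\rceil$-MRD, it is trivially $i$-MRD for some $i$, namely $i=\lceil\frac{k}{n}\rceil$. Conversely, if $C$ is $i$-MRD for some $i$, then (2) with $j=\lceil\frac{k}{n}\rceil$ gives the $\lceil\frac{k}{n}\rceil$-MRD property.

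There is no real obstacle here. The only care needed is the floor/ceiling identity $\lfloor\frac{k-(i-1)n-1}{n}\rfloor=\lceil\frac{k}{n}\rceil-i$ and checking that every index used lies in $[1,k]$.
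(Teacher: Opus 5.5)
Your proposal is correct and follows the paper's route: the paper states this proposition as a direct corollary of the generalized Singleton bound (Corollary 4.2), and you use both parts of that corollary, the bound for (1) and the same-residue-class propagation for (2) and (3). Your index bookkeeping is also sound, namely that $(i-1)n+1\leqslant k$ and $\lfloor\frac{k-(i-1)n-1}{n}\rfloor=\lceil\frac{k}{n}\rceil-i$.
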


\setlength{\parindent}{2em}
The following proposition follows from Theorem 4.1 and Proposition 3.8. With (2) of remark 4.3, it generalizes [3, Proposition 2.6] and [25, Remark 3.2] to codes over rings.

\setlength{\parindent}{0em}
\begin{proposition}
{\bf{(1)}}\,\,Let $h\in\mathbb{N}$ with $k\geqslant n(h+1)$. Then, it holds that $\varphi(k-n(h+1)+1)\leqslant m-h$. Moreover, $C$ is $(h,h)$-evasive if and only if $\varphi(k-n(h+1)+1)=m-h$.

{\bf{(2)}}\,\,Let $(h,r)\in\mathbb{N}^{2}$ such that $C$ is $(h,r)$-evasive. Then, we have $h\leqslant r$, and $C$ is $(h-a,r-a)$-evasive for all $a\in[0,h]$.
\end{proposition}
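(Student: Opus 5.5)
The plan is to reduce the statement to the abstract Proposition 3.8 via Theorem 4.1, exactly as the paper does for Propositions 4.4 and 4.5. By Theorem 4.1, $(\varphi,\psi)$ is a Galois connection between $[0,k]$ and $[0,m]$ with $\psi(0)=0$ and $\psi(l)-\psi(l-1)\leqslant n$ for all $l\in[1,m]$. So the standing hypotheses of Section 3 (from Subsection 3.3 on) hold with $w=n$, and $\eta,\tau$ can be defined as in (3.2). Under this specialization, Definition 4.3 of $(h,r)$-evasiveness of $C$ coincides verbatim with Definition 3.1 of $(h,r)$-evasiveness of $\varphi$ for $w=n$: both require $k\geqslant n(h+1)$ and $\varphi(k-n(h+1)+1)\geqslant m-r$.

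For (1), I would invoke the generalized Singleton bound in the form of Corollary 4.2 (equivalently (3.6) with $w=n$), taking $r=k-n(h+1)+1\in[1,k]$. This gives
$$\varphi(k-n(h+1)+1)\leqslant m-\left\lfloor\frac{n(h+1)-1}{n}\right\rfloor=m-h,$$
since $\lfloor (n(h+1)-1)/n\rfloor=h$ for $n\geqslant1$. The ``moreover'' part then follows at once. $(h,h)$-evasiveness means $\varphi(k-n(h+1)+1)\geqslant m-h$, and combined with the upper bound this is equivalent to equality.

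For (2), I would follow the proof of Proposition 3.8 (2). From (1), $m-r\leqslant\varphi(k-n(h+1)+1)\leqslant m-h$, hence $h\leqslant r$. Now fix $a\in[0,h]$. First, $k\geqslant n(h+1)\geqslant n(h-a+1)$. Next, apply (3.5) of Theorem 3.2, or equivalently iterate (4.6) $a$ times, between the indices $k-n(h+1)+1$ and $k-n(h-a+1)+1$, which differ by exactly $an$. This yields
$$\varphi(k-n(h-a+1)+1)\geqslant\varphi(k-n(h+1)+1)+a\geqslant m-(r-a),$$
which is $(h-a,r-a)$-evasiveness.

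There is no real obstacle here. The only point needing care is checking that the indices lie in $[1,k]$, which follows from $k\geqslant n(h+1)$ and $a\geqslant0$, and that $n\geqslant1$ so the floor computation is valid. Alternatively, the whole statement is simply Proposition 3.8 applied with $w=n$ once Theorem 4.1 is in place.
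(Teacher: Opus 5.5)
Your proposal is correct and matches the paper's route: the paper obtains this proposition directly from Theorem 4.1, which places $(\varphi,\psi)$ in the Section 3 framework with $w=n$, together with Proposition 3.8. Your write-up simply unpacks the proof of Proposition 3.8 via the generalized Singleton bound (3.6) and inequality (3.5), and your index checks ($k-n(h+1)+1\in[1,k]$ and $\lfloor (n(h+1)-1)/n\rfloor=h$) are handled correctly.
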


\setlength{\parindent}{2em}
Now we give an alternative characterization of GRWs and profiles for codes over chain rings. The following proposition will be used in Section 4.2.

\begin{proposition}
Suppose that $R$ is a left chain ring and $M$ is a free left $R$-module. Then, for any $l\in[0,m]$, it holds that
\begin{equation}\psi(l)=\max\left(\rank_R(C\cap V^{\Omega})\mid \text{$V$ is a free left $R$-submodule of $M$, $\rank_R(V)=l$}\right).\end{equation}
In addition, there exists a free left $R$-submodule $E\leqslant_{R}M^{\Omega}$ such that $\rank_R(E)=k$ and $C\subseteq E$; moreover, the GRWs and rank profiles of $C$ coincide with those of $E$, respectively.
\end{proposition}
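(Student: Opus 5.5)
The plan rests on Lemma 2.4 together with Propositions 4.1 and 4.2. Both assertions reduce to one observation: when $R$ is a left chain ring and $M$ is free, any submodule can be enlarged to a free submodule of the same rank, and this enlargement does not change the rank of intersections with other submodules.

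\emph{Formula (4.8).} Write $s$ for its right-hand side. Since free submodules of rank $l$ are among all submodules of rank $l$, we get $s\leqslant\psi(l)$ directly from (4.2). For the reverse inequality, take $U\leqslant_{R}M$ with $\rank_R(U)=l$ and $\rank_R(C\cap U^{\Omega})=\psi(l)$. Lemma 2.4 gives a free $V\leqslant_R M$ with $U\subseteq V$ and $\rank_R(V)=l$. Then $C\cap U^{\Omega}\subseteq C\cap V^{\Omega}$, and the monotonicity of rank in Lemma 2.4 yields $\psi(l)\leqslant\rank_R(C\cap V^{\Omega})\leqslant s$. The only care needed is that $V$ is a submodule of $M$, a finitely generated module, so the rank bounds of Lemma 2.4 apply.

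\emph{The module $E$.} The ambient module $M^{\Omega}\cong M^{n}$ is free, so Lemma 2.4, applied with ambient module $M^{\Omega}$ and $B=C$, produces a free $E\leqslant_R M^{\Omega}$ with $C\subseteq E$ and $\rank_R(E)=k$. It also gives $\rank_R(C\cap W)=\rank_R(E\cap W)$ for every $W\leqslant_R M^{\Omega}$. Taking $W=U^{\Omega}$ for an arbitrary $U\leqslant_RM$, we get $\rank_R(C\cap U^{\Omega})=\rank_R(E\cap U^{\Omega})$. Hence the profiles agree, by definition (4.2), which is stated for arbitrary submodules $U$. The GRWs also agree, by Proposition 4.1: formula (4.3) expresses $\varphi(r)$ purely through the quantities $\rank_R(C\cap U^{\Omega})$, so it gives the same value for $C$ and for $E$ for all $r\in[0,k]$. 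The two codes have the same rank $k$, so the domains coincide as well.

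\emph{Main obstacle.} The one point to check is that Lemma 2.4 really supplies the intersection equality for every submodule of the ambient free module, and not only for free ones. As stated, it holds ``for any $V\leqslant_R M$'', so we may use it with $M^{\Omega}$ as the ambient module. Nothing else is needed. In particular, (4.8) itself is not required for the statement about $E$; it is proved separately and will be used in Section 4.2.
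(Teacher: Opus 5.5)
Your proof is correct and follows the paper's argument essentially step for step. For (4.8), you enlarge $U$ to a free $V$ of the same rank via Lemma 2.4 and use rank monotonicity; for the claim about $E$, you apply Lemma 2.4 in the free module $M^{\Omega}$ and then transfer the profiles through (4.2) and the GRWs through Proposition 4.1, exactly as the paper does.
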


\begin{proof}
Let $l\in[0,m]$, and let $s$ denote the right hand side of (4.8). By definition, we have $s\leqslant\psi(l)$. Conversely, for any $U\leqslant_{R}M$ with $\rank_R(U)=l$, by Lemma 2.4, there exists a free left $R$-submodule $V\leqslant_{R}M$ such that $U\subseteq V$ and $\rank_R(V)=l$; moreover, Lemma 2.4 implies that $\rank_R(C\cap U^{\Omega})\leqslant\rank_R(C\cap V^{\Omega})\leqslant s$, which further implies that $\psi(l)\leqslant s$, as desired. Finally, the rest immediately follows from Proposition 4.1, (4.2) and Lemma 2.4.
\end{proof}

\setlength{\parindent}{2em}
We end this subsection by noting that all the definitions and results in this subsection can be established for finitely generated right modules over a principal right ideal ring in a parallel fashion.

\subsection{Wei-type duality theorem}
\setlength{\parindent}{2em}
Throughout the rest of this section, let $R$ be a chain ring, $m\in\mathbb{Z}^{+}$, $P$ be a finitely generated free left $R$-module with $\rank_R(P)=m$, $Q$ be a finitely generated free right $R$-module with $\rank_R(Q)=m$, and let $\Omega$ be a non-empty finite set with $|\Omega|=n$. Let $f:P\times Q\longrightarrow R$ be a non-degenerated $(R,R)$-bilinear map, and define $\langle~,~\rangle:P^{\Omega}\times Q^{\Omega}\longrightarrow R$ as
\begin{equation}\langle\alpha,\beta\rangle=\sum_{i\in\Omega}f(\alpha_i,\beta_i).\end{equation}
For any $A\subseteq P^{\Omega}$ and $B\subseteq Q^{\Omega}$, let $A^{\bot}\leqslant_{R}Q^{\Omega}$ denote the right dual of $A$ with respect to $\langle~,~\rangle$, and let $^{\bot}B\leqslant_{R}P^{\Omega}$ denote the left dual of $B$ with respect to $\langle~,~\rangle$.

The following Wei-type duality theorem is the main result of this subsection. It generalizes [26, Proposition 65] to codes over chain rings.

\setlength{\parindent}{0em}
\begin{theorem}
Let $C$ be a free left $R$-submodule of $P^{\Omega}$ with $\rank_R(C)=k$. Then, $C^{\bot}$ is a free right $R$-submodule of $Q^{\Omega}$ with $\rank_R(C^{\bot})=mn-k$. Moreover, the following two statements hold:

{\bf{(1)}}\,\,Let $\psi:[0,m]\longrightarrow[0,k]$ and $\eta:[0,m]\longrightarrow[0,mn-k]$ such that $\psi(l)$ and $\eta(l)$ are equal to the $l$-th rank profile of $C$ and $C^{\bot}$, respectively. Then, for any $l\in[0,m]$, it holds that
$$\eta(l)=\psi(m-l)+nl-k;$$

{\bf{(2)}}\,\,Let $\varphi:[0,k]\longrightarrow[0,m]$ such that $\varphi(r)$ is equal to the $r$-th GRW of $C$, and let $\tau:[0,mn-k]\longrightarrow[0,m]$ such that $\tau(s)$ is equal to the $s$-th GRW of $C^{\bot}$. For any $\gamma\in\mathbb{Z}$, define
$$\mathcal{G}_\gamma=\{\varphi(u)\mid u\in[1,k],u\equiv\gamma+k~(\bmod~n)\},$$
$$\mathcal{H}_\gamma=\{m+1-\tau(v)\mid v\in[1,mn-k],v\equiv\gamma~(\bmod~n)\}.$$
Then, for any $\gamma\in\mathbb{Z}$, it holds that $\mathcal{G}_\gamma\cap\mathcal{H}_\gamma=\emptyset$, $\mathcal{G}_\gamma\cup\mathcal{H}_\gamma=[1,m]$.
\end{theorem}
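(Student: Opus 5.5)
The plan is to establish (1) by a direct computation at the level of submodules, and then obtain (2) formally from (1) using Theorem 4.1 and Theorem 3.12.

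The first assertion, that $C^{\bot}$ is free of rank $mn-k$, comes from (2) of Lemma 2.5. We apply it to the free modules $P^{\Omega}$ and $Q^{\Omega}$, each of rank $mn$. For this we first check that $\langle~,~\rangle$ is non-degenerate. Suppose $\alpha\in P^{\Omega}$ pairs to zero with every $\beta\in Q^{\Omega}$. Taking $\beta$ supported on a single coordinate $i$ gives $f(\alpha_i,y)=0$ for all $y\in Q$, so $\alpha_i=0$ by non-degeneracy of $f$. The same argument works on the other side.

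For (1), fix $l\in[0,m]$. By Proposition 4.5 (applied on the left to $C$, and in its right-module version to $C^{\bot}$), both $\psi(l)$ and $\eta(l)$ may be computed using only free submodules. For a free right submodule $W\leqslant Q$ with $\rank_R(W)=l$, set $V={^{\bot}W}\leqslant P$, the left dual with respect to $f$. By (2) of Lemma 2.5, $V$ is free of rank $m-l$ and $V^{\bot}=W$.

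The key identity is ${^{\bot}(W^{\Omega})}=V^{\Omega}$ with respect to $\langle~,~\rangle$. This holds coordinatewise, again by testing against vectors supported on one coordinate. Now apply the last formula of (2) of Lemma 2.5 to $A=C$ and $B=W^{\Omega}$; both are free, with ranks $k$ and $nl$. This gives
$$\rank_R(C\cap V^{\Omega})-\rank_R(C^{\bot}\cap W^{\Omega})=k-nl,$$
that is, $\rank_R(C^{\bot}\cap W^{\Omega})=\rank_R(C\cap V^{\Omega})+nl-k$.

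The map $W\mapsto {^{\bot}W}$ is a bijection between free right submodules of rank $l$ and free left submodules of rank $m-l$, with inverse $V\mapsto V^{\bot}$, by (2) of Lemma 2.5. Taking maxima over $W$ therefore yields $\eta(l)=\psi(m-l)+nl-k$.

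For (2), Theorem 4.1 shows that $(\varphi,\psi)$ is a Galois connection with $\psi(0)=0$ and increments at most $w=n$. Applying Theorem 4.1 to $C^{\bot}$ (right-module version) shows that $(\tau,\eta)$ is a Galois connection between $[0,mn-k]$ and $[0,m]$. By (1), $\eta$ is exactly the function defined in (3.2). This puts us precisely in the standing setting of Section 3, with $w=n$, and Theorem 3.12 then gives the partition statement verbatim.

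The main obstacle I expect is the duality step in (1): verifying ${^{\bot}(W^{\Omega})}=V^{\Omega}$ and the bijection between free submodules. This is also the reason for restricting to free submodules via Proposition 4.5, since (2) of Lemma 2.5 only controls ranks for free modules; for non-free submodules ranks do not behave additively.
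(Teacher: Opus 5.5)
Your proposal is correct and follows essentially the same route as the paper. Both arguments restrict both profiles to free submodules via the chain-ring characterization (Proposition 4.6 in the paper's numbering), use ${}^{\bot}(W^{\Omega})=({}^{\bot}W)^{\Omega}$ together with the rank identity of (2) of Lemma 2.5, take maxima, and then deduce (2) from Theorem 4.1, part (1) and Theorem 3.12. One small correction to a citation: the double-dual identity $({}^{\bot}W)^{\bot}=W$ used for your bijection comes from (1) of Lemma 2.5, which applies because a chain ring is quasi-Frobenius and finitely generated free modules over it have composition series; (2) of Lemma 2.5 does not state it.
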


\begin{proof}
{\bf{(1)}}\,\,Let $l\in[0,m]$, and let $T$ denote the set of all the free right $R$-submodules of $Q$ with rank $l$. By (2) of Lemma 2.5, for any $W\in T$, the left dual of $W$ with respect to $f$, denoted by $^{\ddagger}W$, is a free left $R$-submodule $P$ with $\rank_{R}(^{\ddagger}W)=m-l$; moreover, we note that $W^{\Omega}$ is a free right $R$-submodule $Q^{\Omega}$ with $\rank_{R}(W^{\Omega})=nl$, $^{\bot}(W^{\Omega})=(^{\ddagger}W)^{\Omega}$ and $\rank_R(C^{\bot}\cap W^{\Omega})=\rank_R(C\cap (^{\ddagger}W)^{\Omega})+nl-k$. Hence from Proposition 4.6 and Lemma 2.5, we deduce that
\begin{eqnarray*}
\begin{split}
\eta(l)&=\max\left\{\rank_R(C^{\bot}\cap W^{\Omega})\mid W\in T\right\}\\
&=\max\left\{\rank_R(C\cap (^{\ddagger}W)^{\Omega})+nl-k\mid W\in T\right\}\\
&=\max\left\{\rank_R(C\cap V^{\Omega})+nl-k\mid \text{$V$ is a free left $R$-submodule of $P$, $\rank_R(V)=m-l$}\right\}\\
&=\psi(m-l)+nl-k,
\end{split}
\end{eqnarray*}
as desired.

{\bf{(2)}}\,\,This immediately follows from Theorem 4.1, (1) and Theorem 3.12.
\end{proof}

\subsection{MRD codes}
\setlength{\parindent}{2em}
Based on Theorems 4.1 and 4.2, in the rest of this section, we use results established in Section 3 to derive some results on MRD/QMRD/NMRD/$i$-MRD codes and evasive property. We begin with the characterization of MRD codes.

\setlength{\parindent}{0em}
\begin{lemma}
Let $C$ be a free left $R$-submodule of $P^{\Omega}$ with $C\subsetneqq P^{\Omega}$ and $\rank_R(C)=k$, and let $e$ denote the minimal rank weight of $C^{\bot}$. Then, it holds that $m(e-1)\leqslant k$ and $n(e-1)\leqslant k$. Moreover, $C^{\bot}$ is MRD if and only if $k=n(e-1)$.
\end{lemma}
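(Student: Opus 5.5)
The plan is to read the statement off from the right-module version of Corollary 4.1 applied to the dual code, with Theorem 4.2 supplying the rank of $C^{\bot}$.

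First I need $C^{\bot}$ to be a nonzero free right $R$-submodule of $Q^{\Omega}$ of rank $mn-k$. By Theorem 4.2, $C^{\bot}$ is free with $\rank_R(C^{\bot})=mn-k$, so it suffices to show $k<mn$. Since $R$ is a chain ring, it has only finitely many principal left ideals; these are all its left ideals, so $R$ has a composition series as a left module. A free left module of rank $t$ then has length $t\cdot\len_R(R)$. If $k=mn$, then $\len_R(C)=\len_R(P^{\Omega})$, and since $C\subseteq P^{\Omega}$ this forces $C=P^{\Omega}$, contradicting $C\subsetneqq P^{\Omega}$. Hence $\rank_R(C^{\bot})=mn-k\geqslant1$, so $C^{\bot}\neq\{0\}$ and its minimal rank weight $e$ is defined.

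Next I apply Corollary 4.1 to $C^{\bot}$. Here $C^{\bot}$ is a code in $Q^{\Omega}$ with $\rank_R(Q)=m$ and $|\Omega|=n$, and $Q$ is a finitely generated right module over the principal right ideal ring $R$. The paper notes at the end of Section 4.1 that all results there hold for right modules in a parallel fashion, so Corollary 4.1 gives
$$mn-k\leqslant m(n-e+1)\quad\text{and}\quad mn-k\leqslant n(m-e+1).$$
Rearranging yields $m(e-1)\leqslant k$ and $n(e-1)\leqslant k$.

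Finally, by (1) of Definition 4.2 (right-module version), $C^{\bot}$ is MRD if and only if $\rank_R(C^{\bot})=n(m-e+1)$, that is, $mn-k=mn-n(e-1)$, which is equivalent to $k=n(e-1)$.

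Alternatively, one can use Theorem 4.2(1) and Theorem 4.1 to place $(\varphi,\psi)$ in the Section 3 framework with $w=n$, so that $\tau$ is the GRW function of $C^{\bot}$. The bound $n(e-1)\leqslant k$ is then (3.8) (valid since $k\leqslant nm-1$), and the MRD criterion is (1)$\Leftrightarrow$(2) of Theorem 3.4 read from the dual side.

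The only delicate points are justifying $k<mn$ (the length argument above) and invoking the right-module versions of Corollary 4.1 and Definition 4.2. Neither is a serious obstacle.
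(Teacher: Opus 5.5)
Your proposal is correct and is essentially the paper's proof: apply Corollary 4.1 and (1) of Definition 4.2, in their right-module versions, to $C^{\bot}$, using $\rank_R(C^{\bot})=mn-k$ from Theorem 4.2. Your length argument showing $k<mn$, so that $C^{\bot}\neq\{0\}$ and $e$ is defined, supplies a detail the paper leaves implicit.
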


\begin{proof}
Noticing that $\rank_R(C^{\bot})=mn-k$, an application of Corollary 4.1 and (1) of Definition 4.2 to $C^{\bot}$ immediately yields the desired result.
\end{proof}

\setlength{\parindent}{2em}
Now we are ready to prove the main result of this subsection.

\setlength{\parindent}{0em}
\begin{theorem}
Let $C$ be a free left $R$-submodule of $P^{\Omega}$ with $\{0\}\subsetneqq C\subsetneqq P^{\Omega}$ and $\rank_R(C)=k$, and let $d$ and $e$ denote the minimal rank weights of $C$ and $C^{\bot}$, respectively. Then, we have $d+e\leqslant m+2$. Moreover, the following three statements are equivalent to each other:

{\bf{(1)}}\,\,$C$ is MRD;

{\bf{(2)}}\,\,$d+e=m+2$;

{\bf{(3)}}\,\,$C^{\bot}$ is MRD.
\end{theorem}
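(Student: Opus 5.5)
This is a direct application of Theorem 3.4 in the Galois-connection setting of Section 3, with $w=n$.

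\textbf{Step 1: set up the Galois connection.} Let $\varphi,\psi$ be the GRWs and rank profiles of $C$. By Theorem 4.1, $(\varphi,\psi)$ is a Galois connection between $[0,k]$ and $[0,m]$ with $\psi(0)=0$ and $\psi(l)-\psi(l-1)\leqslant n$. So all the standing hypotheses of Section 3 hold with $w=n$. In particular, $\eta(l)=\psi(m-l)+nl-k$ is defined, and $\tau$ is the unique map making $(\tau,\eta)$ a Galois connection between $[0,mn-k]$ and $[0,m]$ (Lemma 2.3).

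\textbf{Step 2: identify $\tau$ with the GRWs of $C^{\bot}$.} By Theorem 4.2, $C^{\bot}$ is free of rank $mn-k$, and its rank profiles are exactly $\eta$ by part (1). Applying Theorem 4.1 to $C^{\bot}$ (in its right-module version, noted at the end of Section 4.1) shows that the GRWs of $C^{\bot}$, together with $\eta$, form a Galois connection between $[0,mn-k]$ and $[0,m]$. Since $\tau$ is uniquely determined by $\eta$ (Remark 2.1), $\tau(s)$ equals the $s$-th GRW of $C^{\bot}$. In particular $e=\tau(1)$, and $d=\varphi(1)$.

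\textbf{Step 3: apply Theorem 3.4.} The hypothesis $\{0\}\subsetneqq C\subsetneqq P^{\Omega}$ gives $1\leqslant k\leqslant mn-1$. For $k\geqslant1$, note $\rank_R(C)\geqslant1$ since $C$ is free and nonzero. For $k\leqslant mn-1$, note that a free submodule of rank $mn$ with zero dual must be everything: by (2) of Lemma 2.5, $C=P^{\Omega}$ would follow from $\rank_R(C^{\bot})=0$. Theorem 3.4 then yields $d+e\leqslant m+2$. It also yields the equivalence of $k=n(m-d+1)$, $k=n(e-1)$, and $d+e=m+2$.

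By Definition 4.2, $k=n(m-d+1)$ is exactly statement (1), that $C$ is MRD. By Lemma 4.2, $k=n(e-1)$ is exactly statement (3), that $C^{\bot}$ is MRD. This completes the equivalence.

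The main point requiring care is Step 2: the identification of $\tau$ with the GRWs of the dual code. This rests on the right-module analogue of Theorem 4.1 together with the uniqueness of Galois adjoints.
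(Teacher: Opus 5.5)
Your proposal is correct and follows the paper's route. The paper's proof is just ``Lemma 4.2 and Theorem 3.4,'' implicitly using Theorem 4.1 and part (1) of Theorem 4.2 to identify $\tau$ with the GRWs of $C^{\bot}$; you make that identification explicit via uniqueness of Galois adjoints. One small quibble: (2) of Lemma 2.5 does not itself state ${^{\bot}(C^{\bot})}=C$, so the step from $C\subsetneqq P^{\Omega}$ to $k\leqslant mn-1$ is better justified by (1) of Lemma 2.5 (a chain ring is quasi-Frobenius) or by a length count.
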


\begin{proof}
This follows from Lemma 4.2 and Theorem 3.4.
\end{proof}

\begin{remark}
Theorem 4.3 recovers Part 1 of [8, Proposition 19] if $R$ is a field, and also generalizes [4, Theorem 3.23] which has been established for codes over finite commutative chain rings by using MacWilliams identity.
\end{remark}

\subsection{QMRD and dually QMRD codes}

\setlength{\parindent}{2em}
Throughout this subsection, let $C$ be a free left $R$-submodule of $P^{\Omega}$ with $\{0\}\subsetneqq C\subsetneqq P^{\Omega}$ and $\rank_R(C)=k$, and let $d$ and $e$ denote the minimal rank weights of $C$ and $C^{\bot}$, respectively. We say that $C$ is \textit{dually QMRD} if both $C$ and $C^{\bot}$ are QMRD (cf. [8, Definition 13]).

Now we give necessary and sufficient conditions for a code to be dually QMRD. The following theorem is the main result of this subsection.

\setlength{\parindent}{0em}
\begin{theorem}
{\bf{(1)}}\,\,$C^{\bot}$ is QMRD if and only if $n\nmid k$ and $\left\lceil\frac{k}{n}\right\rceil=e$.

{\bf{(2)}}\,\,$C$ is dually QMRD if and only if $d+e=m+1$.
\end{theorem}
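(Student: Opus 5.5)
The plan is to translate everything into the abstract Galois-connection setting of Section 3 and then read off the result from Theorem 3.9. By Theorem 4.1 (applied to $C$ with $M=P$), the GRWs $\varphi$ and rank profiles $\psi$ of $C$ form a Galois connection between $[0,k]$ and $[0,m]$ with $\psi(0)=0$ and $\psi(l)-\psi(l-1)\leqslant n$. So we are in the standing hypotheses of Section 3 with $w=n$. By (1) of Theorem 4.2, the rank profiles of $C^{\bot}$ are exactly $\eta(l)=\psi(m-l)+nl-k$. By Theorem 4.1 applied to $C^{\bot}$ (right-module version over the principal right ideal ring $R$), the GRWs of $C^{\bot}$ form the unique $\tau$ with $(\tau,\eta)$ a Galois connection; this uniqueness comes from Remark 2.1. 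Hence $d=\varphi(1)$ and $e=\tau(1)$. Since $\{0\}\subsetneqq C\subsetneqq P^{\Omega}$ and $C^{\bot}$ is free of rank $mn-k$, we have $1\leqslant k\leqslant mn-1$.

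For (1), I unfold Definition 4.2(2) for $C^{\bot}$. Its rank is $mn-k$ and its minimal rank weight is $e$, so $C^{\bot}$ is QMRD if and only if $n\nmid(mn-k)$ and $\lceil\frac{mn-k}{n}\rceil=m-e+1$. Clearly $n\nmid(mn-k)$ is equivalent to $n\nmid k$. Under this condition,
\[
\left\lceil\frac{mn-k}{n}\right\rceil=m-\left\lfloor\frac{k}{n}\right\rfloor=m-\left\lceil\frac{k}{n}\right\rceil+1 .
\]
So the second condition is equivalent to $\lceil\frac{k}{n}\rceil=e$. The only point needing care is this floor/ceiling identity, which uses $n\nmid k$.

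For (2), by definition $C$ is dually QMRD if and only if $n\nmid k$, $\lceil\frac{k}{n}\rceil=m-d+1$, and (by (1)) $\lceil\frac{k}{n}\rceil=e$. This is precisely statement (2) of Theorem 3.9 with $w=n$. Theorem 3.9 says that statement is equivalent to $d+e=m+1$, which finishes the argument.

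The main obstacle is the bookkeeping in the first paragraph: checking that the GRWs of $C^{\bot}$ really coincide with the abstract $\tau$ of Section 3. This requires the right-module version of Theorem 4.1, the profile identity of Theorem 4.2(1), and the uniqueness of Galois adjoints. Once that identification is made, everything else is immediate.
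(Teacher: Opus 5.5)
Your proposal is correct and follows the paper's proof essentially verbatim. Part (1) is the same unfolding of the QMRD definition for $C^{\bot}$ via $\lceil\frac{mn-k}{n}\rceil=m-\lfloor\frac{k}{n}\rfloor$. Part (2) reduces to the equivalence of statements (1) and (2) in the abstract $d+e=m+1$ result with $w=n$, which is Theorem 3.9; the paper's text cites ``Theorem 3.8'' there, which appears to be a slip. Your explicit identification of $\tau$ with the GRWs of $C^{\bot}$ (via Theorems 4.1, 4.2 and Remark 2.1) is left implicit in the paper, but it is exactly the bookkeeping needed.
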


\begin{proof}
{\bf{(1)}}\,\,From $\rank_R(C^{\bot})=mn-k$, we deduce that $C^{\bot}$ is QMRD if and only if $n\nmid(mn-k)$ and $\left\lceil\frac{mn-k}{n}\right\rceil=m-e+1$, if and only if $n\nmid k$ and $e=\lfloor\frac{k}{n}\rfloor+1=\lceil\frac{k}{n}\rceil$, as desired.

{\bf{(2)}}\,\,By (1), $C$ is dually QMRD if and only if $n\nmid k$ and $\left\lceil\frac{k}{n}\right\rceil=m-d+1=e$. Hence the desired result immediately follows from Theorem 3.8.
\end{proof}

\setlength{\parindent}{2em}
Next, we characterize MRD or dually QMRD codes in terms of GRWs of $C$. The following proposition is an immediately corollary of Theorems 4.3, 4.4 and Proposition 3.7.

\setlength{\parindent}{0em}
\begin{proposition}
The following three statements are equivalent to each other:

{\bf{(1)}}\,\,$C$ is either MRD or dually QMRD;

{\bf{(2)}}\,\,$d+e\geqslant m+1$;

{\bf{(3)}}\,\,$k\geqslant n(m-d)+1$. Moreover, if $d\leqslant m-1$, then the $(k-n(m-d)+1)$-th GRW of $C$ is equal to $d+1$.
\end{proposition}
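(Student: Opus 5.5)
The plan is to transfer everything to the abstract setting of Section 3 and then read off the equivalences from Proposition 3.7. Let $\varphi,\psi$ denote the GRWs and rank profiles of $C$, and $\tau,\eta$ those of $C^{\bot}$. By Theorem 4.1, $(\varphi,\psi)$ is a Galois connection between $[0,k]$ and $[0,m]$ with $\psi(0)=0$ and $\psi(l)-\psi(l-1)\leqslant n$, so the standing hypotheses (3.1) of Section 3 hold with $w=n$. By (1) of Theorem 4.2, $\eta(l)=\psi(m-l)+nl-k$, which is exactly (3.2). By Theorem 4.1 applied to $C^{\bot}$ (in its right-module version), $(\tau,\eta)$ is a Galois connection between $[0,mn-k]$ and $[0,m]$. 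Moreover, $\{0\}\subsetneqq C\subsetneqq P^{\Omega}$ gives $1\leqslant k\leqslant mn-1$. Hence we are precisely in the situation of Proposition 3.7, with $d=\varphi(1)$ and $e=\tau(1)$.

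With this in place, $(2)\Longleftrightarrow(3)$ is literally Proposition 3.7 with $w=n$. Its statement (2) reads: $k\geqslant n(m-d)+1$, and if $d\leqslant m-1$ then $\varphi(k-n(m-d)+1)=d+1$. This is statement (3) here, because $\varphi(r)$ is the $r$-th GRW of $C$.

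For $(1)\Longleftrightarrow(2)$, I will use Theorem 3.4, which gives $d+e\leqslant m+2$ (as also noted in Theorem 4.3). So $d+e\geqslant m+1$ means $d+e=m+2$ or $d+e=m+1$. By Theorem 4.3, $d+e=m+2$ holds if and only if $C$ is MRD. By (2) of Theorem 4.4, $d+e=m+1$ holds if and only if $C$ is dually QMRD. Taking the disjunction gives the equivalence.

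There is no genuine obstacle. The only point needing care is that the abstract $\tau$ of Section 3 really is the GRW function of $C^{\bot}$. By the uniqueness in (2) of Lemma 2.3 (equivalently, the ``moreover'' part of Remark 2.1), $\tau$ is determined by $\eta$. The GRWs of $C^{\bot}$ form a Galois partner of its rank profiles, and these profiles equal $\eta$ by Theorem 4.2. Therefore the two functions coincide, and this is the same identification already used implicitly in Theorems 4.3 and 4.4.
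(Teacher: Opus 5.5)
Your proposal is correct and follows essentially the same route as the paper, which states that this proposition is an immediate corollary of Theorems 4.3, 4.4 and Proposition 3.7. As in your proposal, $(1)\Leftrightarrow(2)$ comes from the MRD and dually QMRD characterizations together with $d+e\leqslant m+2$, and $(2)\Leftrightarrow(3)$ is Proposition 3.7 with $w=n$, once $(\varphi,\psi,\tau,\eta)$ are identified with the GRWs and rank profiles of $C$ and $C^{\bot}$ via Theorems 4.1 and 4.2.
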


\setlength{\parindent}{2em}
We end this subsection by deriving the GRWs of dually QMRD codes. The following proposition follows from Proposition 4.7 and Theorem 3.10.

\begin{proposition}
If $C$ is either MRD or dually QMRD, then for any $a\in[1,k]$, the $a$-th GRW of $C$ is equal to $m-\left\lfloor\frac{k-a}{n}\right\rfloor$, and for any $c\in[1,mn-k]$, the $c$-th GRW of $C^{\bot}$ is equal to $\left\lceil\frac{c+k}{n}\right\rceil$.
\end{proposition}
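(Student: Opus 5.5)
The plan is to place the rank-metric setting inside the abstract framework of Section 3 and then read off the answer from Theorem 3.10. Throughout, $C$ is as fixed in this subsection: a free left $R$-submodule with $\{0\}\subsetneqq C\subsetneqq P^{\Omega}$ and $\rank_R(C)=k$.

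\textbf{Step 1 (set up the framework).} Let $\varphi,\psi$ be the GRWs and rank profiles of $C$. By Theorem 4.1, $(\varphi,\psi)$ is a Galois connection between $[0,k]$ and $[0,m]$ with $\psi(0)=0$, and (3.1) holds with $w=n$. Hence the standing assumptions of Section 3 are satisfied with $w=n$.

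\textbf{Step 2 (identify $\eta$ and $\tau$ with the dual code).} Let $\eta$ and $\tau$ be the rank profiles and GRWs of $C^{\bot}$. By Theorem 4.2, $C^{\bot}$ is free of rank $mn-k$, and (1) of Theorem 4.2 gives $\eta(l)=\psi(m-l)+nl-k$. This is exactly (3.2) with $w=n$.

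Applying Theorem 4.1 to $C^{\bot}$ (in the right-module version noted at the end of Section 4.1), $(\tau,\eta)$ is a Galois connection between $[0,mn-k]$ and $[0,m]$. By uniqueness in (2) of Lemma 2.3 (equivalently, the ``moreover'' part of Remark 2.1), this $\tau$ is precisely the $\tau$ of Section 3.

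\textbf{Step 3 (check the hypotheses of Theorem 3.10).} We need $1\leqslant k\leqslant mn-1$, which holds since $C\neq\{0\}$ and $C\subsetneqq P^{\Omega}$ is free of rank $k$. We also need $d+e\geqslant m+1$, where $d=\varphi(1)$ and $e=\tau(1)$ are the minimal rank weights of $C$ and $C^{\bot}$. By Proposition 4.7, $d+e\geqslant m+1$ is equivalent to $C$ being MRD or dually QMRD, which is our hypothesis.

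\textbf{Step 4 (conclude).} Theorem 3.10 now gives $\varphi(a)=m-\lfloor\frac{k-a}{n}\rfloor$ for all $a\in[1,k]$ and $\tau(c)=\lceil\frac{c+k}{n}\rceil$ for all $c\in[1,mn-k]$. This is the claim.

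The main obstacle is Step 2: we must confirm that the GRWs of $C^{\bot}$ really form the Galois partner of $\eta$. This requires the right-module analogue of Theorem 4.1 for $C^{\bot}\leqslant Q^{\Omega}$. That analogue is asserted at the end of Section 4.1, but its use here should be stated explicitly. Everything else is a direct application of Proposition 4.7 and Theorem 3.10.
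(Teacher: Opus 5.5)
Your proposal is correct and matches the paper's argument: the paper derives this proposition from Proposition 4.7 (to get $d+e\geqslant m+1$) and Theorem 3.10 with $w=n$, with the setup justified by Theorem 4.1 and (1) of Theorem 4.2, exactly as you do. You also spell out that $\tau$ is the GRW sequence of $C^{\bot}$, using the right-module version of Theorem 4.1 together with uniqueness of the Galois partner; the paper leaves this implicit, and your handling of it is correct.
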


\begin{remark}
If $R$ is a field, then Theorem 4.4 recovers Part 2 of [8, Proposition 19], and Propositions 4.7 and 4.8 recover [8, Theorem 22 and Remark 23].
\end{remark}

\subsection{NMRD codes}
\setlength{\parindent}{2em}
Now we characterize NMRD codes and compute their GRWs. The following theorem is an immediate corollary of Theorems 3.7, 3.8 and Corollary 3.3.

\setlength{\parindent}{0em}
\begin{theorem}
Let $C$ be a free left $R$-submodule of $P^{\Omega}$ with $\{0\}\subsetneqq C\subsetneqq P^{\Omega}$ and $\rank_R(C)=k$, $d$ and $e$ denote the minimal rank weights of $C$ and $C^{\bot}$, respectively, and let $\theta=\max\{a\in[1,k]\mid\text{the $a$-th GRW of $C$ is equal to $d$}\}$. Then, the following three statements hold true:

{\bf{(1)}}\,\,$C$ is NMRD if and only if $n\mid k$ and $d+e=m$;

{\bf{(2)}}\,\,Suppose that $d+e=m$. Then, for any $a\in[1,k]$, the $a$-th GRW of $C$ is equal to
\begin{eqnarray*}
\begin{split}
\begin{cases}
d,&a\leqslant\theta;\\
m-\left\lfloor\frac{k-a}{n}\right\rfloor,&a\geqslant\theta+1;
\end{cases}
\end{split}
\end{eqnarray*}

{\bf{(3)}}\,\,Suppose that $C$ is NMRD. Then, for any $a\in[\theta+1,k]$, the $a$-th GRW of $C$ is equal to $d+\left\lceil\frac{a}{n}\right\rceil$, and for any $c\in[1,mn-k]$, the $c$-th GRW of $C^{\bot}$ is equal to
\begin{eqnarray*}
\begin{split}
\begin{cases}
e,&c\leqslant\theta;\\
e+\lceil\frac{c}{n}\rceil,&c\geqslant\theta+1.
\end{cases}
\end{split}
\end{eqnarray*}
\end{theorem}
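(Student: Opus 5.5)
The plan is to reduce everything to the abstract Galois-connection results of Section 3, with $w=n$. By Theorem 4.1, the GRWs $\varphi$ and rank profiles $\psi$ of $C$ form a Galois connection between $[0,k]$ and $[0,m]$ with $\psi(0)=0$ and $\psi(l)-\psi(l-1)\leqslant n$. So the standing hypotheses of Section 3 (from Subsection 3.3 onward) hold with $w=n$. By Theorem 4.2, $C^{\bot}$ is free of rank $mn-k$, and its rank profile is $\eta(l)=\psi(m-l)+nl-k$, which is exactly (3.2). Since $C^{\bot}$'s GRWs and profiles form a Galois connection (Theorem 4.1 in the right-module version), the ``moreover'' part of Remark 2.1 shows that the function $\tau$ of Section 3 coincides with the GRWs of $C^{\bot}$. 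Thus $d=\varphi(1)$ and $e=\tau(1)$. The hypothesis $\{0\}\subsetneqq C\subsetneqq P^{\Omega}$ gives $1\leqslant k\leqslant mn-1=wm-1$.

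For (1), I will compare Definition 4.2(3) with statement (2) of Corollary 3.3. NMRD means $k=n(m-d)$ and, when $d\leqslant m-2$, $\varphi(n+1)=d+2$; this is literally statement (2) of Corollary 3.3 with $w=n$. Corollary 3.3 then gives the equivalence with $n\mid k$ and $d+e=m$. There is one small check: when $d\leqslant m-2$ and $k=n(m-d)\geqslant 2n$, the index $n+1$ lies in $[1,k]$, so $\varphi(n+1)$ is defined.

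For (2), assume $d+e=m$. The stated formula is Theorem 3.7 read through the identification above: $\varphi(c)=d$ for $c\leqslant\theta$, and $\varphi(a)=m-\lfloor\frac{k-a}{n}\rfloor$ for $a\geqslant\theta+1$. The $\theta$ here is defined exactly as in Theorem 3.7.

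For (3), if $C$ is NMRD then by (1) we have $n\mid k$ and $d+e=m$, so Theorem 3.8 applies. Part (2) of Theorem 3.8 gives $\varphi(a)=d+\lceil\frac{a}{n}\rceil$ for $a\in[\theta+1,k]$. Part (4) of Theorem 3.8 gives $\tau(c)=e$ for $c\leqslant\theta$ and $\tau(c)=\lceil\frac{c}{n}\rceil+e$ for $c\in[\theta+1,mn-k]$; translated back, these are the GRWs of $C^{\bot}$.

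The only step needing care is justifying that $\tau$ equals the GRW sequence of $C^{\bot}$. This requires Theorem 4.1 to apply to the right module $C^{\bot}\leqslant Q^{\Omega}$, which is allowed by the closing remark of Subsection 4.1. It also requires uniqueness of the lower adjoint, which is Remark 2.1.
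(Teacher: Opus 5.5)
Your proposal is correct and follows the paper's route. Via Theorems 4.1 and 4.2, the GRWs and rank profiles of $C$ and $C^{\bot}$ realize the abstract $(\varphi,\psi,\tau,\eta)$ of Section 3 with $w=n$, and then (1), (2) and (3) are read off from Corollary 3.3, Theorem 3.7 and Theorem 3.8, which is exactly how the paper derives it; you also spell out two details the paper leaves implicit, namely that $\tau$ is the GRW sequence of $C^{\bot}$ (by the right-module version of Theorem 4.1 and uniqueness of the adjoint in Remark 2.1) and that $n+1\leqslant k$ when $d\leqslant m-2$.
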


\begin{remark}
If $R$ is a field, then (1) of Theorem 4.5 recovers Part 3 of [7, Theorem 6.9].
\end{remark}

\subsection{$i$-MRD codes}
\setlength{\parindent}{2em}
Now we characterize $i$-MRD codes. The following theorem extends [7, Lemma 6.4 and Theorem 6.5] to codes over chain rings.

\setlength{\parindent}{0em}
\begin{theorem}
Let $C$ be a free left $R$-submodule of $P^{\Omega}$ with $\rank_R(C)=k$, where $1\leqslant k\leqslant n(m-1)$, and let $p$ denote the $(\lceil\frac{k}{n}\rceil n-k+1)$-th GRW of $C^{\bot}$. Then, the following three statements hold:

{\bf{(1)}}\,\,Let $i\in[1,\lceil\frac{k}{n}\rceil]$. Then, $C$ is $i$-MRD if and only if $p\geqslant\lceil\frac{k}{n}\rceil-i+2$;

{\bf{(2)}}\,\,$1\leqslant p\leqslant\lceil\frac{k}{n}\rceil+1$. Moreover, $C$ is $\lceil\frac{k}{n}\rceil$-MRD if and only if $p\geqslant2$;

{\bf{(3)}}\,\,If $p\geqslant2$, then $\lceil\frac{k}{n}\rceil-p+2=\min\{i\in[1,\lceil\frac{k}{n}\rceil]\mid \text{$C$ is $i$-MRD}\}$.
\end{theorem}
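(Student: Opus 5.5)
Throughout, I set $w=n$ and use Theorem 4.1 together with (1) of Theorem 4.2: the GRWs and rank profiles $(\varphi,\psi)$ of $C$ satisfy the standing hypotheses of Section 3. That is, $(\varphi,\psi)$ is a Galois connection with $\psi(0)=0$ and increments at most $n$. Moreover, the GRWs $\tau$ and profiles $\eta$ of $C^{\bot}$ are exactly the $\tau$ and $\eta$ of (3.2). Writing $K=\lceil\frac{k}{n}\rceil$, the plan is to translate everything into Theorem 3.6 and Corollary 3.1 with $\gamma=1$.

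\textbf{Setup.} For $i\in[1,K]$, the index $a=(i-1)n+1$ lies in $[0,k]$ and satisfies $a\equiv1\pmod n$. A short computation gives $\lfloor\frac{k-a}{n}\rfloor=\lfloor\frac{k-1}{n}\rfloor-(i-1)=K-i$, so
$$m-\left\lfloor\frac{k-a}{n}\right\rfloor=m-K+i.$$
Hence $C$ is $i$-MRD if and only if $\varphi(a)=m-\lfloor\frac{k-a}{n}\rfloor$, which is precisely the equality case of (3.6).

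\textbf{Step (1).} The hypothesis $k\leqslant n(m-1)=n(m-1)+\gamma-1$ with $\gamma=1$ places us in (2) of Theorem 3.6. The argument of $\tau$ there is $\lceil\frac{k}{n}\rceil n-k+1$, which lies in $[1,mn-k]$ because $k\leqslant n(m-1)$. So that $\tau$-value is exactly $p$. Theorem 3.6 then gives: $C$ is $i$-MRD if and only if $p\geqslant K-i+2$. This is (1).

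\textbf{Step (2).} The bound $p\geqslant1$ follows from $\tau(1)\geqslant1$ (Lemma 2.3(4), using $k\leqslant mn-1$) together with monotonicity of $\tau$. For the upper bound, Theorem 3.3 gives $p\leqslant\lceil\frac{k+Kn-k+1}{n}\rceil=\lceil K+\frac1n\rceil=K+1$. The "moreover" part is (1) applied with $i=K$.

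\textbf{Step (3).} Assume $p\geqslant2$, so that $i_0:=K-p+2\in[1,K]$. By (1), $C$ is $i_0$-MRD, and any $i$-MRD index satisfies $p\geqslant K-i+2$, i.e.\ $i\geqslant i_0$. So $i_0$ is the minimum. One could also phrase this via Corollary 3.1, but the direct monotone argument suffices.

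The only real obstacle is the index bookkeeping: checking $\lfloor\frac{k-(i-1)n-1}{n}\rfloor=K-i$, which uses $\lfloor\frac{k-1}{n}\rfloor=\lceil\frac{k}{n}\rceil-1$, and checking that the $\tau$-argument in Theorem 3.6 coincides with the index defining $p$ and lies in its domain.
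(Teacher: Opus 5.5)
Your proposal is correct and follows essentially the same route as the paper. Part (1) is (2) of Theorem 3.6 with $\gamma=1$, once $i$-MRD is identified with equality in the generalized Singleton bound at $a=(i-1)n+1$. The bounds $1\leqslant p\leqslant\lceil\frac{k}{n}\rceil+1$ come from $\tau$ being positive on positive arguments together with Theorem 3.3, and the rest is read off from (1). Your explicit index checks, namely $\lfloor\frac{k-a}{n}\rfloor=\lceil\frac{k}{n}\rceil-i$ and $\lceil\frac{k}{n}\rceil n-k+1\in[1,mn-k]$, fill in details the paper leaves implicit.
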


\begin{proof}
First, (1) follows from (2) of Theorem 3.6. Next, from $\lceil\frac{k}{n}\rceil n-k+1\geqslant1$ and Theorem 3.3, we deduce that $1\leqslant p\leqslant\lceil\frac{k}{n}\rceil+1$, as desired. Moreover, the rest immediately follows from (1).
\end{proof}

\subsection{Evasiveness of codes}
\setlength{\parindent}{2em}
In this subsection, we study $(h,r)$-evasiveness of codes. We begin with the following theorem, which immediately follows from Theorem 3.11 and Corollary 3.4.

\setlength{\parindent}{0em}
\begin{theorem}
{\bf{(1)}}\,\,Let $C$ be a free left $R$-submodule of $P^{\Omega}$ with $\rank_R(C)=k$, and let $(h,r)\in\mathbb{N}^{2}$ with $h\leqslant r\leqslant m-1$ and $n(h+1)\leqslant k\leqslant n(m+h-r)-1$. Then, $C$ is $(h,r)$-evasive if and only if the $(n(r-h)+1)$-th GRW of $C^{\bot}$ is larger than or equal to $r+2$.

{\bf{(2)}}\,\,Let $C$ be a free left $R$-submodule of $P^{\Omega}$ with $\rank_R(C)=k$, and let $h\in[0,m-2]$ with $n(h+1)\leqslant k\leqslant nm-1$. Then, $C$ is $(h,h)$-evasive if and only if the minimal rank weight of $C^{\bot}$ is larger than or equal to $h+2$.
\end{theorem}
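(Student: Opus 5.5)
The plan is a direct transfer of the abstract results of Section 3 via the Galois connection set up in Section 4. Fix $C$ as in the statement, let $\varphi$ and $\psi$ be the GRWs and rank profiles of $C$, and let $\tau$ be the GRWs of $C^{\bot}$. We work in the setting of Section 3 with $w=n$ and with $m$ equal to the rank of $P$.

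First, by Theorem 4.1, $(\varphi,\psi)$ is a Galois connection between $[0,k]$ and $[0,m]$. It satisfies $\psi(0)=0$ and $\psi(l)-\psi(l-1)\leqslant n$. This is exactly the standing hypothesis (3.1) with $w=n$. Next, by Theorem 4.2, $C^{\bot}$ is free of rank $mn-k$, and its rank profile $\eta$ satisfies $\eta(l)=\psi(m-l)+nl-k$. This is the function (3.2).

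The GRWs and rank profiles of $C^{\bot}$ again form a Galois connection, by the right-module version of Theorem 4.1 noted at the end of Section 4.1. The ``moreover'' part of Remark 2.1 says each component of a Galois connection determines the other. Hence the GRW function of $C^{\bot}$ coincides with the unique $\tau$ of Lemma 2.3(2) for which $(\tau,\eta)$ is a Galois connection. So all results of Section 3 apply with $w=n$.

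For (1), Definition 4.3 says $C$ is $(h,r)$-evasive exactly when $\varphi$ is $(h,r)$-evasive in the sense of Definition 3.1, with $w=n$. The hypotheses $h\leqslant r\leqslant m-1$ and $n(h+1)\leqslant k\leqslant n(m+h-r)-1$ match those of Theorem 3.11 verbatim. That theorem then gives $(h,r)$-evasiveness if and only if $\tau(n(r-h)+1)\geqslant r+2$, and $\tau(n(r-h)+1)$ is the $(n(r-h)+1)$-th GRW of $C^{\bot}$.

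For (2), apply Corollary 3.4 with $w=n$ and $e=\tau(1)$, the minimal rank weight of $C^{\bot}$. Its hypotheses $h\in[0,m-2]$ and $n(h+1)\leqslant k\leqslant nm-1$ are the same as in (2). Note that $k\leqslant nm-1$ forces $C^{\bot}\neq\{0\}$, so $e$ is defined.

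The only step needing care is the identification of $\tau$ with the GRWs of $C^{\bot}$. I would settle it by the uniqueness in Remark 2.1 together with the right-module analogue of Theorem 4.1, as above. Everything else is direct substitution.
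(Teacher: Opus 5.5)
Your proposal is correct and follows the paper's own route. The paper likewise states that this theorem follows immediately from Theorem 3.11 and Corollary 3.4, using Theorems 4.1 and 4.2 to place the GRWs of $C$ and $C^{\bot}$ in the Section 3 framework with $w=n$; your explicit identification of $\tau$ with the GRWs of $C^{\bot}$, via the right-module version of Theorem 4.1 and the uniqueness in Remark 2.1, is exactly the step the paper leaves implicit (as it also does for part (2) of Theorem 4.2).
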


\setlength{\parindent}{2em}
From now on, we focus on $(h,h)$-evasive codes. The following theorem is the main result of this subsection.

\begin{theorem}
Let $C\subsetneqq P^{\Omega}$ be a free left $R$-submodule of $P^{\Omega}$ with $\rank_R(C)=k$, and let $h\in[0,m-2]$ such that $C$ is $(h,h)$-evasive. Then, we have
\begin{equation}m(h+1)\leqslant k.\end{equation}
Moreover, if $m(h+1)=k$, then the minimal rank weight of $C^{\bot}$ is equal to $h+2$.
\end{theorem}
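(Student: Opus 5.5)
The plan is to reduce the statement to the characterization of $(h,h)$-evasiveness in (2) of Theorem 4.7, and then to feed the resulting lower bound on the minimal rank weight of $C^{\bot}$ into the Singleton-type bound of Lemma 4.2.

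\emph{Step 1: verify the hypotheses of (2) of Theorem 4.7.} Since $C$ is $(h,h)$-evasive, Definition 4.3 gives $k\geqslant n(h+1)$, and $h\in[0,m-2]$ holds by assumption. It remains to show $k\leqslant nm-1$.
\begin{itemize}
\item By Theorem 4.2, $C^{\bot}$ is free of rank $mn-k$. So if $k=mn$, then $C^{\bot}=\{0\}$.
\item A chain ring has only finitely many principal left ideals, and all of them are comparable. Hence $R$ has finite length, and so do $P^{\Omega}$ and $Q^{\Omega}$. Moreover, $R$ is quasi-Frobenius, and $\langle~,~\rangle$ is non-degenerate because $f$ is.
\item Therefore (1) of Lemma 2.5 gives $C={^{\bot}(C^{\bot})}={^{\bot}\{0\}}=P^{\Omega}$, contradicting $C\subsetneqq P^{\Omega}$.
\end{itemize}
Hence $k\leqslant nm-1$.

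\emph{Step 2: bound $e$ from below.} Let $e$ be the minimal rank weight of $C^{\bot}$, which is defined since $C^{\bot}\neq\{0\}$. By (2) of Theorem 4.7, $(h,h)$-evasiveness of $C$ gives $e\geqslant h+2$.

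\emph{Step 3: conclude.} Lemma 4.2 applies because $C\subsetneqq P^{\Omega}$, and it gives $m(e-1)\leqslant k$. Combined with Step 2, this yields $m(h+1)\leqslant m(e-1)\leqslant k$, which is (4.10). If $k=m(h+1)$, then $m(e-1)\leqslant m(h+1)$ forces $e\leqslant h+2$, and with $e\geqslant h+2$ we get $e=h+2$.

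The only step that is not a direct citation is the justification that $k\leqslant nm-1$, that is, that a proper free submodule cannot have full rank $mn$. I expect to settle it by the double-dual argument of Step 1, using that chain rings are quasi-Frobenius of finite length so that Lemma 2.5 applies. Everything else follows immediately from Theorem 4.7 and Lemma 4.2.
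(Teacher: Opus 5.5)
Your proof is correct and follows the paper's argument: you combine $e\geqslant h+2$ from (2) of Theorem 4.7 with $m(e-1)\leqslant k$ from Lemma 4.2, and the equality case follows the same way. Your Step 1 check that $C\subsetneqq P^{\Omega}$ forces $k\leqslant nm-1$ (so that Theorem 4.7 applies and $C^{\bot}\neq\{0\}$) makes explicit a hypothesis the paper uses silently; a length count $\len_R(C)=mn\len_R(R)=\len_R(P^{\Omega})$ would give it equally quickly.
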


\begin{proof}
Let $e$ denote the minimal rank weight of $C^{\bot}$. By Lemma 4.2 and Theorem 4.7, we have $m(e-1)\leqslant k$ and $e\geqslant h+2$, which imply that $m(h+1)\leqslant m(e-1)\leqslant k$, establishing (4.10). Moreover, if $m(h+1)=k$, then we have $m(h+1)=m(e-1)$ and hence $e=h+2$, as desired.
\end{proof}

\setlength{\parindent}{2em}
The following corollary of Theorem 4.8 gives lower bounds on the rank of $(h,h)$-evasive codes.

\begin{corollary}
Let $D$ be a left $R$-submodule of $P^{\Omega}$ with $\rank_R(D)=k$, and let $h\in[0,m-2]$ such that $D$ is $(h,h)$-evasive. Then, it holds true that either $m(h+1)\leqslant k$ or $k=mn$.
\end{corollary}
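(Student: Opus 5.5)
We reduce the statement to Theorem 4.8 by passing from $D$ to a free module of the same rank, which Proposition 4.6 provides. Since $R$ is a chain ring, it is a left chain ring, and $P$ is free with $\rank_R(P)=m$. The module $P^{\Omega}$ is free of rank $mn$, and $D\leqslant_R P^{\Omega}$. Proposition 4.6 (with $M=P$) then gives a free left $R$-submodule $E\leqslant_R P^{\Omega}$ with $D\subseteq E$ and $\rank_R(E)=k$. Moreover, the GRWs of $E$ coincide with those of $D$. The definition of $(h,h)$-evasiveness (Definition 4.3) depends only on $k$, $n$, $m$ and the $(k-n(h+1)+1)$-th GRW, so $E$ is $(h,h)$-evasive exactly when $D$ is. In particular $E$ is $(h,h)$-evasive.

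We split into two cases. If $E\subsetneqq P^{\Omega}$, then Theorem 4.8 applied to $E$ gives $m(h+1)\leqslant \rank_R(E)=k$. If instead $E=P^{\Omega}$, then $k=\rank_R(E)=\rank_R(P^{\Omega})=mn$. In either case one of the two alternatives holds, which is the claim.

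The only point needing care is that $\rank_R(P^{\Omega})=mn$ and that a free module has a well-defined rank. This holds because $P^{\Omega}\cong R^{mn}$ and, over a chain ring, the minimal number of generators of $R^{mn}$ is $mn$; this is implicit in Lemma 2.4 and in the way the paper already uses $\rank_R(C^{\bot})=mn-k$. Otherwise the argument is a direct reduction, and I expect no real obstacle beyond checking that evasiveness transfers from $D$ to $E$ through the equality of GRWs.
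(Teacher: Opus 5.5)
Your proposal is correct and matches the paper's proof essentially verbatim. The paper likewise uses Proposition 4.6 to replace $D$ by a free $C\supseteq D$ of rank $k$ with the same GRWs, transfers $(h,h)$-evasiveness, and then either applies Theorem 4.8 when $C\subsetneqq P^{\Omega}$ or concludes $k=mn$ when $C=P^{\Omega}$.
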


\begin{proof}
By Proposition 4.6, there exists a free left $R$-submodule $C\leqslant_{R}P^{\Omega}$ such that $\rank_R(C)=k$ and $D\subseteq C$; moreover, the GRWs of $D$ coincide with those of $C$. It follows that $C$ is $(h,h)$-evasive. Therefore if $C\subsetneqq P^{\Omega}$, then it follows from Theorem 4.8 that $m(h+1)\leqslant k$, and if $C=P^{\Omega}$, then we have $k=mn$, as desired.
\end{proof}

\setlength{\parindent}{2em}
Corollary 4.3 may be regarded as a scattered bound for $(h,h)$-evasive codes as it generalizes the scattered bound for $(h,h)$-evasive subspaces (see [9, Theorem 2.3] and [25, Corollary 4.4]), as detailed in the following remark.

\setlength{\parindent}{0em}
\begin{remark}
Following the settings in (2) of Remark 4.3, let $A\subseteq\mathbb{E}^{[m]}$ be a Gabidulin code with $\dim_{\mathbb{E}}(A)=p$ and $\mathbf{d}_p^{G}(A)=m$, $L\in\Mat_{m,p}(\mathbb{E})$ satisfy that $A=\{L\gamma^{T}\mid\gamma\in\mathbb{E}^{p}\}$, and let $U$ be the $\mathbb{F}$-subspace of $\mathbb{E}^{p}$ generated by all the rows of $L$. Suppose that $C$ is the Delsarte code associated to $A$ with respect of an ordered basis of $\mathbb{E}/\mathbb{F}$. If $h\in[0,m-2]$ and $U$ is $(h,h)$-evasive in $\mathbb{E}^{p}$, then $C$ is $(h,h)$-evasive (see Remark 4.3), and Corollary 4.3 implies that either $\dim_{\mathbb{F}}(U)\leqslant \frac{np}{h+1}$ or $A=\mathbb{E}^{[m]}$ holds true, which recovers [9, Theorem 2.3] and [25, Corollary 4.4].
\end{remark}

\section{Extended generalized poset weight for codes over quasi-Frobenius rings}
\setlength{\parindent}{2em}
The notion of extended generalized Hamming weight (EGHW) was first proposed and studied in the recent paper \cite{32} for linear codes over the ring $\mathbb{Z}_{p^{a}}$, where $p$ is a prime number and $a\in\mathbb{Z}^{+}$. In \cite{32}, the author has established two versions of Wei-type duality theorems which relate the EGHWs of a code with those of its dual code (see [32, Theorems 15 and 16]), extending the classical Wei-type duality theorem for GHWs; moreover, other important properties such as monotonicity and various bounds arising from Singleton bound have also been established (see \cite{32} for more details).

More recently in \cite{33}, the authors introduced the notion of extended generalized poset weight (EGPW) for linear codes over Galois rings, which further generalizes EGHW for Hamming metric codes over $\mathbb{Z}_{p^{a}}$. In \cite{33}, among others, the authors have established monotonicity of EGPWs and two versions of Wei-type duality theorems for EGPWs, and also explored the application of EGPWs in wire-tap channel of type II (see \cite{33} for more details).

In this section, we introduce and study EGPWs and profiles defined for codes over modules which has a composition series. In particular, for codes over quasi-Frobenius rings, we will establish a general Wei-type duality theorem which generalizes and unifies the two versions of Wei-type duality theorems established in both \cite{32} and \cite{33}. Our approach is different from those of \cite{32,33} in the sense that our results are established by using Galois connections, and counterpart results in \cite{32,33} were established by using the explicit structure of modules over $\mathbb{Z}_{p^{a}}$ or Galois rings.

Throughout the rest of this section, let $R$ and $S$ be rings, $M$ be an $(R,S)$-bimodule such that both as left $R$-module and right $S$-module, $M$ has a composition series. Moreover, let $\Omega$ be a non-empty finite set with $|\Omega|=n$. For any $A\subseteq R$ and $B\subseteq S$, define
\begin{equation}A^{\ddag}=\{u\in M\mid \forall~a\in A:au=0\},\end{equation}
\begin{equation}^{\ddag}B=\{u\in M \mid\forall~b\in B:ub=0\}.\end{equation}
For $s\in\mathbb{N}$, a tuple of subsets $(I_0,\dots,I_s)$ of $\Omega$, a tuple of left ideals $(A_0,\dots,A_s)$ of $R$, a tuple of right ideals $(B_0,\dots,B_s)$ of $S$ such that $I_i\subseteq I_{i-1}$, $A_i\subseteq A_{i-1}$, $B_i\subseteq B_{i-1}$ for all $i\in[1,s]$, define
\begin{equation}\rho_{(A_0,\dots,A_s)}(I_0,\dots,I_s)=\{\alpha\in M^{\Omega}\mid \forall~i\in[0,s]:\chi(A_i\alpha)\subseteq I_i\},\end{equation}
\begin{equation}\delta_{(B_0,\dots,B_s)}(I_0,\dots,I_s)=\{\alpha\in M^{\Omega}\mid\forall~i\in[0,s]:\chi(\alpha B_i)\subseteq I_i\}.\end{equation}

The proof of the following preliminary result is straightforward and hence omitted.

\begin{lemma}
Let $s\in\mathbb{N}$, $(I_0,\dots,I_s)$ be a tuple of subsets of $\Omega$, $(A_0,\dots,A_s)$ be a tuple of left ideals of $R$, $(B_0,\dots,B_s)$ be a tuple of right ideals of $S$ such that $I_i\subseteq I_{i-1}$, $A_i\subseteq A_{i-1}$, $B_i\subseteq B_{i-1}$ for all $i\in[1,s]$. Then, we have $\rho_{(A_0,\dots,A_s)}(I_0,\dots,I_s)=\prod_{t\in \Omega}E_t$ and $\delta_{(B_0,\dots,B_s)}(I_0,\dots,I_s)=\prod_{t\in \Omega}F_t$, where for any $t\in\Omega$,
\begin{eqnarray*}
\begin{split}
E_t=\begin{cases}
M,&t\in I_s;\\
{A_0}^{\ddag},&t\in \Omega-I_0;\\
{A_i}^{\ddag},&i\in[1,s],t\in I_{i-1}-I_i,
\end{cases}
\end{split}
\end{eqnarray*}
and
\begin{eqnarray*}
\begin{split}
F_t=\begin{cases}
M,&t\in I_s;\\
^{\ddag}{B_0},&t\in \Omega-I_0;\\
^{\ddag}{B_i},&i\in[1,s],t\in I_{i-1}-I_i.
\end{cases}
\end{split}
\end{eqnarray*}
Consequently, $\len_S(\rho_{(A_0,\dots,A_s)}(I_0,\dots,I_s))$ is equal to
\begin{eqnarray*}
\begin{split}
n\len_S({A_0}^{\ddag})+\left(\sum_{i=0}^{s-1}(\len_S({A_{i+1}}^{\ddag})-\len_S({A_i}^{\ddag}))|I_i|\right)+(\len_S(M)-\len_S({A_s}^{\ddag}))|I_s|,
\end{split}
\end{eqnarray*}
and $\len_R(\delta_{(B_0,\dots,B_s)}(I_0,\dots,I_s))$ is equal to
\begin{eqnarray*}
\begin{split}
n\len_R({^{\ddag}B_0})+\left(\sum_{i=0}^{s-1}(\len_R({^{\ddag}B_{i+1}})-\len_R({^{\ddag}B_i}))|I_i|\right)+(\len_R(M)-\len_R({^{\ddag}B_s}))|I_s|.
\end{split}
\end{eqnarray*}
\end{lemma}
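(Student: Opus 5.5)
The plan is to verify the product decomposition coordinatewise and then add up lengths using additivity of $\len$ over finite direct products.

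\textbf{Step 1: the product decomposition.} Fix $\alpha\in M^{\Omega}$ and $t\in\Omega$. For a left ideal $A$ of $R$, we have $t\notin\chi(A\alpha)$ if and only if $a\alpha_t=0$ for all $a\in A$, that is, if and only if $\alpha_t\in A^{\ddag}$. Hence the condition $\chi(A_i\alpha)\subseteq I_i$ for all $i\in[0,s]$ says: for every $i$ with $t\notin I_i$, we need $\alpha_t\in A_i^{\ddag}$.

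Since $I_s\subseteq\cdots\subseteq I_0$, the set $\{i\mid t\notin I_i\}$ is an upper interval of $[0,s]$. It is empty when $t\in I_s$, all of $[0,s]$ when $t\in\Omega-I_0$, and $[i,s]$ when $t\in I_{i-1}-I_i$. Because $A_i\subseteq A_{i-1}$, the annihilators satisfy $A_0^{\ddag}\subseteq A_1^{\ddag}\subseteq\cdots\subseteq A_s^{\ddag}$. So on the interval $[i,s]$ the intersection $\bigcap_{j\geqslant i}A_j^{\ddag}$ equals $A_i^{\ddag}$, and the coordinatewise constraint is exactly $\alpha_t\in E_t$. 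This gives $\rho_{(A_0,\dots,A_s)}(I_0,\dots,I_s)=\prod_{t\in\Omega}E_t$.

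The case of $\delta$ is symmetric, using right ideals and ${}^{\ddag}B$.

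\textbf{Step 2: the length formula.} Each $A^{\ddag}$ is a right $S$-submodule of $M$, since $a(ub)=(au)b$. Since $M_S$ has a composition series, each $A^{\ddag}$ has finite length. Length is additive on finite direct products, so
$$\len_S\Bigl(\prod_{t\in\Omega}E_t\Bigr)=\sum_{t\in\Omega}\len_S(E_t).$$
Grouping coordinates gives
$$(n-|I_0|)\,\len_S(A_0^{\ddag})+\sum_{i=1}^{s}\bigl(|I_{i-1}|-|I_i|\bigr)\len_S(A_i^{\ddag})+|I_s|\,\len_S(M).$$
Here $|I_{i-1}-I_i|=|I_{i-1}|-|I_i|$ by nestedness.

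It remains to rearrange this by Abel summation. Collect the coefficient of each $|I_i|$: it is $-\len_S(A_0^{\ddag})$ from the first term for $i=0$, plus $\len_S(A_{i+1}^{\ddag})$ and $-\len_S(A_i^{\ddag})$ from the middle sum. This yields the stated expression
$$n\len_S(A_0^{\ddag})+\sum_{i=0}^{s-1}\bigl(\len_S(A_{i+1}^{\ddag})-\len_S(A_i^{\ddag})\bigr)|I_i|+\bigl(\len_S(M)-\len_S(A_s^{\ddag})\bigr)|I_s|.$$
The $\len_R$ formula for $\delta$ follows identically.

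The only point needing care is the monotonicity reduction in Step 1. It is where the two nestedness hypotheses, $I_i\subseteq I_{i-1}$ and $A_i\subseteq A_{i-1}$, are both used. Everything else is bookkeeping.
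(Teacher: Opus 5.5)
Your proof is correct. It is exactly the routine check the paper has in mind when it omits the proof as straightforward: a coordinatewise reduction using the nested chains $I_s\subseteq\cdots\subseteq I_0$ and $A_0^{\ddag}\subseteq\cdots\subseteq A_s^{\ddag}$, then additivity of length over the finite product and Abel summation. (There is one wording slip: for $i=0$ the middle sum contributes only $+\len_S(A_1^{\ddag})$, and the $-\len_S(A_0^{\ddag})$ comes solely from the first term. Your final expression is nonetheless right.)
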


\subsection{Extended generalized poset weights and profiles for left $R$-submodules}

\setlength{\parindent}{2em}
Throughout this subsection, let $C\leqslant_{R}M^{\Omega}$ with $\len_R(C)=k$, $s\in\mathbb{N}$, $(B_0,\dots,B_s)$ be a tuple of right ideals of $S$ with $B_i\subseteq B_{i-1}$ for all $i\in[1,s]$, $\mathbf{P}=(\Omega,\preccurlyeq)$ be a poset, and let
\begin{equation}T=\{(I_0,\dots,I_s)\in\mathcal{I}(\mathbf{P})^{[0,s]}\mid\text{$I_i\subseteq I_{i-1}$ for all $i\in[1,s]$}\}.\end{equation}
We define $\varphi:[0,k]\longrightarrow[0,(s+1)n]$ as
\begin{equation}\varphi(r)=\min\left\{\sum_{i=0}^{s}|\langle\chi(DB_i)\rangle_{\mathbf{P}}|\mid\text{$D\leqslant_{R}C$, $\len_R(D)=r$}\right\},\end{equation}
and define $\psi:[0,(s+1)n]\longrightarrow[0,k]$ as
\begin{equation}\psi(l)=\max\left\{\len_R(C\cap\delta_{(B_0,\dots,B_s)}(I_0,\dots,I_s))\mid(I_0,\dots,I_s)\in T,\sum_{i=0}^{s}|I_i|=l\right\}.\end{equation}

\begin{definition}
For any $r\in[0,k]$, $\varphi(r)$ is referred to as the \textit{$r$-th extended generalized $\mathbf{P}$-weight of $C$ with respect to $(B_0,\dots,B_s)$}, and for any $l\in[0,(s+1)n]$, $\psi(l)$ is referred to as the \textit{$l$-th extended $\mathbf{P}$-profile of $C$ with respect to $(B_0,\dots,B_s)$}.
\end{definition}

\setlength{\parindent}{0em}
\begin{remark}
Definition 5.1 is largely inspired by [33, Definition 3] and [32, Section III, Paragraph 2]. If $s=0$ and $B_0=R$, then (5.6) recovers the generalized $\mathbf{P}$-weights defined for codes over fields and Galois rings (see [27, Section 1] and [33, Definition 2]). Moreover, suppose that $R$ is a chain ring, $\theta\in R$ and $\theta R$ is the unique maximal right ideal of $R$. Since a Galois ring is a chain ring, (5.6) recovers the extended generalized $\mathbf{P}$-weights defined for codes over Galois rings (see [33, Definition 3]) by setting $B_i=\theta^{i}R$ for all $i\in[0,s]$.
\end{remark}

\setlength{\parindent}{2em}
Now we give alternative characterizations of EGPWs and profiles, and show that they form a Galois connection.

\setlength{\parindent}{0em}
\begin{proposition}
For any $r\in[0,k]$, it holds that
\begin{equation}\varphi(r)=\min\left\{\sum_{i=0}^{s}|I_i|\mid(I_0,\dots,I_s)\in T,\len_R(C\cap\delta_{(B_0,\dots,B_s)}(I_0,\dots,I_s))\geqslant r\right\}.\end{equation}
\end{proposition}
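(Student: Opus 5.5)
Let $s_0$ denote the right hand side of (5.8). I would argue exactly as in the proof of Proposition 4.1, proving $s_0\leqslant\varphi(r)$ and $\varphi(r)\leqslant s_0$ separately, with length $\len_R$ replacing rank. Since $M$ has a composition series as a left $R$-module, so does $M^{\Omega}$, and $\len_R$ is additive and monotone on submodules. Hence for any $N\leqslant_R M^{\Omega}$ and any $r\leqslant\len_R(N)$ there is a submodule $D\leqslant_R N$ with $\len_R(D)=r$: refine a composition series of $N$ and take its $r$-th term.

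For $s_0\leqslant\varphi(r)$, I take $D\leqslant_R C$ with $\len_R(D)=r$ attaining the minimum in (5.6). Set $I_i=\langle\chi(DB_i)\rangle_{\mathbf{P}}$ for $i\in[0,s]$. These are ideals of $\mathbf{P}$. Since $B_i\subseteq B_{i-1}$, we get $DB_i\subseteq DB_{i-1}$, so $\chi(DB_i)\subseteq\chi(DB_{i-1})$ and hence $I_i\subseteq I_{i-1}$; thus $(I_0,\dots,I_s)\in T$. Every $\alpha\in D$ satisfies $\chi(\alpha B_i)\subseteq\chi(DB_i)\subseteq I_i$, so $D\subseteq C\cap\delta_{(B_0,\dots,B_s)}(I_0,\dots,I_s)$. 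Monotonicity of length then gives that this intersection has length at least $r$. Therefore $s_0\leqslant\sum_i|I_i|=\varphi(r)$.

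For $\varphi(r)\leqslant s_0$, I take $(I_0,\dots,I_s)\in T$ attaining $s_0$, so $N\triangleq C\cap\delta_{(B_0,\dots,B_s)}(I_0,\dots,I_s)$ has $\len_R(N)\geqslant r$. Note that $\delta$ is a left $R$-submodule: $\chi((a\alpha)B_i)\subseteq\chi(\alpha B_i)$ because $M$ is a bimodule. Choose $D\leqslant_R N$ with $\len_R(D)=r$. Then $\chi(DB_i)=\bigcup_{\alpha\in D}\chi(\alpha B_i)\subseteq I_i$, and since $I_i$ is an ideal, $\langle\chi(DB_i)\rangle_{\mathbf{P}}\subseteq I_i$. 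Summing over $i$ gives $\varphi(r)\leqslant\sum_i|I_i|=s_0$.

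The main point to check is that both sets are nonempty, so the min in (5.8) is well defined. Taking $I_i=\Omega$ for all $i$ gives $\delta=M^{\Omega}$, which contains $C$ of length $k\geqslant r$. The other subtle point, existence of a submodule of each prescribed length, is handled by the composition series assumption. Everything else is routine.
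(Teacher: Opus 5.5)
Your proof is correct and follows essentially the same two-inequality argument as the paper. For one direction you set $I_i=\langle\chi(DB_i)\rangle_{\mathbf{P}}$ for a length-$r$ subcode $D$; for the other you pick a length-$r$ submodule of $C\cap\delta_{(B_0,\dots,B_s)}(I_0,\dots,I_s)$. Your extra checks are details the paper leaves implicit: that the minimum ranges over a nonempty set (via $I_i=\Omega$), that submodules of every length exist (via a composition series), and that $\delta_{(B_0,\dots,B_s)}(I_0,\dots,I_s)$ is a left $R$-submodule.
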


\begin{proof}
Let $r\in[0,k]$, and let $q$ denote the right hand side of (5.8). For any $D\leqslant_{R}C$ with $\len_R(D)=r$, we infer that $(L_0,\dots,L_s)\triangleq(\langle\chi(DB_i)\rangle_{\mathbf{P}}\mid i\in[0,s])\in T$ and $D\subseteq C\cap\delta_{(B_0,\dots,B_s)}(L_0,\dots,L_s)$, which implies that $\len_R(C\cap\delta_{(B_0,\dots,B_s)}(L_0,\dots,L_s))\geqslant r$. This implies that $q\leqslant\varphi(r)$. Conversely, for any $(I_0,\dots,I_s)\in T$ with $\len_R(C\cap\delta_{(B_0,\dots,B_s)}(I_0,\dots,I_s))\geqslant r$, we can choose $D\leqslant_{R}C\cap\delta_{(B_0,\dots,B_s)}(I_0,\dots,I_s)$ with $\len_R(D)=r$; moreover, from $D\subseteq\delta_{(B_0,\dots,B_s)}(I_0,\dots,I_s)$, we deduce that $\langle\chi(DB_i)\rangle_{\mathbf{P}}\subseteq I_i$ for all $i\in[0,s]$, which implies that $\sum_{i=0}^{s}|\langle\chi(DB_i)\rangle_{\mathbf{P}}|\leqslant\sum_{i=0}^{s}|I_i|$. It follows that $\varphi(r)\leqslant q$, which further establishes (5.8), as desired.
\end{proof}

\setlength{\parindent}{0em}
\begin{proposition}
For any $l\in[0,(s+1)n]$, it holds that
\begin{equation}\psi(l)=\max\left\{\len_R(C\cap\delta_{(B_0,\dots,B_s)}(I_0,\dots,I_s))\mid(I_0,\dots,I_s)\in T,\sum_{i=0}^{s}|I_i|\leqslant l\right\}.\end{equation}
\end{proposition}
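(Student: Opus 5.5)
The plan is to follow the proof of Proposition 4.2: show both inequalities between $\psi(l)$ and the right-hand side of (5.9), which we denote by $q$. The inequality $\psi(l)\leqslant q$ is immediate, since every tuple with $\sum_{i=0}^{s}|I_i|=l$ also satisfies $\sum_{i=0}^{s}|I_i|\leqslant l$. So the whole content lies in the reverse inequality $q\leqslant\psi(l)$.

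For this, fix $(I_0,\dots,I_s)\in T$ with $\sum_{i=0}^{s}|I_i|\leqslant l$. We will enlarge it to some $(J_0,\dots,J_s)\in T$ with $I_i\subseteq J_i$ for all $i$ and $\sum_{i=0}^{s}|J_i|=l$. Since $\delta_{(B_0,\dots,B_s)}$ is monotone in each coordinate, as is clear from (5.4), this gives $C\cap\delta_{(B_0,\dots,B_s)}(I_0,\dots,I_s)\subseteq C\cap\delta_{(B_0,\dots,B_s)}(J_0,\dots,J_s)$. Monotonicity of length for submodules (finite length modules) then yields $\len_R(C\cap\delta(I))\leqslant\len_R(C\cap\delta(J))\leqslant\psi(l)$, and hence $q\leqslant\psi(l)$.

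The only real step is the enlargement, which I would do by induction on $l-\sum_{i=0}^{s}|I_i|$. It suffices to show that whenever $(I_0,\dots,I_s)\in T$ has $\sum_{i=0}^{s}|I_i|<(s+1)n$, one element can be added to a single $I_j$ while staying in $T$. Choose $j$ to be the smallest index with $I_j\neq\Omega$. If $j=0$, then $I_0\neq\Omega$. If $j\geqslant1$, then $I_{j-1}=\Omega$. In either case, adding an element $v$ to $I_j$ keeps $I_j\subseteq I_{j-1}$ (or there is no constraint when $j=0$), and the other inclusions $I_{j+1}\subseteq I_j\cup\{v\}$ still hold.

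It remains to keep $I_j\cup\{v\}$ an ideal of $\mathbf{P}$. For this, take $v$ to be a minimal element of $\Omega-I_j$ with respect to $\preccurlyeq$, which exists because $\Omega$ is finite. Then every $u\preccurlyeq v$ with $u\neq v$ lies in $I_j$, so $I_j\cup\{v\}\in\mathcal{I}(\mathbf{P})$. Each such step raises $\sum_{i=0}^{s}|I_i|$ by exactly one, so we can reach any $l\in[\sum_{i=0}^{s}|I_i|,(s+1)n]$.

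I expect this chain-preserving, ideal-preserving extension to be the main (though mild) obstacle. The choice of the smallest non-full index $j$ together with a minimal element of the complement should handle it.
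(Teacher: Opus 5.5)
Your proposal is correct and follows the same route as the paper: the inequality $\psi(l)\leqslant q$ is trivial, and for the reverse you enlarge a tuple in $T$ with $\sum_{i=0}^{s}|I_i|\leqslant l$ to one with sum exactly $l$, then use monotonicity of $\delta_{(B_0,\dots,B_s)}$ and of length. The only difference is that the paper cites [33, Lemma 2] for the enlargement step, while you prove it directly (add a minimal element of $\Omega-I_j$ to the first non-full $I_j$), and that argument is valid.
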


\begin{proof}
Let $l\in[0,(s+1)n]$, and let $q$ denote the right hand side of (5.9). Apparently, we have $\psi(l)\leqslant q$. Conversely, for any $(L_0,\dots,L_s)\in T$ with $\sum_{i=0}^{s}|L_i|\leqslant l$, by [33, Lemma 2], we can choose $(I_0,\dots,I_s)\in T$ such that $\sum_{i=0}^{s}|I_i|=l$ and $L_i\subseteq I_i$ for all $i\in[0,s]$; moreover, it follows from $\delta_{(B_0,\dots,B_s)}(L_0,\dots,L_s)\subseteq\delta_{(B_0,\dots,B_s)}(I_0,\dots,I_s)$ that $\len_R(C\cap\delta_{(B_0,\dots,B_s)}(L_0,\dots,L_s))\leqslant\len_R(C\cap\delta_{(B_0,\dots,B_s)}(I_0,\dots,I_s))\leqslant\psi(l)$, which further implies that $q\leqslant\psi(l)$, as desired.
\end{proof}

\setlength{\parindent}{2em}
Now we are ready to prove the main result of this subsection.

\setlength{\parindent}{0em}
\begin{theorem}
$(\varphi,\psi)$ is a Galois connection between $[0,k]$ and $[0,(s+1)n]$. Moreover, for
$$\mbox{$v\triangleq\max(\{\len_R({^{\ddag}B_{i+1}})-\len_R({^{\ddag}B_i})\mid i\in[0,s-1]\}\cup\{\len_R(M)-\len_R({^{\ddag}B_s})\})$},$$
we have $\psi(l)-\psi(l-1)\leqslant v$ for all $l\in[1,(s+1)n]$, and $\varphi(r)+1\leqslant\max\{\varphi(r+v),1\}$ for all $r\in[0,k-v]$.
\end{theorem}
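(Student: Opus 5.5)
The plan follows the pattern of Theorem 4.1. For the Galois connection, take $X=T$ in Lemma 2.1, with $f(I_0,\dots,I_s)=\len_R(C\cap\delta_{(B_0,\dots,B_s)}(I_0,\dots,I_s))$ and $g(I_0,\dots,I_s)=\sum_{i=0}^{s}|I_i|$. Propositions 5.1 and 5.2 then turn $\varphi$ and $\psi$ into exactly the min/max formulas of Lemma 2.1. Two hypotheses of that lemma need checking.

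\begin{itemize}
\item $\max([0,k])=k$ is attained: take $I_i=\Omega$ for all $i$. Then $\delta=M^{\Omega}$, so $f=\len_R(C)=k$.
\item $\min([0,(s+1)n])=0$ is attained: take $I_i=\emptyset$, which gives $g=0$.
\end{itemize}

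I also need $\ran(\varphi)\subseteq[0,(s+1)n]$ and $\ran(\psi)\subseteq[0,k]$, which is clear because $\len_R$ of a submodule of $C$ is at most $k$.

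For the increment bound, fix $l\in[1,(s+1)n]$. Choose $(I_0,\dots,I_s)\in T$ with $\sum_i|I_i|=l$ attaining $\psi(l)$. The key step is to remove one element so as to obtain $(I'_0,\dots,I'_s)\in T$ with total size $l-1$ and $I'_i\subseteq I_i$.

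\begin{itemize}
\item Let $j$ be the largest index with $I_j\neq\emptyset$; this exists since $l\geqslant1$.
\item Pick $x$ maximal in $I_j$ with respect to $\preccurlyeq$.
\item Set $I'_j=I_j-\{x\}$ and $I'_i=I_i$ for $i\neq j$.
\end{itemize}

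$I'_j$ is again an ideal because $x$ is maximal. The chain condition $I'_{j+1}=\emptyset\subseteq I'_j\subseteq I_j\subseteq I_{j-1}$ holds. So $(I'_0,\dots,I'_s)\in T$.

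Write $\Delta=\delta_{(B_0,\dots,B_s)}(I_0,\dots,I_s)$ and $\Delta'=\delta_{(B_0,\dots,B_s)}(I'_0,\dots,I'_s)$. By Proposition 5.2, $\psi(l-1)\geqslant\len_R(C\cap\Delta')$. By additivity of length on composition series,
\[
\psi(l)-\psi(l-1)\leqslant\len_R\bigl((C\cap\Delta)/(C\cap\Delta')\bigr).
\]
Since $(C\cap\Delta)/(C\cap\Delta')$ embeds into $\Delta/\Delta'$, the right-hand side is at most $\len_R(\Delta/\Delta')=\len_R(\Delta)-\len_R(\Delta')$.

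To compute this difference, use the $R$-side formula of Lemma 5.1, in which the coefficient of $|I_i|$ is an increment of left lengths. Removing $x$ from $I_j$ changes only the coefficient term for index $j$. If $j<s$, the difference is $\len_R({^{\ddag}B_{j+1}})-\len_R({^{\ddag}B_j})$. If $j=s$, it is $\len_R(M)-\len_R({^{\ddag}B_s})$. Directly, only coordinate $x$ changes: its factor goes from ${^{\ddag}B_{j+1}}$ (or $M$ when $j=s$) to ${^{\ddag}B_j}$. Either way the difference is at most $v$, which gives $\psi(l)-\psi(l-1)\leqslant v$.

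The final inequality $\varphi(r)+1\leqslant\max\{\varphi(r+v),1\}$ for $r\in[0,k-v]$ then follows from the equivalence in Lemma 2.2, applied with $w=v$ and $m=(s+1)n$. This needs $v\in\mathbb{N}$: each ${^{\ddag}B_i}\subseteq{^{\ddag}B_{i+1}}$ since $B_{i+1}\subseteq B_i$, so every increment is nonnegative. If $v=0$, the statement is vacuous or trivially consistent.

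I expect the main obstacle to be the single-element removal step. It must respect both the ideal property and the nesting in $T$, and it must change exactly one coordinate factor by the prescribed module pair. Taking the last nonempty level and a maximal element is what I would try first, since it handles both constraints simultaneously.
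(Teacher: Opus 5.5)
Your proposal is correct and follows essentially the same route as the paper. It uses Lemma 2.1 together with Propositions 5.1 and 5.2 for the Galois connection, a one-element-smaller tuple in $T$ combined with the length formula of Lemma 5.1 for the increment bound, and Lemma 2.2 for the final inequality; the only difference is that you construct the smaller tuple explicitly by deleting a maximal element of the last nonempty $I_j$, whereas the paper just cites [33, Lemma 2.2] for its existence.
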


\begin{proof}
The first assertion follows from Lemma 2.1 and Propositions 5.1 and 5.2. Next, consider $l\in[1,(s+1)n]$. Then, we can choose $(I_0,\dots,I_s)\in T$ such that $\sum_{i=0}^{s}|I_i|=l$ and $\len_R(C\cap\delta_{(B_0,\dots,B_s)}(I_0,\dots,I_s))=\psi(l)$; moreover, by [33, Lemma 2.2], we can choose $(L_0,\dots,L_s)\in T$ such that $\sum_{i=0}^{s}|L_i|=l-1$ and $L_i\subseteq I_i$ for all $i\in[0,s]$. Noticing that there uniquely exists $p\in[0,s]$ such that $|L_p|=|I_p|-1$ and $L_i=I_i$ for all $i\in[0,s]-\{p\}$, from Lemma 5.1, we deduce that
\begin{eqnarray*}
\begin{split}
\psi(l)-\psi(l-1)&\leqslant\len_R(C\cap\delta_{(B_0,\dots,B_s)}(I_0,\dots,I_s))-\len_R(C\cap\delta_{(B_0,\dots,B_s)}(L_0,\dots,L_s))\\
&\leqslant\len_R(\delta_{(B_0,\dots,B_s)}(I_0,\dots,I_s))-\len_R(\delta_{(B_0,\dots,B_s)}(L_0,\dots,L_s))\\
&\leqslant v,
\end{split}
\end{eqnarray*}
as desired. Finally, the rest follows from Lemma 2.2.
\end{proof}

\setlength{\parindent}{0em}
\begin{remark}
Theorem 5.1 recovers monotonicity established for EGHWs and EGPWs (see [32, Theorem 4] and [33, Theorem 2]) when $M=R=S$ and $\len_{R}(B_i)=\len(R)-i$ for all $i\in[0,s]$, where $R$ is set to be either $\mathbb{Z}_{p^{a}}$ or a Galois ring, respectively.
\end{remark}

\subsection{Extended generalized poset weights and profiles for right $S$-submodules}

\setlength{\parindent}{2em}
In this short subsection, we define EGPWs and profiles for right $S$-submodules of $M^{\Omega}$ in a parallel fashion. More precisely, let $E\leqslant_{S}M^{\Omega}$ with $\len_S(E)=k$, $s\in\mathbb{N}$, $(A_0,\dots,A_s)$ be a tuple of left ideals of $R$ with $A_i\subseteq A_{i-1}$ for all $i\in[1,s]$, $\mathbf{P}=(\Omega,\preccurlyeq)$ be a poset, and define $T$ as in (5.5). Moreover, define $\varepsilon:[0,k]\longrightarrow[0,(s+1)n]$ and $\varpi:[0,(s+1)n]\longrightarrow[0,k]$ as
$$\varepsilon(r)=\min\left\{\sum_{i=0}^{s}|\langle\chi(A_iD)\rangle_{\mathbf{P}}|\mid\text{$D\leqslant_{S}E$, $\len_S(D)=r$}\right\},$$
$$\varpi(l)=\max\left\{\len_S(E\cap\rho_{(A_0,\dots,A_s)}(I_0,\dots,I_s))\mid(I_0,\dots,I_s)\in T,\sum_{i=0}^{s}|I_i|=l\right\}.$$
For any $r\in[0,k]$, $\varepsilon(r)$ is referred to as the \textit{$r$-th extended generalized $\mathbf{P}$-weight of $E$ with respect to $(A_0,\dots,A_s)$}, and for any $l\in[0,(s+1)n]$, $\varpi(l)$ is referred to as the \textit{$l$-th extended $\mathbf{P}$-profile of $E$ with respect to $(A_0,\dots,A_s)$}. We note that all the results established in Section 5.1 hold true for $(\varepsilon,\varpi)$ in a Parallel fashion.

\subsection{Wei-type duality theorem}

\setlength{\parindent}{2em}
In this subsection, we establish a Wei-type duality theorem that relates EGPWs and profiles of a code with those of its dual code. From now on, suppose that $R=S$ is a quasi-Frobenius ring, $m\in\mathbb{Z}^{+}$, and $M=R^{[m]}$, the set of all the column vectors over $R$ of length $m$. It is known that the length of a composition series of $R$ as a left $R$-module is equal to the length of a composition series of $R$ as a right $R$-module (see [10, Theorem 58.8]), which we simply denote by $\len(R)$. For any $A\subseteq R$, as in Section 2.2, let $A^{\dag}$ and $^{\dag}A$ denote the right annihilator and left annihilator of $A$ in $R$, respectively. Define the inner product $\langle~,~\rangle:M^{\Omega}\times M^{\Omega}\longrightarrow R$ as
\begin{equation}\langle\alpha,\beta\rangle=\sum_{i\in\Omega}\sum_{j=1}^{m}\alpha_i(j)\beta_i(j),\end{equation}
where $\alpha_i(j)$ denotes the $j$-th entry of $\alpha_i\in R^{[m]}$. For any $D\subseteq M^{\Omega}$, let $D^{\bot}$ and $^{\bot}D$ denote the right dual and left dual of $D$ with respect to $\langle~,~\rangle$, respectively.

\begin{lemma}
Let $s\in\mathbb{N}$, $(B_0,\dots,B_s)$ be a tuple of right ideals of $R$ with $B_0=R$ and $B_i\subseteq B_{i-1}$ for all $i\in[1,s]$, and let $(I_0,\dots,I_s)$ be a tuple of subsets of $\Omega$  with $I_i\subseteq I_{i-1}$ for all $i\in[1,s]$. Moreover, let $(A_0,\dots,A_s)$ be the tuple of left ideals of $R$ such that $A_0=R$ and $A_i={^{\dag}B_{s+1-i}}$ for all $i\in[1,s]$, and define $(J_0,\dots,J_s)$ as $J_i=\Omega-I_{s-i}$ for all $i\in[0,s]$. Then, we have $J_i\subseteq J_{i-1}$, $A_i\subseteq A_{i-1}$ for all $i\in[1,s]$, and it holds that $\delta_{(B_0,\dots,B_s)}(I_0,\dots,I_s)={^{\bot}\rho_{(A_0,\dots,A_s)}(J_0,\dots,J_s)}$.
\end{lemma}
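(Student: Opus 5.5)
The monotonicity claims are immediate. Since $I_i\subseteq I_{i-1}$, complementing gives $\Omega-I_{s-i}\supseteq\Omega-I_{s-i+1}$, that is, $J_{i}\subseteq J_{i-1}$. For $i\geqslant2$, $B_{s+1-i}\supseteq B_{s+2-i}$ gives ${^{\dag}B_{s+1-i}}\subseteq{^{\dag}B_{s+2-i}}$, i.e. $A_i\subseteq A_{i-1}$; and $A_1\subseteq R=A_0$ trivially.

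For the main identity I will work coordinatewise via Lemma 5.1. With $M=R^{[m]}$, ${^{\ddag}B}=\{u\in R^{[m]}\mid uB=0\}=({^{\dag}B})^{[m]}$ and $A^{\ddag}=(A^{\dag})^{[m]}$. So Lemma 5.1 gives $\delta_{(B_0,\dots,B_s)}(I_0,\dots,I_s)=\prod_{t}F_t$ with $F_t=(X_t)^{[m]}$, where $X_t=R$ for $t\in I_s$, $X_t={^{\dag}B_0}=\{0\}$ for $t\in\Omega-I_0$, and $X_t={^{\dag}B_i}$ for $t\in I_{i-1}-I_i$. Similarly $\rho_{(A_0,\dots,A_s)}(J_0,\dots,J_s)=\prod_t (Y_t)^{[m]}$ with $Y_t=R$ on $J_s=\Omega-I_0$, $Y_t=A_0^{\dag}=\{0\}$ on $\Omega-J_0=I_s$, and $Y_t=A_i^{\dag}$ on $J_{i-1}-J_i=I_{s-i}-I_{s-i+1}$.

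Since the inner product (5.10) is a sum of coordinatewise products, the left dual of a product $\prod_t Y_t^{[m]}$ of right ideal powers is $\prod_t ({^{\dag}Y_t})^{[m]}$. To see this, test against vectors supported in a single entry $(t,j)$: this forces $\alpha_t(j)\in{^{\dag}Y_t}$, and conversely such $\alpha$ annihilate everything. It therefore suffices to match $X_t={^{\dag}Y_t}$ region by region. On $I_s$ we need $R={^{\dag}\{0\}}$, and on $\Omega-I_0$ we need $\{0\}={^{\dag}R}$; both are trivial. On $I_{s-i}-I_{s-i+1}$ with $i\in[1,s]$, writing $j=s-i+1\in[1,s]$, we have $X_t={^{\dag}B_j}$ and $Y_t=A_i^{\dag}=({^{\dag}B_{s+1-i}})^{\dag}=({^{\dag}B_j})^{\dag}$. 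So we need ${^{\dag}(({^{\dag}B_j})^{\dag})}={^{\dag}B_j}$.

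This last equality is the only point that uses structure, and it holds because $R$ is quasi-Frobenius: ${^{\dag}B_j}$ is a left ideal $A$, and the QF definition gives ${^{\dag}(A^{\dag})}=A$. In fact this identity (${^{\dag}}$ of ${^{\dag}}$-then-${\dag}$ equals the original) holds in any ring by the usual Galois-closure argument, so the QF hypothesis is not even essential here. The main care needed is the bookkeeping of the index reversal between the regions $I_{i-1}-I_i$ and $J_{i-1}-J_i$, which is exactly what the choice $A_i={^{\dag}B_{s+1-i}}$ is designed for.
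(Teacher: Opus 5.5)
Your proof is correct and takes essentially the paper's route: both sides are computed coordinatewise from Lemma 5.1 and matched on the regions $I_s$, $\Omega-I_0$ and $I_{j-1}-I_j$. The only difference is that the paper first uses the quasi-Frobenius property to rewrite ${A_i}^{\dag}=B_{s+1-i}$, whereas you apply ${^{\dag}}$ directly and use ${^{\dag}}(({^{\dag}B_j})^{\dag})={^{\dag}B_j}$. That identity holds in every ring, so your remark that QF is not needed for this lemma is right.

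One slip to fix: complementing $I_{s-i+1}\subseteq I_{s-i}$ gives $\Omega-I_{s-i}\subseteq\Omega-I_{s-i+1}$, not $\supseteq$ as you wrote. This is what yields $J_i\subseteq J_{i-1}$.
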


\begin{proof}
It suffices to prove the last assertion. Since $R$ is quasi-Frobenius, we have ${A_i}^{\dag}=B_{s+1-i}$ for all $i\in[1,s]$. Hence from Lemma 5.1, we deduce that ${^{\bot}\rho_{(A_0,\dots,A_s)}(J_0,\dots,J_s)}=\prod_{t\in \Omega}G_t=\delta_{(B_0,\dots,B_s)}(I_0,\dots,I_s)$, where $G_t=\{0\}$ for all $t\in \Omega-I_0$, $G_t=M$ for all $t\in I_s$, and $G_t=(^{\dag}B_{i})^{[m]}$ for all $i\in[1,s]$ and $t\in I_{i-1}-I_{i}$, as desired.
\end{proof}

\setlength{\parindent}{2em}
Now we proceed to state and prove the Wei-type duality theorem. First, we fix some notations. Let $s\in\mathbb{N}$ and $w\in\mathbb{Z}^{+}$ satisfy that
\begin{equation}\len(R)=w(s+1),\end{equation}
$(B_0,\dots,B_s)$ be a tuple of right ideals of $R$ such that $B_i\subseteq B_{i-1}$ for all $i\in[1,s]$ and
\begin{equation}\forall~i\in[0,s]:\len_R(B_i)=(s+1-i)w,\end{equation}
and let $(A_0,\dots,A_s)$ be the tuple of left ideals of $R$ such that $A_0=R$ and
\begin{equation}\forall~i\in[1,s]:A_i={^{\dag}B_{s+1-i}}.\end{equation}
Let $\mathbf{P}=(\Omega,\preccurlyeq_{\mathbf{P}})$ be a poset, $\mathbf{\overline{P}}=(\Omega,\preccurlyeq_{\mathbf{\overline{P}}})$ be the dual poset of $\mathbf{P}$, and let
$$T=\{(I_0,\dots,I_s)\in\mathcal{I}(\mathbf{P})^{[0,s]}\mid\text{$I_i\subseteq I_{i-1}$ for all $i\in[1,s]$}\},$$
$$Y=\{(J_0,\dots,J_s)\in\mathcal{I}(\mathbf{\overline{P}})^{[0,s]}\mid\text{$J_i\subseteq J_{i-1}$ for all $i\in[1,s]$}\}.$$
Let $C$ be a left $R$-submodule of $M^{\Omega}$ with $\len_R(C)=k$, and let $\varphi:[0,k]\longrightarrow[0,(s+1)n]$, $\psi:[0,(s+1)n]\longrightarrow[0,k]$ such that $\varphi(r)$ is equal to the $r$-th extended generalized $\mathbf{P}$-weight of $C$ with respect to $(B_0,\dots,B_s)$, and $\psi(l)$ is equal to the $l$-th extended $\mathbf{P}$-profile of $C$ with respect to $(B_0,\dots,B_s)$. Noticing that $C^{\bot}$ is a right $R$-submodule of $M^{\Omega}$ with $\len_R(C^{\bot})=mn\len(R)-k$, we let $\tau:[0,mn\len(R)-k]\longrightarrow[0,(s+1)n]$ and $\eta:[0,(s+1)n]\longrightarrow[0,mn\len(R)-k]$ such that $\tau(c)$ is equal to the $c$-th extended generalized $\mathbf{\overline{P}}$-weight of $C^{\bot}$ with respect to $(A_0,\dots,A_s)$, and $\eta(l)$ is equal to the $l$-th extended $\mathbf{\overline{P}}$-profile of $C^{\bot}$ with respect to $(A_0,\dots,A_s)$.

\setlength{\parindent}{0em}
\begin{theorem}
{\bf{(1)}}\,\,$(\varphi,\psi)$ is a Galois connection between $[0,k]$ and $[0,(s+1)n]$ satisfying that $\psi(0)=0$ and $\psi(l)-\psi(l-1)\leqslant mw$ for all $l\in[1,(s+1)n]$. Moreover, $(\tau,\eta)$ is a Galois connection between $[0,mn\len(R)-k]$ and $[0,(s+1)n]$.

{\bf{(2)}}\,\,For any $l\in[0,(s+1)n]$, it holds that
$$\eta(l)=\psi((s+1)n-l)+mwl-k.$$
{\bf{(3)}}\,\,For any $\gamma\in\mathbb{Z}$, define
$$\mathcal{G}_\gamma=\{\varphi(u)\mid u\in[1,k],u\equiv\gamma+k~(\bmod~mw)\},$$
$$\mathcal{H}_\gamma=\{(s+1)n+1-\tau(v)\mid v\in[1,mn\len(R)-k],v\equiv\gamma~(\bmod~mw)\}.$$
Then, for any $\gamma\in\mathbb{Z}$, it holds that $\mathcal{G}_\gamma\cap\mathcal{H}_\gamma=\emptyset$, $\mathcal{G}_\gamma\cup\mathcal{H}_\gamma=[1,(s+1)n]$.
\end{theorem}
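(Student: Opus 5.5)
The plan has three parts. Part (1) follows from Theorem 5.1, and part (3) follows from Theorem 3.12 once (1) and (2) are available. The real work is the duality identity (2).

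\emph{Part (1).} By Theorem 5.1, $(\varphi,\psi)$ is a Galois connection, and $\psi(0)=0$ because $T$ contains only $(\emptyset,\dots,\emptyset)$ with total size $0$. Since $R$ is quasi-Frobenius, $\len_R({^{\ddag}B_i})=m\len_R({^{\dag}B_i})=m(\len(R)-\len_R(B_i))=miw$ by (5.12). Hence each difference in the definition of $v$ equals $mw$, including the last one $m\len(R)-m s w=mw$, so $v=mw$. The statement about $(\tau,\eta)$ is the right-module parallel of Theorem 5.1 from Section 5.2, applied to $C^{\bot}$ with poset $\overline{\mathbf{P}}$ and ideals $(A_0,\dots,A_s)$. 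For this we need $B_0=R$, which holds since $\len_R(B_0)=\len(R)$, and also that $A_i\subseteq A_{i-1}$.

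\emph{Part (2).} Fix $l$. The map $(I_0,\dots,I_s)\mapsto(J_0,\dots,J_s)$ with $J_i=\Omega-I_{s-i}$ is a bijection from $T$ onto $Y$, since complements of $\mathbf{P}$-ideals are exactly the $\overline{\mathbf{P}}$-ideals and the nesting order reverses. It sends total size $\sum|I_i|=(s+1)n-l$ to $\sum|J_i|=l$. By Lemma 5.2, $\delta_{(B)}(I)={^{\bot}\rho_{(A)}(J)}$, and hence by Lemma 2.5(1), $\rho_{(A)}(J)=\delta_{(B)}(I)^{\bot}$. Applying the last identity of Lemma 2.5(1) to the non-degenerate form (5.10) on $M^{\Omega}$, with $A=C$ and $B=\rho_{(A)}(J)$, gives
$$\len_R(C^{\bot}\cap\rho_{(A)}(J))=\len_R(C\cap\delta_{(B)}(I))-k+\len_R(\rho_{(A)}(J)).$$
It remains to show $\len_R(\rho_{(A)}(J))=mwl$. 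By Lemma 2.5(1), $\len(\rho)=mn\len(R)-\len(\delta_{(B)}(I))$. Lemma 5.1 with all increments equal to $mw$ gives $\len(\delta_{(B)}(I))=mw\sum|I_i|=mw((s+1)n-l)$, where $\len({^{\ddag}B_0})=0$ since $B_0=R$. Therefore $\len(\rho)=mn(s+1)w-mw((s+1)n-l)=mwl$. Taking the maximum over the bijection yields $\eta(l)=\psi((s+1)n-l)+mwl-k$.

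\emph{Part (3).} By (1) and (2), the data fit the standing setting of Section 3 with $w$ replaced by $mw$ and $m$ replaced by $(s+1)n$. Note that $mw\cdot(s+1)n-k=mn\len(R)-k$. Since $\eta$ determines its lower adjoint uniquely (Remark 2.1 / Lemma 2.3(2)), $\tau$ coincides with the $\tau$ of Section 3, and Theorem 3.12 gives the partition claim.

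The main obstacle is the bookkeeping in (2). One must verify that Lemma 5.1 gives uniform increments, that the complement bijection matches $T$ with $Y$ including the index reversal, and that the length identity of Lemma 2.5(1) applies to the pair $C$ and $\rho$ with the correct sides (left versus right duals).
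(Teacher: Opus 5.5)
Your proposal is correct and follows the paper's proof essentially step for step: Theorem 5.1 with $v=mw$ for (1), the complement bijection $T\to Y$ together with Lemma 5.2 and Lemma 2.5(1) for (2), and Theorem 3.12 for (3). The only deviations are that you obtain $\len_R(\rho_{(A_0,\dots,A_s)}(J_0,\dots,J_s))=mwl$ from $\len_R(\delta_{(B_0,\dots,B_s)}(I_0,\dots,I_s))$ via $\rho=\delta^{\bot}$ (the paper reads it off Lemma 5.1 using $\len_R({A_i}^{\ddag})=mwi$), and that $\psi(0)=0$ needs $\delta_{(B_0,\dots,B_s)}(\emptyset,\dots,\emptyset)=({^{\ddag}B_0})^{\Omega}=\{0\}$, i.e.\ $B_0=R$, not merely that $(\emptyset,\dots,\emptyset)$ is the only tuple of total size $0$.
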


\begin{proof}
{\bf{(1)}}\,\,From $B_0=R$ and Lemma 5.1, we deduce that $\delta_{(B_0,\dots,B_s)}(\emptyset,\dots,\emptyset)=\{0\}$, which implies that $\psi(0)=0$, as desired. Next, for any $i\in[0,s]$, by Lemma 2.5, we have $\len_{R}({^{\dag}B_i})=\len(R)-\len_{R}(B_i)=wi$ and hence $\len_R({^{\ddag}B_i})=\len_R(({^{\dag}B_i})^{[m]})=mwi$. Moreover, by (5.11), we have $\len_R(M)=mw(s+1)$. Hence the desired result follows from Theorem 5.1.

{\bf{(2)}}\,\,Let $l\in[0,(s+1)n]$. Since $\mathcal{I}(\mathbf{\overline{P}})=\{\Omega-K\mid K\in\mathcal{I}(\mathbf{P})\}$ (see [20, Lemma 1.2]), $(I_0,\dots,I_s)\mapsto(\Omega-I_{s-i}\mid i\in[0,s])$ induces a bijection from $\{(I_0,\dots,I_s)\in T\mid\sum_{i=0}^{s}|I_i|=(s+1)n-l\}$ to $\{(J_0,\dots,J_s)\in Y\mid\sum_{i=0}^{s}|J_i|=l\}$. Now let $(I_0,\dots,I_s)\in T$ with $\sum_{i=0}^{s}|I_i|=(s+1)n-l$, and define $(J_0,\dots,J_s)\in Y$ as $J_i=\Omega-I_{s-i}$. For any $i\in[1,s]$, by (5.13), we have ${A_i}^{\dag}=B_{s+1-i}$, which, together with (5.12), implies that $\len_{R}({A_i}^{\ddag})=\len_{R}({B_{s+1-i}}^{[m]})=mwi$. Hence from Lemma 5.1, we deduce that $\len_{R}(\rho_{(A_0,\dots,A_s)}(J_0,\dots,J_s))=mwl$, which, together with Lemma 5.2 and (1) of Lemma 2.5, further implies that
$$\len_{R}(C\cap\delta_{(B_0,\dots,B_s)}(I_0,\dots,I_s))-\len_{R}(C^{\bot}\cap\rho_{(A_0,\dots,A_s)}(J_0,\dots,J_s))=k-mwl.$$
From the above discussion, we deduce that
\begin{eqnarray*}
\begin{split}
\eta(l)&=\max\left\{\len_{R}(C^{\bot}\cap\rho_{(A_0,\dots,A_s)}(J_0,\dots,J_s))\mid(J_0,\dots,J_s)\in Y,\sum_{i=0}^{s}|J_i|=l\right\}\\
&=\max\left\{\len_{R}(C\cap\delta_{(B_0,\dots,B_s)}(I_0,\dots,I_s))+mwl-k\mid(I_0,\dots,I_s)\in T,\sum_{i=0}^{s}|I_i|=(s+1)n-l\right\}\\
&=\psi((s+1)n-l)+mwl-k,
\end{split}
\end{eqnarray*}
as desired.

{\bf{(3)}}\,\,This follows from (1), (2) and Theorem 3.12.
\end{proof}

\setlength{\parindent}{0em}
\begin{remark}
If $s=0$ and $M=R$, then Theorem 5.3 recovers the Wei-type duality theorem for GHWs and GPWs; moreover, if $w=1$ and $M=R$, then Theorem 5.3 recovers the Wei-type duality theorem for EGHWs and EGPWs (see [32, Theorems 15, 16] and [33, Theorems 8, 9]).
\end{remark}

\end{document}